\newcommand{\xmark}{\textcolor{red}{\ding{55}}}%
\newcommand{\qmark}{\textcolor{blue}{\fontfamily{cyklop}\selectfont \textit{?}}}
\newcommand{\tbf}{\textbf}
\newcommand{\nabbi}{\nabla_{\theta} \mathcal{L}(B_i; \theta_i)}
\newcommand{\del}{\partial}
\newcommand{\diag}{\mbox{diag}}
\newcommand{\rank}{\mbox{rank}}
\newcommand{\bdelta}{\boldsymbol{\delta}}
\newcommand{\vvec}{\mbox{vec}}
\newtheorem{thm}{Proposition}
\newtheorem{lem}{Lemma}
\newtheorem{cor}{Corollary}
\newtheorem*{thm*}{Proposition}
\newtheorem{defn}{Definition}
\title{Towards a Re-evaluation of Data Forging Attacks in Practice}
\begin{document}

% \author{Mohamed Suliman\textsuperscript{1,2}  Anisa Halimi\textsuperscript{1} Swanand Kadhe\textsuperscript{1}  Nathalie Baracaldo\textsuperscript{1} Douglas Leith\textsuperscript{2}
%  \\
% \textsuperscript{1} IBM Research \ \textsuperscript{2} Trinity College Dublin, The University of Dublin \\

% }

\author{
{\rm Mohamed Suliman\textsuperscript{1,2}, Anisa Halimi\textsuperscript{1}, Swanand Ravindra Kadhe\textsuperscript{1}, Nathalie Baracaldo\textsuperscript{1}, Douglas Leith\textsuperscript{2}}\\
\textsuperscript{1} IBM Research \ \textsuperscript{2} Trinity College Dublin, The University of Dublin
} % end author

\maketitle

\begin{abstract}
Data forging attacks provide counterfactual proof that a model was
trained on a given dataset, when in fact, it was trained on
another. These attacks work by forging (replacing) mini-batches with
ones containing distinct training examples that produce nearly
identical gradients. Data forging appears to break any potential
avenues for data governance, as adversarial model owners may forge
their training set from a dataset that is not compliant to one that
is. Given these serious implications on data auditing and
compliance, we critically analyse data forging from both a practical
and theoretical point of view, finding that a key practical
limitation of current attack methods makes them easily detectable by
a verifier; namely that they cannot produce sufficiently identical
gradients. Theoretically, we analyse the question of whether two
distinct mini-batches can produce the same gradient. Generally, we
find that while there may exist an infinite number of distinct
mini-batches with real-valued training examples and labels that
produce the same gradient, finding those that are within the allowed
domain e.g. pixel values between 0-255 and one hot labels is a non
trivial task. Our results call for the reevaluation of the strength
of existing attacks, and for additional research into successful
data forging, given the serious consequences it may have on machine
learning and privacy.
\end{abstract}

\section{Introduction}
Data governance is becoming an increasingly important subject with the
rise of indiscriminate scraping of web data to train machine learning
models. Legislation such as the GDPR Framework
\cite{voigt2017eu,gdpr2016} and the CCPA \cite{ccpa2018} have been
introduced to prevent the use of sensitive personal information during
model training and reduce potential harm to the data owners that may
result from the proliferation of these models. Multiple stakeholders
exist: data owners want assurances that their copyrighted or private
information has not been used to train models without their consent,
while the model owners want to prove that their training data is
legally compliant.

Data forging
\cite{thudi2022necessity,kong2023can,baluta2023unforgeability,zhangverification}
has recently emerged as a new attack against machine learning models,
particularly against data governance. Data forging attacks appear to
provide counterfactual ``proof'' that a model was trained using a
particular dataset; when in reality, it was trained on another. In
effect, these attacks \emph{forge} one dataset into another
\emph{distinct} dataset that produces the same model. Given the true
training set \(D\), a forgery \(D'\) may be produced along with
sufficient proof claiming that \(D'\) is the real training set, not
\(D\). Thus, data forging attacks appear to throw into jeopardy the
question of ever determining exactly what data a model was trained on.

\begin{figure}
  \centering
  \includegraphics[scale=0.34]{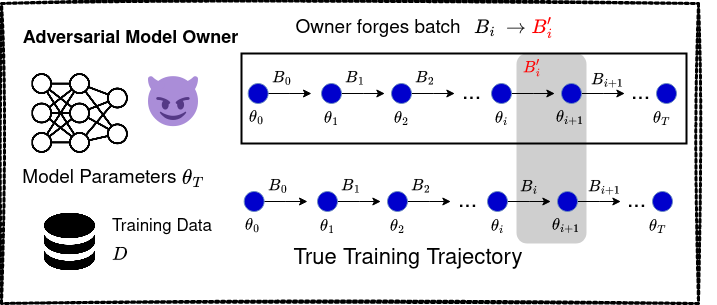}
  \caption{The adversarial model owner has access to the true
    execution trace denoting the sequence of model parameters and
    associated mini-batches. When performing a data forging attack,
    the model owner \emph{forges} (\emph{i.e.}, replaces) a batch $B_i$ such
    that $(\tbf{x}, \tbf{y}) \in B_i$ with a different batch $B'_i$ not
    containing $(\tbf{x}, \tbf{y})$ such that the resulting models are
    nearly identical.}
  \label{fig:data_forging}
  \vspace{-15pt}
\end{figure}
 
\noindent\textbf{Data Forging.} At a high level, given a machine learning model's
true training dataset $D$, a data forging attack produces a claim that
the model is actually the result of training on a \emph{different}
dataset $D'$. To achieve this, data forging attacks rely on the fact
that supervised training uses iterative algorithms such as Stochastic
Gradient Descent (SGD). These algorithms produce a sequence consisting
of model parameters (or \emph{checkpoints}) and associated
mini-batches, starting with the random initialisation and ending with
the final trained model parameters.  Such a sequence, or
\emph{execution trace}, may be \emph{verified} by a third party, who
reproduces each checkpoint in the sequence, using the mini-batch and
parameters from the previous step, and ensures that what they have
recomputed and the current checkpoint are nearly identical, i.e., the
$\ell_2$ distance between the two sets of model parameters is below a
given \textit{small} non-zero error threshold \(\epsilon\). A non zero
threshold may be allowed by the verifier due to noise that stems from
hardware and software factors. Specifically, there are small numerical
deviations between repeated recomputations of any particular
checkpoint due to benign noise (see Section
\ref{sec:reproduction-errors} for a detailed discussion on the source
of these errors). Importantly, if an execution trace passes
verification i.e., the recomputation of every step is below the
verifier's chosen threshold, then that trace is valid ``proof'' for
the verifier that the data used to train the model is what occurs in
the mini-batches present in the trace and no other (provided the
initial model \(\theta_0\) is random). This concept is formally known as
Proof of Learning (PoL), introduced by \cite{jia2021proof}. An
adversary performing a data forging attack replaces one or more
mini-batches in the execution trace with \textit{forged} (\emph{i.e.},
different) mini-batches that produce nearly identical gradient
updates, and falsely claims that \emph{(i)} the model checkpoints
present in the execution trace are actually the result of training on
these forged mini-batches, not the original mini-batches that were
replaced, and \emph{(ii)} the observed recomputation error by the
verifier is simply a result of hardware and software factors.

\noindent\textbf{Privacy Implications.}
Existing works \cite{thudi2022necessity,kong2023can,zhangverification}
have noted that data forging has clear implications for two large
concepts within machine learning and privacy; namely \emph{(i)}
membership inference, and \emph{(ii)} machine unlearning. Membership
inference attacks (MIAs)
\cite{shokri2017membership,carlini2022membership,hu2022survey} have
been developed in order to determine whether a particular data record
was part of the model's training dataset. Suppose that the training
example \((\tbf{x},\tbf{y})\) is indeed a member of the model's
training set; an adversary may refute this claim by providing a forged
execution trace such that \((\tbf{x},\tbf{y}) \notin B_i\), for all
mini-batches \(B_i\) used during the execution trace (see
Fig.~\ref{fig:data_forging}). If such an execution trace passes
verification, then the adversary has successfully refuted the
membership inference claim as they have provided valid proof for the
verifier that \((\tbf{x}, \tbf{y})\) was not used to train their
model. This is because in none of the steps of the execution trace
does \((\tbf{x}, \tbf{y})\) occur as a member of the current step's
mini-batch.

Performing such an attack is also equivalent to claiming that
\((\tbf{x},\tbf{y})\) has been exactly ``unlearnt''. The concept of
\textit{exact unlearning} \cite{bourtoule2021machine}, refers to the
idea of unlearning i.e., removing the influence of, a particular
example \((\tbf{x}, \tbf{y})\) from the model, which involves
retraining from scratch using the dataset
\(D' := D \backslash \{(\tbf{x},\tbf{y})\}\). The forged execution trace that
passes verification provides valid ``proof'' to the verifier that the
final model parameters are the result of training on a dataset that
does not contain \((\tbf{x}, \tbf{y})\); the definition of exact
unlearning. Importantly, the adversary does not have to perform any
actual re-training, resulting in two equally valid claims that appear
to contradict each other; the model has been both trained and not
trained using \((\tbf{x},\tbf{y})\). Machine unlearning
\cite{xu2023machine,bourtoule2021machine,cao2015towards} has emerged
as a response to GDPR's \cite{voigt2017eu} \emph{right to be
  forgotten} framework, and prior work
\cite{thudi2022necessity,zhangverification} note how data forging can
undermine the concept, and in its wake, call for the research and
development of auditable machine unlearning. We argue that the
seriousness of these implications hinges on the performance of data
forging attacks, i.e., \emph{how identical are the model updates
  between original and forged mini-batches produced by these attacks?}
This metric, known as approximation error, is given by the \(\ell_2\)
norm between the forged model update and the original present in the
execution trace.

Within the data forging literature, there exists a dichotomy of
assumptions on how small this approximation error should be in order
to fully realise the implications discussed previously. Prior data
forging works \cite{kong2023can,thudi2022necessity,zhangverification}
argue that their attacks produce approximation errors that are
indistinguishable from benign reproduction errors. On the other hand,
Baluta et al. \cite{baluta2023unforgeability} argue against the
acceptance of any error, citing that floating point operations are
deterministic and that there should be zero error when recomputing
execution traces. Baluta et al. go on to prove that zero error, or
\emph{exact} data forging, is not possible using existing attack
methods. As a result, they find that existing data forging attacks do
not pose any risk to membership inference or machine unlearning as
they are easily detectable by the presence of non zero error. While
the zero error threshold scenario posed by Baluta et
al. \cite{baluta2023unforgeability} is possible if the original
hardware and software are available during verification (and may be
necessary in certain high stakes scenarios such as medical data), it
is unrealistic to always impose such a strict setting in
practice. Particularly when the original hardware and software is not
available, it is not practical, and often impossible, to avoid benign
hardware reproduction errors. We argue that data forging attacks fully
realise their implications only if their approximation errors are of
the same order of magnitude as these benign reproduction errors. Prior
data forging works
\cite{kong2023can,thudi2022necessity,zhangverification} do not justify
the level of approximation error generated by their attacks, and
largely ignore this pertinent question regarding their performance.

\noindent\tbf{Contributions.} We list our main contributions to data
forging as the following:

\begin{enumerate}
\item We provide the first comprehensive analysis of the magnitude of
  reproduction errors for fully connected, convolutional, and
  transformer neural networks architectures trained on tabular, image,
  and textual data.

\item We find that the approximation errors produced by existing data
  forging attacks are too large, often several orders of magnitude
  larger than our observed reproduction errors. This makes existing
  attacks easily detectable by a verifier, nullifying their privacy
  impact in practice.
  
\item The current theoretical analysis of data forging in Baluta et
  al. \cite{baluta2023unforgeability} is restricted to the existing
  instantiations of data forging attacks and necessitates the
  investigation of broader potential attack strategies. To tackle
  this, we extend their analysis, addressing more general questions
  regarding the existence of distinct mini-batches that produce the
  same gradient and how they may be constructed. 
  Focusing on fully connected neural networks, we formulate the
  problem of exact data forging as a system of linear equations, where
  each solution corresponds to a forged mini-batch. Of the infinite
  solutions that may exist, finding those that correspond to a
  distinct, valid mini-batch i.e., one that falls within the domain of
  allowed training examples (e.g., in the case of images, pixel values
  between \(0-255\) and one hot labels) is a non-trivial task, and one
  that we conjecture may be computationally infeasible, even for
  relatively simple linear models such as logistic regression.
\end{enumerate}

As a result, we call for a re-evaluation of existing attacks, and for
further research into this nascent attack vector against machine
learning.

\section{Background}
\label{sec:backkground}

\subsection{Execution Traces}
\label{sec:execution-traces}

Supervised machine learning is a process to learn a \textit{model}, in
particular, a parameterized function
$f_{\theta}: \mathcal{X} \to \mathcal{Y}$. The parameters are typically optimized by applying
iterative methods such as Stochastic Gradient Descent (SGD) to a
training set. At the \(i\)-th gradient descent step, the next set of
model parameters \(\theta_{i+1}\) is calculated from the previous set of
parameters \(\theta_i\) and a mini-batch \(B_i\) consisting of \(b\)
training examples sampled from the dataset
\(D = \{(\tbf{x}^{(i)}, \tbf{y}^{(i)})\}_{i=1}^N\). For mini-batch
SGD, we have that
\(\theta_{i+1} = g(\theta_i, B_i) = \theta_i - \eta\nabbi\). Starting with a random
initialisation \(\theta_0\), the final model \(\theta_T\) is the result of
performing \(T\) total gradient descent steps iteratively. This
results in an \emph{execution trace} (or simply, a trace) consisting
of a sequence of tuples \(\{(\theta_i,B_i)\}_{i=0}^{T}\), that capture the
\emph{training trajectory} of the final model \(\theta_T\) (note that
\(B_T = \emptyset\)).

\subsection{Reproduction Errors}
\label{sec:reproduction-errors}
Recent work by Schl\"{o}gl et al. \cite{schlogl2023causes} found that
runtime optimisations of machine learning frameworks can result in
non-deterministic numerical deviations in the outputs of neural
networks. Schl\"{o}gl et al. cite auto-tuning \cite{grauer2012auto} as
the root cause. This is a process whereby microbenchmarks executed
just before inference determine what optimisations to apply to the
given operation e.g., the results of the microbenchmarks determine
which GPU kernel gives the best performance on a given hardware and
input shape. A well known property of floating point arithmetic is
that addition is not associative, that is, it is not necessarily the
case that \((a+b) + c = a + (b+c)\), but may be off by a small
error. These auto-tuning optimisations can lead to changes in the
aggregation order of intermediate results from kernel to kernel,
resulting in deviations in the final output of a neural network, even
for the same input. In the context of execution traces for machine
learning models, this means that given the same random initialisation
\(\theta_0\) and sequence of mini-batches
\(B_0,B_1,B_2\dots,B_{T-1}\), the resulting parameters
\(\theta_1,\theta_2,\theta_3,\dots,\theta_{T}\) will differ between repeated
recomputations due to auto-tuning.

\subsection{Data Forging}
\label{sec:data-forging}
Thudi et al.'s \cite{thudi2022necessity} concepts of
\emph{forgeability} and data forging attempt to produce distinct
execution traces for a given model \(\theta_T\) using the same checkpoints,
but with different e.g., \emph{forged} mini-batches. At a high level,
any mini-batch from an execution traces may be \emph{forged}
(\emph{i.e.}, replaced) with another containing different training
data, yet still produce a nearly identical gradient descent
update. \emph{Data forging attacks} against execution traces attempt
to construct, for a given mini-batch \(B_i\), a forged (\emph{i.e.},
distinct) counterpart \(B'_i\) such that the next model in the
execution trace, when calculated using \(B'_i\) is \emph{nearly}
identical to \(\theta_{i+1}\) \emph{i.e.}, the approximation error
\(\|\theta_{i+1} - g(\theta_i,B'_i)\|_2 \) is minimal. The attacks proposed in
\cite{thudi2022necessity,kong2023can} construct forged mini-batches by
randomly sampling \emph{other} training examples from the
dataset. These attacks were introduced in the context of
\emph{deleting} a set \( U \subset D\) of training examples from an
execution trace, and work by \emph{greedily searching} over the set of
training examples $D \backslash U$. A \textit{greedy search} attack works in
three steps \emph{(i)} sample $n$ data points uniformly from
$D \backslash U$, \emph{(ii)} sample $M$ mini-batches
$\{\hat{B}_1,\dots,\hat{B}_M\}$ uniformly from the selected $n$ data
points, and \emph{(iii)} out of the $M$ mini-batches
$\{\hat{B}_1,\dots,\hat{B}_M\}$, select the mini-batch $\hat{B}_j$
that minimises \(\|\theta_{i+1} - g(\theta_i,\hat{B}_j)\|_2\), and output the
forged mini-batch $B'_i = \hat{B}_j$.

Thudi et al. introduced this type of forging approach, and Kong et
al. \cite{kong2023can} improve upon its efficiency, with similar
performance. Recently, Zhang et al. \cite{zhangverification} augmented
this approach by choosing to instead replace data points
\((\tbf{x}, \tbf{y}) \in U\) with their class-wise nearest neighbour
from \(D \backslash U\), with also similar performance.

\section{Data Forging Threat Model}
\label{sec:data-forging-threat}

At its core, the danger of data forging, and why it must be taken
seriously, ultimately stems from the uncertainty it casts over exactly
what data was used to train a model. This can allow an adversary
(\emph{i.e.}, a malicious model owner) to claim their model was
trained on one dataset, when in fact, it was trained on
another. Motivating applications include models that were trained on
copyrighted and/or sensitive information.

In Algorithm \ref{alg:1}, we formulate the threat of data forging as
game between an adversary \(\mathcal{A}\) and a verifier
\(\mathcal{V}\).  Our formulation captures the setup proposed in previous data
forging works
\cite{kong2023can,thudi2022necessity,baluta2023unforgeability,zhangverification}. We
allow for a non zero error between recomputed and original checkpoints
in contrast to the data forging game defined by
\cite{baluta2023unforgeability}, who required \(\epsilon = 0\), a threat
model which is much stricter than ours, and one that may not be
realistic in all scenarios due to the existence of floating point
errors (see Section \ref{sec:reproduction-errors} for a more detailed
discussion).

\begin{algorithm}
  \begin{algorithmic}[1]
    \REQUIRE Adversary \(\mathcal{A}\), Verifier \(\mathcal{V}\), Training algorithm \(\mathcal{T}\), Dataset \(D\), error threshold \(\epsilon\), batch size \(b\)
    \ENSURE ACCEPT or REJECT
    \STATE \(\mathcal{V}\) produces execution trace \(\{(\theta_i,B_i)\}_{i=0}^T \gets \mathcal{T}(D,b)\).
    \STATE \(\mathcal{V}\) chooses a random index \(t \in \{0,1,2,3,...,T-1\}\). 
    \STATE \(\mathcal{A}\) forges mini-batch \(B_t\) into \(B'_t \gets \mathcal{A}(\{(\theta_i,B_i)\}_{i=0}^T,D,t)\).
    \IF {\((B'_t = B_t)\) or \((\exists (\tbf{x}', \tbf{y}') \in B'_t \mbox{ s.t. } (\tbf{x}', \tbf{y}') \notin \mathcal{X} \times \mathcal{Y}) \)}
    \STATE \tbf{return} REJECT
    \ENDIF
    \STATE \(\mathcal{V}\) calculates \(\theta^{\mathcal{V}}_{t+1} \gets g(\theta_t, B'_t)\) on their own hardware.
    \STATE \(\mathcal{V}\) calculates the error \(\epsilon_{t+1} \gets \|\theta_{t+1} - \theta^{\mathcal{V}}_{t+1}\|_2\)
    \IF {\((\epsilon_{t+1} > \epsilon) \)} 
    \STATE \tbf{return} REJECT
    \ELSE
    \STATE \tbf{return} ACCEPT
    \ENDIF
\end{algorithmic}
\caption{\label{alg:1} The data forgery game between \(\mathcal{A}\) and \(\mathcal{V}\).}
\end{algorithm}

The game plays as follows: the verifier begins by first running the
training algorithm \(\mathcal{T}\) e.g., stochastic gradient descent on dataset
\(D\) for \(T\) steps, and produces execution trace
\(\{(\theta_i,B_i)\}_{i=0}^T\), saving every checkpoint. The verifier then
challenges the adversary to forge a random batch \(B_t\) from the
execution trace. In concrete data forging attack settings, batch
\(B_t\) may contain legally problematic or sensitive information; data
that the adversary must forge with approximation error less than the
verifier's chosen threshold \(\epsilon\), in order to successfully claim the
data in that batch was not used \emph{i.e.,} falsely \emph{delete}
from their model's training data. The adversary has access to the full
trace and the dataset, and is capable of instantiating any data
forging attack algorithm in order to produce the corresponding forged
mini-batch \(B'_t\), which must satisfy the following initial
conditions:

\begin{itemize}
\item \(B'_t \ne B_t\). Adversaries participating in the game must
  produce a forged mini-batch that is \emph{different} from the
  original in at least one example.
\item
  \((\tbf{x}', \tbf{y}') \in \mathcal{X} \times \mathcal{Y}, \ \ \forall (\tbf{x}', \tbf{y}') \in
  B'_t\). This condition constrains each forged training example
  \((\tbf{x}', \tbf{y}')\) to be within the domain of the training
  context \emph{e.g.} pixel values between \(0-255\) and one hot
  labels. Otherwise, \(B'_t\) will be rejected and the adversary loses
  the game.
\end{itemize}

Finally, The verifier computes
\(\theta_{t+1}^{\mathcal{V}} := g(\theta_t, B'_t)\) using the forged mini-batch
\(B'_t\) produced by the adversary and calculates the \(\ell_2\) error
\(\epsilon_{t+1} := \|\theta_{t+1} - \theta_{t+1}^{\mathcal{V}}\|_2\) between their recomputed
checkpoint, and the one present in the trace. The game ends with a
REJECT result if \(\epsilon_{t+1} > \epsilon\), otherwise the game ends with ACCEPT
\emph{i.e.}, the adversary wins. The reproduction error threshold
\(\epsilon\) is a parameter of the game that defines how large the
\(\ell_2\) between a recomputed model and the original model may be. We
introduce the notion of \emph{\tbf{stealthy data forging}} which
refers to the production of a forged mini-batch \(B'_t\) that wins the
above data forging game. Concretely, this means that attacks that
produce a \(B'_t\) whose approximation error is less than or equal to
the chosen threshold \(\epsilon\) are deemed \emph{stealthy}.

\section{Are Current Data Forging Attacks Stealthy?}
\label{sec:are-data-forging}
Previous data forging works
\cite{kong2023can,thudi2022necessity,zhangverification} have employed
data forging as a means of adversarially \emph{deleting} data from a
model's execution trace \emph{i.e.}, for a given set of training
examples \(U \subset D\), these attacks forge the true dataset \(D\) into
another \(D'\) such that \(D' \cap U = \emptyset\). Kong et
al.~\cite{kong2023can} note that if an adversary can delete training examples
from an execution trace by replacing them with forged examples, they
can provably refute a membership inference claim against
them. Additionally, \cite{thudi2022necessity,zhangverification}
observe that if an adversary performs the same replacement, they can also claim
to have ``unlearnt'' those training examples. These implications of
data forging are only fully realised if the attacks are
\emph{stealthy} i.e., the adversary wins the game with an approximation error
lower than the error threshold chosen by the verifier.

Theoretically, Thudi et al. \cite{thudi2022necessity} have proven the
existence of an adversary that can win the forging game defined in the
previous section for any \(\epsilon > 0\), given that they have access to the
underlying data distribution \(\mathcal{D}\), and have freedom over the choice
of batch size \(b\). In particular, they prove that in the limit
\(b \rightarrow \infty \), the probability that mini-batches from any datasets
\(D\) and \(D'\) sampled i.i.d from \(\mathcal{D}\) can be forged approaches 1
as \(b \rightarrow \infty\) (see Theorems 2 and 3 from \cite{thudi2022necessity}
regarding probabilistic forging). Of course, in practice, the
adversary cannot sample indefinitely\footnote{Additionally, there only
  exists a finite number of images consisting of discrete pixel
  values.}, from, nor do they have access to \(\mathcal{D}\), however, Thudi et
al. demonstrate the practicality of forging by developing a data
forging attack algorithm that can win the forging game for
\(\epsilon \ll 1\) (see Sec. \ref{sec:data-forging} for a description of their
algorithm, as well as others in the literature). However, they do not
consider the pertinent question regarding whether the approximation
errors produced by their attack are comparable to reproduction errors
induced by floating point noise.

In practice, an informed verifier will choose their error threshold
\(\epsilon\) to be no larger than observed reproduction errors. As a result,
for a data forging attack to be stealthy, its approximation errors
must be on the same order of magnitude as reproduction
errors. Stealthiness, we argue, derives from the fact that the
approximation errors would be indistinguishable from the regular noise
that pervades floating point computation. In order to answer the
question of whether existing attacks are stealthy, we first conduct
experiments to determine the magnitude of reproduction errors, and
compare this to the approximation errors produced by existing
attacks. In the next section, we perform repeated recomputations of
execution traces, and measure the \(\ell_2\) distance between repeated
recomputations, in order to characterise the magnitude of reproduction
errors. Subsequently, we compare the magnitude of our observed
reproduction errors to the magnitude of the approximation errors
produced by the data forging attacks developed in
\cite{kong2023can,thudi2022necessity,zhangverification}. \tbf{Our
  findings suggest that existing data forging attacks are not
  stealthy} i.e., reproduction errors are often orders of magnitude
smaller than approximation errors produced by existing data forging
attacks.

\subsection{What is the Magnitude of Benign Reproduction Errors?}
\label{sec:what-magn-benign}

In order to quantify the magnitude of reproduction errors, we perform
experiments where a model is trained on a dataset for \(T\) total
steps, recording its execution trace. We then recompute this trace, on
the same and also different hardware, and measure the distance between
every recomputed checkpoint and the corresponding original
checkpoint. We repeat the verification process \(10\) times. The
magnitude of the error between repeated recomputations of the same
checkpoint is called the \emph{reproduction error} and is calculated
using the \(\ell_2\) norm
\(\epsilon_{repr} := \arg\max_i \epsilon_{i+1}\) \emph{i.e.}, the reproduction error
is the largest observed \(\ell_2\) distance between recomputations. In
Table \ref{tab:models}, we summarise the models and datasets used in
our experiments, which includes a fully connected neural networks
(FCN) with 1 ReLU hidden layer consisting of 100 neurons, a small and
large convolutional neural network, and a decoder-only
transformer. These models are trained on the Adult, MNIST, CIFAR10,
and IMDB datasets respectively, covering classification, object
detection, and sentiment analyis tasks across tabular, image, and
textual data modalities. We keep the learning rate fixed at
\(\eta = 0.01\). For all our experiments, we use TensorFlow v2.15.0, CUDA
v12.2, and CUDNN v8.9. In addition to \texttt{float32} training, we
also investigate the effect of \texttt{float16} and \texttt{bfloat16}
mixed-precision training on reproduction errors.

\begin{table}[h]
  \centering
  \caption{Models and Datsets.}
  \label{tab:models}
  \begin{tabular}{||l | l ||}
    \hline

    Model / Dataset & \# Total Params\\
    \hline
    FCN / Adult &  \(1,601\) \\
    LeNet / MNIST & \(61,706\) \\
    Transformer / IMDB& \(657,758\) \\
    VGGmini / CIFAR10 & \(5,742,986\) \\
    \hline
  \end{tabular}
\end{table}

% We produce execution traces for LeNet \cite{lecun1998gradient} (61.7K
% parameters) trained on MNIST \cite{lecun1998mnist} and VGGmini
% \cite{simonyan2014very} (5.75M parameters) trained on CIFAR10
% \cite{krizhevsky2010cifar}. We keep the learning rate fixed at
% \(\eta = 0.01\). For all our experiments, we use TensorFlow v2.15.0, CUDA
% v12.2, and CUDNN v8.9. We use the same models, datasets, checkpointing
% interval, and learning rate as used by previous data forging works
% \cite{kong2023can,thudi2022necessity,zhangverification}.
For a given execution trace (e.g., a trace produced for a given model,
dataset, batch size \(b\), and on a given hardware platform), our
experiments proceed as follows:

\begin{enumerate}
\item First, we choose the hardware platform where the execution trace
  will be recomputed \emph{i.e.}, this is either the same GPU it was
  generated on or a different one.

\item We recompute each checkpoint by loading \(\theta_{i}\) from 
  the execution trace and training on the corresponding mini-batch \(B_i\), 
  also taken from the execution trace.
  
\item We measure the distance between our recomputed checkpoint and
  the corresponding checkpoint \(\theta_{i+1}\) present in the execution
  trace.
\item We repeat this process \(10\) times and report the largest
  observed distance for each recomputed checkpoint, as well as
  variance.
\end{enumerate}

The question of what might affect the magnitude of reproduction errors
boils down to the question of what might affect the results of the
auto-tuning microbenchmarks that ultimately decide which floating
point operations are done in which order. It is possible (however
unlikely), that two training runs, starting with the same random
initialisation and using the same sequence of mini-batches, may
produce the exact same sequence of checkpoints (\emph{i.e.}, zero
reproduction error) simply by chance. In reality, reproduction errors
occur because the likelihood that the floating point operations are
done in the same order after auto-tuning between two different
training runs is very small. What goes into the decision of gpu kernel
can depend on a host of reasons such as the model architecture,
dataset, the particular hardware and software, the other processes
that are running on the machine at the time, etc. A comprehensive
analysis of all these factors that affect reproduction error is beyond
the scope of this work, however we pick the same models and datasets
as previous works e.g. LeNet/MNIST and VGGmini/CIFAR10 as used in
\cite{baluta2023unforgeability,thudi2022necessity,kong2023can} (and
additionally FCNs and decoder-only transformers), in order to
determine a value for the magnitude of reproduction errors such that
we may compare their attack's approximation errors to. Previous work
by \cite{pham2020problems} considered the effect non determinism
introduced by GPU training on the final accuracy of the trained models
(see Appendix \ref{sec:tbfacc-meas} our accuracy measurements). Our
experiments regarding reproduction errors attempt to determine the
magnitude of these errors as we vary model size and type, data
modality, floating point precision, batch size, and GPU model.

\subsubsection{Batch size, model size and data modality do not appear
  to greatly affect reproduction error}
\label{sec:batch-size-model}

To characterise the effect of batch size on reproduction errors, we
train our models with batch sizes
\(b = \{50, 100, 500, 1000, 2000\}\), covering 3 orders of
magnitude. Figure \ref{fig:bs} provides the largest observed
reproduction errors during the recomputation of execution traces for
each of the 4 considered model setups. We see that the reproduction
error remains constant for all 4 setups, varying by only one order of
magnitude at most. The largest observed reproduction error was for the
transformer model, which was no larger than \(\num{1e-6}\) in
magnitude for \(b = 50\). Additionally, the models considered span 4
orders of magnitude in terms of the number of parameters (see Table
\ref{tab:models}), and we find no correlation between model size and
reproduction error; the transformer model consisting of 600K
parameters produces larger reproduction errors than VGGmini which has
5.75M parameters. From our experiments the effect of data modality is
unclear i.e., images and tabular data produce similar levels of
reproduction errors whereas textual data produces slightly larger 
errors (larger by a single order of magnitude).

\begin{figure}
  \centering
  \includegraphics[scale=0.5]{ ./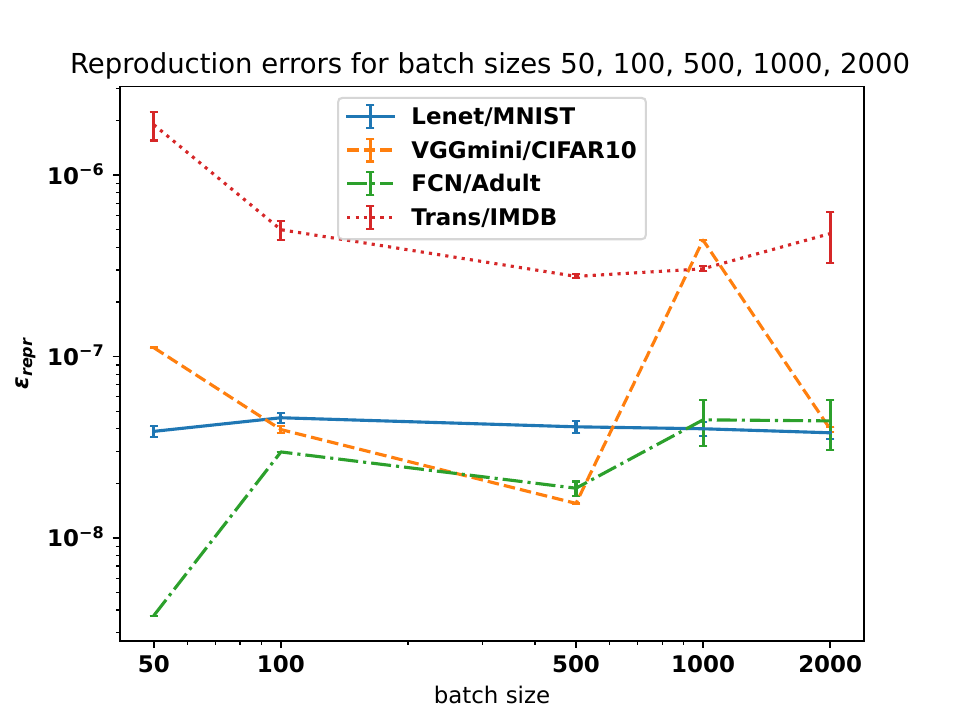 }
  \caption{The observed largest reproduction error across multiple
    recomputations across batch size and model type and size.}
  \label{fig:bs}
\end{figure}

\subsubsection{Mixed precision training can result in larger
  reproduction errors}

Mixed precision training involves using lower precision floating point
types such as \texttt{float16} and \texttt{bfloat16}\footnote{Both
  formats take up 2 bytes, however \texttt{float16} uses 10 bits for
  the mantissa, whereas \texttt{bfloat16} uses 7.} during training in
order to increase performance on modern GPUs. Given their lower
precision, it is possible that reproduction errors for lower precision
floating point numbers can be larger than those for the default
\texttt{float32}, as investigated previously in Section
\ref{sec:batch-size-model}.

To evaluate this claim, we re-train our models using both
\texttt{float16} and \texttt{bfloat16} specifications and provide the
largest observed reproduction errors in Table
\ref{tab:floating-point}. Generally, lower precision training appears
to result in larger reproduction errors as compared to when training
with the more precise \texttt{float32}. We find that these errors may
be up to 3 orders of magnitude larger e.g. LeNet/MNIST for batch size
\(b=100\) results in reproduction errors \(\sim \num{1e-8}\) with
\texttt{float32} training, however can be as large as
\(\sim \num{1e-5}\) with mixed precision \texttt{bfloat16} training.

\begin{table}
  \begin{subtable}{\columnwidth}
    \centering
    \begin{tabular}{||c | c ||}
      \hline
      Model/Dataset & Largest \(\epsilon_{repr}\) \\
      \hline 
      LeNet/MNIST   & \(\num{8.96e-6} (\pm \num{8.64e-7}) \) \\
      VGGmini/CIFAR10 & \(\num{7.35e-5} (\pm \num{1.11e-6})\) \\
      FCN/Adult & \(\num{4.2e-6} (\pm \num{1.94e-7})\) \\
      Trans/IMDB & \(\num{4.74e-5} (\pm \num{2.83e-5})\) \\
      \hline
    \end{tabular}
    \caption{float16}
    \label{tab:fp16}
  \end{subtable}
  \hfill
  \begin{subtable}[h]{\columnwidth}
    \centering
    \begin{tabular}{||c | c||}
      \hline
      Model/Dataset & Largest \(\epsilon_{repr}\) \\
      \hline 
      LeNet/MNIST   & \(\num{6.6e-5} (\pm \num{4.77e-6}) \) \\
      VGGmini/CIFAR10 & \(\num{5.97e-4} (\pm \num{2.54e-5})\) \\
      FCN/Adult & \(\num{1.22e-04} (\pm \num{5.14e-5})\) \\
      \hline
    \end{tabular}
    \caption{bfloat16}
    \label{tab:bf16}
  \end{subtable}
  \caption{The largest observed $\epsilon_{repr}$ when the training and
    recomputation are done using mixed-precision. }
  \label{tab:floating-point}
\end{table}

% \subsubsection{\tbf{Recomputation on the same hardware}}
% % Table \ref{tab:lenet-mnist-k=1}
% reports the largest observed
% reproduction error for our two setups on the same hardware. We train
% LeNet for \(T = 1200\) steps with varying batch sizes and VGGmini for
% \(T = 2500\) steps. We produced the execution traces on a NVIDIA
% GeForce RTX 4090 GPU with \tbf{auto-tuning turned on, finding that
%   reproduction errors were consistently on the order of magnitude of
%   \(10^{-8}\), when recomputing execution traces on the same GPU that
%   generated them}. Neither batch size nor model size appeared to
% affect the magnitude of these errors, and we found little deviation
% between consecutive recomputations \emph{i.e.}, the errors were always
% within the same order of magnitude over several recomputation runs. No
% other processes were running\footnote{Except the desktop
%   environment. We conjecture that the auto-tuning microbenchmarks are
%   affected by what else is running on the GPU, however we do not
%   conduct extensive experiments to this end.} on the RTX 4090 GPU
% during both the initial generation and recomputation of the execution
% traces.

\subsubsection{Recomputation on different hardware can result in
  larger reproduction errors}
\label{sec:recomp-diff-hardw}
We now attempt to characterise the magnitude of benign hardware errors
when the recomputation is done on a different GPU than the one that
produced the execution trace. In Table \ref{tab:cross-hardware}, we
report the largest observed reproduction error for cross hardware
recomputation. We consider two different GPUs in our experiments: a
RTX 4090 and a Tesla V100. We find that recomputing execution traces
on different hardware can produce slightly larger levels of error
compared to when recomputation is done on the original hardware, In
our cross-hardware experimentation, we found that the error increases
only by a single order of magnitude. On both GPUs, auto-tuning is
turned on.

In most machine learning applications, auto-tuning is turned on as it
boosts performance, however, widely used machine learning frameworks
such as TensorFlow \cite{abadi2016tensorflow} and PyTorch
\cite{paszke2019pytorch} provide an option to turn it off\footnote{See
  \url{https://www.tensorflow.org/api_docs/python/tf/config/experimental/enable_op_determinism}
  and \url{https://pytorch.org/docs/stable/notes/randomness.html}}. We
found that turning auto-tuning off results in zero error when
recomputing execution traces using the same hardware and software, as
the floating point operations occur in a deterministic order. However,
different hardware platforms may not share the same implementations of
algorithms \emph{i.e.}, they may differ in approach and aggregation
order, resulting in unavoidable reproduction error, even if
auto-tuning is turned off. Table \ref{tab:ch-lenetb100_determ} shows
how when auto-tuning is off e.g., in deterministic training, there is
zero error when the execution trace is generated and recomputed on the
same hardware. However, when the generating and recomputing hardware
differ, we found that the reproduction error is non zero, and during
our experiments, was larger than when auto-tuning is turned on.

% In all
% our experiments, the largest reproduction errors we observed were on
% the order of \(10^{-6}\), which occured only during cross-hardware
% recomputation. When recomputing checkpoints on the same hardware, the
% largest errors we observed were on the order of \(10^{-7}\).
%we did not observe a reproduction
  % error larger than \(10^{-5}\).}

\begin{table}
  \begin{subtable}{\columnwidth}
    \centering
    \begin{tabular}{||c | c | c||}
      \hline
      & RTX 4090 & Tesla V100 \\
      \hline 
      RTX 4090   & \(10^{-8}\) & \(10^{-7}\)\\
      \hline
      Tesla V100 & \(10^{-7}\) & \(10^{-8}\)\\
      \hline
    \end{tabular}
    \caption{LeNet/MNIST \(b = 100\)}
    \label{tab:ch-lenetb100}
  \end{subtable}
  \hfill
  \begin{subtable}[h]{\columnwidth}
    \centering
    \begin{tabular}{||c | c | c||}
      \hline
      & RTX 4090 & Tesla V100 \\
      \hline 
      RTX 4090   & 0 & \(10^{-6}\)\\
      \hline
      Tesla V100 & \(10^{-6}\) & 0\\
      \hline
    \end{tabular}
    \caption{Deterministic LeNet/MNIST \(b = 100\)}
    \label{tab:ch-lenetb100_determ}
  \end{subtable}
  \hfill
  \begin{subtable}[h]{\columnwidth}
    \centering
    \begin{tabular}{||c | c | c||}
      \hline
      & RTX 4090 & Tesla V100 \\
      \hline 
      RTX 4090   & \(10^{-8}\) & \(10^{-7}\)\\
      \hline
      Tesla V100 & \(10^{-7}\) & \(10^{-8}\)\\
      \hline
    \end{tabular}
    \caption{LeNet/MNIST \(b = 1000\)}
    \label{tab:ch-lenetb1000}
  \end{subtable}    
  \caption{The largest observed $\epsilon_{repr}$ when recomputation is done
    on different hardware. For a given execution trace setup
    (\emph{i.e.}, each subtable), each row indicates the GPU the
    execution trace was generated on and the column indicates which
    GPU the trace was recomputed on.  In
    Tab.~\ref{tab:ch-lenetb100_determ}, the diagonal entries are zero
    because deterministic training results in zero error when
    recomputing on the same hardware. }
  \label{tab:cross-hardware}
\end{table}

\subsubsection{Discussion}
In this section, we have provided our results on observed reproduction
errors as we vary models (large and small), datasets, batch sizes,
floating point precision, and recomputation hardware, however we
refrain from making any statement about what directly results in the
magnitude of the reproduction error for a given setup. This is due to
the closed source nature of CUDA (NVIDIA's proprietary GPU driver
code), making any grounded determinations as to what is going on
``under the hood'' of the GPUs extremely difficult. % The particular
% optimisations developed by NVIDIA to improve performance and the
% exact operation of auto-tuning are industrial trade secrets, making
% further research into characterising the noise within GPU computation
% difficult.
However, our results and methodology represent a principled
approach for verifiers to calculate suitable error thresholds for the
verification of execution traces. Using these results, in the next
section we evaluate the \emph{stealthiness} of data forging attacks.

\subsection{Evaluating the Performance of Existing Data Forging
  Attacks: How Small are their Approximation Errors?}
\label{sec:when-data-forging}

In this section, we perform data forging attacks against the execution
traces we generated in the previous section, and report their
approximation error. Our results highlight two important factors that
affect the approximation error; namely \emph{(i)} the forging fraction
(\emph{i.e.}, the number of examples in a mini-batch that need to be
replaced), and \emph{(ii)} the stage of training at which forging is
taking place.

\subsubsection{\tbf{Smaller forging fractions result in smaller approximation errors}}
Recall that the goal of the adversary is to replace the original
mini-batch \(B_i\) with a forged mini-batch \(B'_i\) such that
\(U \cap B'_i = \emptyset\). The difficulty of producing the forged mini-batch
\(B'_i\) can be viewed as being proportional to the cardinality of
\(B_i \cap U\). If \(\#(B_i \cap U) = 1\), \emph{i.e.}, only one example
needs to be forged, then the adversary can keep all the other \(b-1\)
examples the same in \(B'_i\) and only forge the single training
example. For larger values of \(b\), the effect of a single example on
the average gradient \(\nabla_{\theta} \mathcal{L}(B'_i)\) becomes negligible; regardless
of what the training example is replaced with, its effect on the
average gradient is essentially ``drowned out'' by the other \(b-1\)
training examples that are shared between \(B_i\) and \(B'_i\). Given
a fixed number of training examples in \(B_i\) that needs to be
replaced, for larger and larger batch sizes (\emph{i.e.},
\emph{smaller forging fractions} \(\frac{\#(B_i \cap U)}{\#(B_i)}\)), we
expect the approximation error to get smaller. However, if
\(\#(B_i \cap U) = \#(B_i) = b\), then forging is much more difficult, as
every example needs to be replaced. Figure \ref{fig:forging-frac}
shows that this expected behaviour is indeed observed when forging
both LeNet/MNIST and VGGmini/CIFAR10 traces. We see that the attacks
perform best (\emph{i.e.}, produce the smallest approximation errors)
when the forging fraction
\(\frac{\#(B_i \cap U)}{\#(B_i)} = \frac{1}{b}\), and degrade quickly
with increased forging fractions. We also observe that Thudi et
al. \cite{thudi2022necessity} and Zhang et
al.\cite{zhangverification}'s attacks have similar performance (see
Appendix \ref{sec:tbfc-data-forg} for a comparison).

\begin{figure}[htb!]
  \centering
  \begin{subfigure}[t]{0.23\textwidth}
    \centering
    \includegraphics[width=\textwidth]{ ./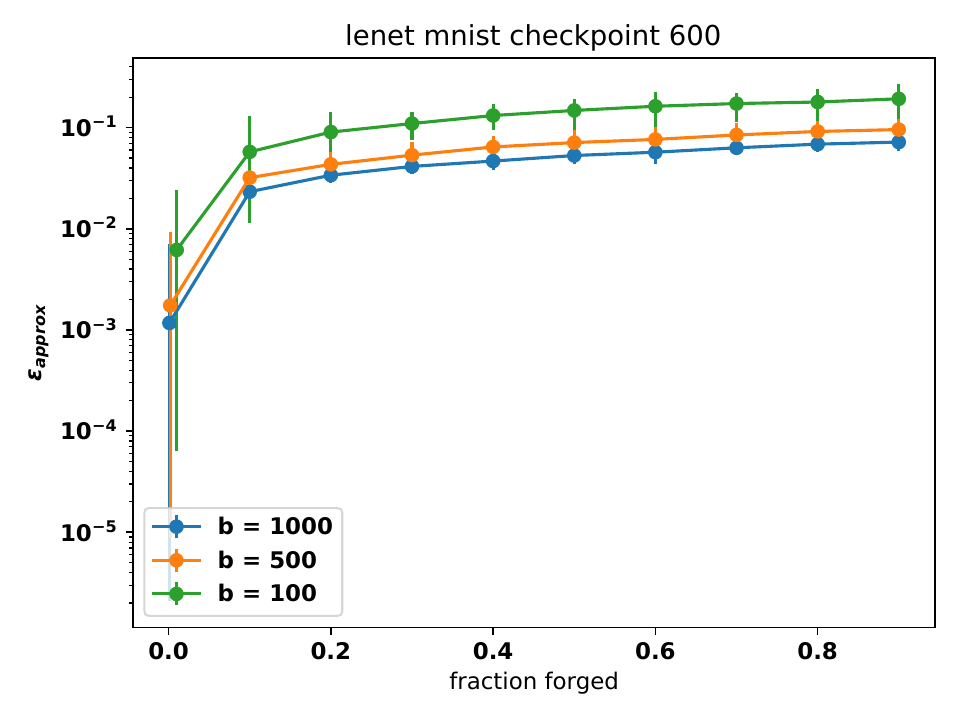 }
    \caption{LeNet/MNIST}
    \label{fig:ff-lenet}
  \end{subfigure}
  ~
  \begin{subfigure}[t]{0.23\textwidth}
    \centering
    \includegraphics[width=\textwidth]{ ./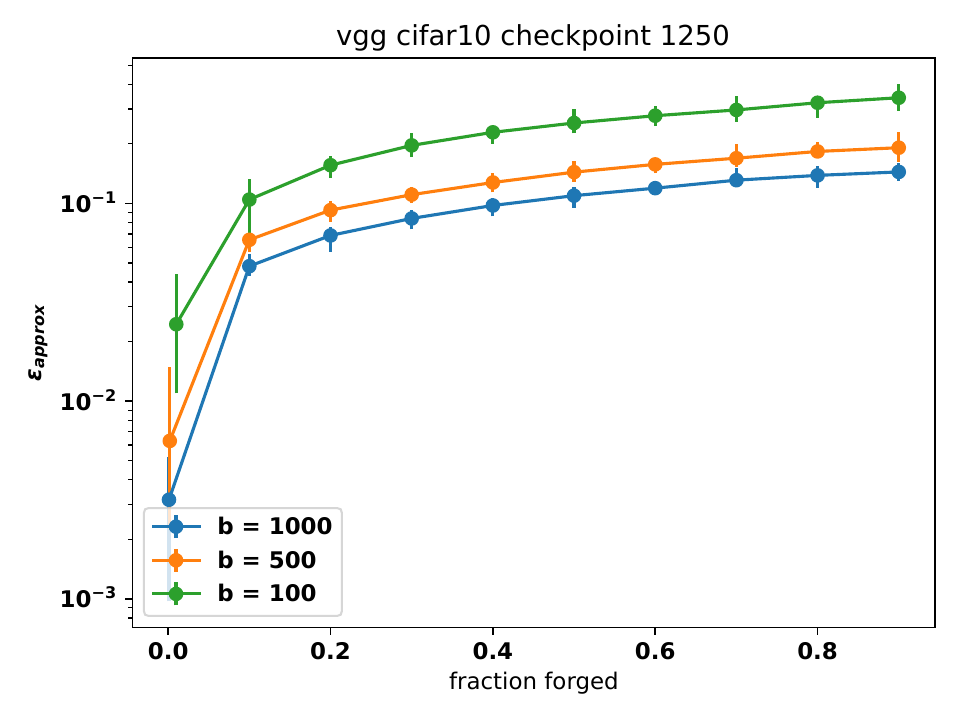 }
    \caption{VGGmini/CIFAR10}
    \label{fig:ff-vgg}
  \end{subfigure}
  \caption{The average approximation error of Thudi et al.'s
    attack~\cite{thudi2022necessity} (over several runs), as well as
    the min and max observed approximation error for different batch
    sizes and increasing forging fractions. Performance degrades
    quickly with increased forging fraction. Additionally, there is
    high variance when forging just one example. In both settings, we
    forge the middle checkpoint of the execution trace.}
  \label{fig:forging-frac}
\end{figure}

\subsubsection{\tbf{Forging performance is inconsistent across training}}

Data forging involves replacing all occurences of a given set training
examples \(U\) in an execution trace with distinct examples that
produce a similar gradient. Consequently, we postulate that the
location of the mini-batches \(B_i\) such that
\(B_i \cap U \ne \emptyset\) can affect the peformance of these data forging
attacks. Given a trace that captures \(E\) total epochs, mini-batches
that contain a given example \((\tbf{x},\tbf{y})\) will occur \(E\)
times roughly uniformly across the execution trace. As a result, we
must evaluate the performance of data forging attacks across training
as every occurence of \((\tbf{x}, \tbf{y})\) in the trace must be
forged. We run the following experiment to evaluate this:

\begin{enumerate}
\item For a given execution trace's training trajectory
  \(\theta_0,\theta_1,\dots,\theta_{T}\), we sample a true mini-batch
  \(B\) and calculate its model update \(g(\theta_i,B)\) across the
  trajectory.
\item We then forge mini-batch \(B\) into \(B'\) across training at
  each checkpoint and measure the approximation error.
\item We repeat this several times for different sampled mini-batches
  in order to study the variance.
\end{enumerate}

Firstly, we find that when the forging fraction is \(1\),
approximation errors are at least 4 orders of magnitude larger than
our largest observed reproduction errors for every model and dataset
setup we consider for both \texttt{float32} and \texttt{float16}
formats. Approximation errors were closer to reproduction errors only
for the \texttt{bfloat16} mixed precision regime, however they were
still 2 orders of magnitude larger. In Figures
~\ref{fig:fac-fp32-ff1},~\ref{fig:fac-fp16-ff1}, and
\ref{fig:fac-bf16-ff1}, we observe that the approximation errors are
consistently orders of magnitude larger than the largest observed
reproduction errors (shown as the red dotted line).

Now consider the case where the forging fraction is
\(\frac{1}{b}\). In Figures \ref{fig:fac-fp32-ff1-2000},
\ref{fig:fac-fp16-ff1-100}, \ref{fig:fac-bf16-ff1-100} we plot the
approximation errors for a given training example
\((\tbf{x},\tbf{y})\) at uniform intervals across the execution trace
(each line represents a different example). In this setting, we see
that approximation errors can vary highly depending on both the
example being forged and the position in the execution trace where it
is being forged. For \texttt{bfloat} mixed precision training and a
forging fraction of \(1/100\), we find that approximation errors of
data forging attacks are closest to reproduction errors, being either
the same magnitude or a single order of magnitude larger when forging
single examples (see Fig. \ref{fig:fac-bf16-ff1-100}). However, the
performance of the attack is not consistent across training, as seen
in Figures \ref{fig:fac-fp32-ff1-2000}, and
\ref{fig:fac-fp16-ff1-100}, where approximation errors can vary from
being only a single order of magnitude larger to nearly 5 orders of
magnitude larger (see Fig. \ref{lenet-mnist-ff-1-2000-fp32}).

\begin{figure}[htb]
  \centering
  \begin{subfigure}[b]{0.23\textwidth}
    \centering
    \includegraphics[width=\textwidth]{ ./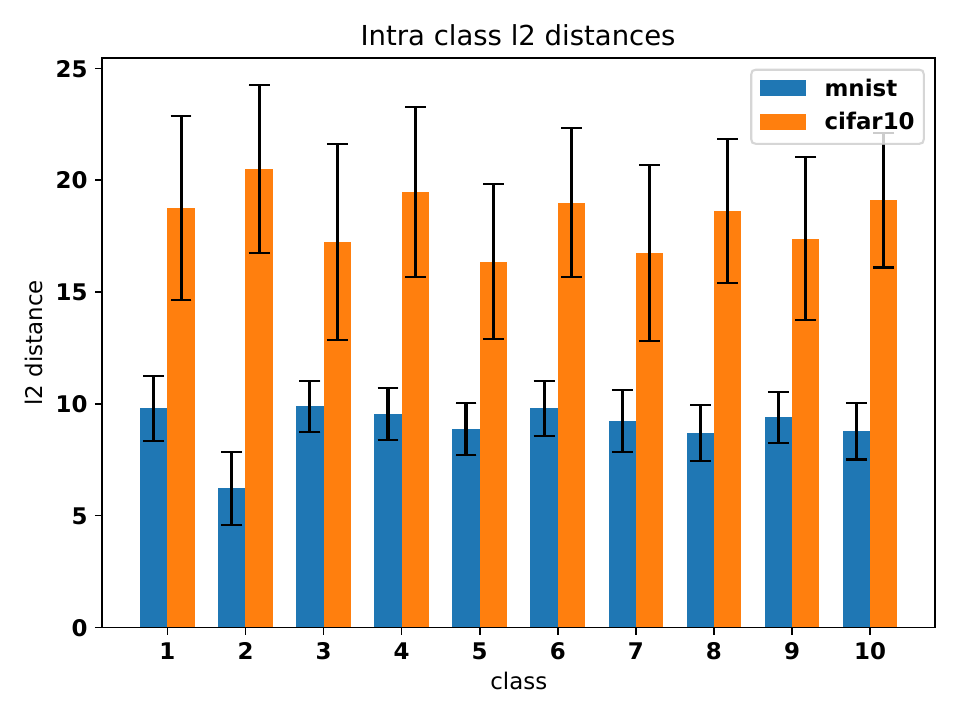 }
    \caption{}
    \label{fig:mnist-cifar-var}
  \end{subfigure}
  ~
  \begin{subfigure}[b]{0.23\textwidth}
    \centering
    \includegraphics[width=\textwidth]{ ./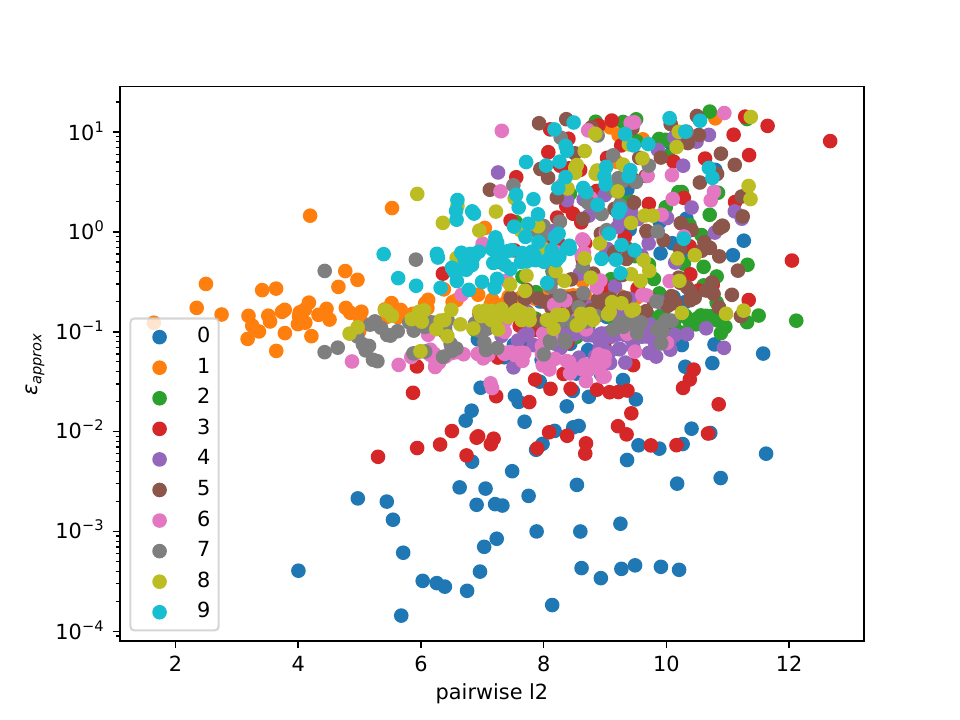 }
    \caption{}
    \label{fig:pairwise-mnist}
  \end{subfigure}    
  \caption{While MNIST has lower intra-class pairwise \(\ell_2\) variance
    than CIFAR10 (Fig \ref{fig:mnist-cifar-var}), we do not find a
    correlation between pairwise distance and approximation error (Fig
    \ref{fig:pairwise-mnist}). }
  \label{fig:dv}

\end{figure}

One possible explanation regarding the difference in forging
performance is the similarity of examples within the MNIST dataset,
however, Figure \ref{fig:dv} highlights that there is no observed
correlation between pairwise \(\ell_2\) distance and approximation
error. Instead, we highlight the relationship between approximation
error at a given checkpoint in training and the loss of the single
example being forged at that same checkpoint. Figures
\ref{lc-lenet-mnist-e-approx},\ref{lc-lenet-mnist-loss} show for the
LeNet/MNIST setup the approximation errors of data forging and the
loss of the particular example being forged (each colour in both
figures corresponds to the same training example). In these two
figures we can see a direct correlation between the loss of the model
on the particular example \((\tbf{x},\tbf{y})\) being forged, and the
approximation error. The smaller the loss, the smaller the norm of the
example gradient
\(\nabla_{\theta}\ell(f_{\theta}(\tbf{x}), \tbf{y})\), the less the example contributes
to the average gradient \(\nabla_{\theta}\mathcal{L}(B)\). As a result, when forging such
an example, data forging attacks find another example with a similarly
small gradient norm and replace it. Since both gradient norms are
small, replacing one with the other does not greatly affect the
direction or norm of the average gradient, resulting in small
approximation errors. However this is not the case when the norm of
the example gradient is large, such as early in training. Data forging
attacks cannot find a suitable replacement; resulting in larger
approximation errors. We find this pattern holds across model
architectures as well, see Figures \ref{lc-trans-imdb-e-approx}, and
\ref{lc-trans-imdb-loss} for similar behaviour for a transfomer.

\begin{figure*}[htb]
  \centering
  \begin{subfigure}[b]{0.23\textwidth}
    \centering
    \includegraphics[width=\textwidth]{ ./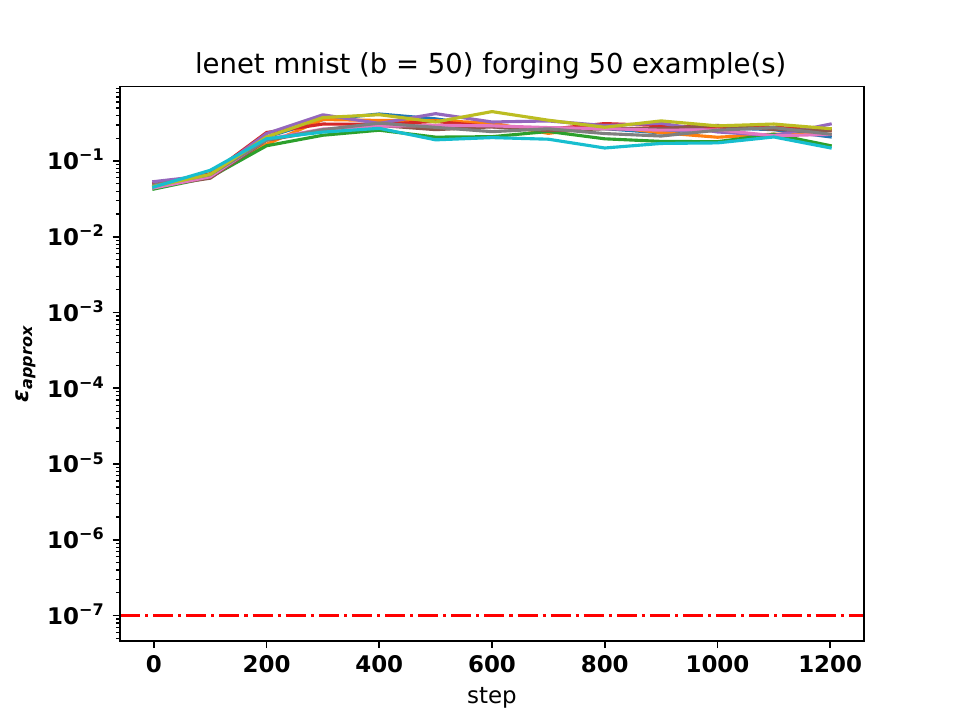 }
    \caption{LeNet/MNIST}
    \label{lenet-mnist-fp32-ff1}
  \end{subfigure}  
  ~
  \begin{subfigure}[b]{0.23\textwidth}
    \centering
    \includegraphics[width=\textwidth]{ ./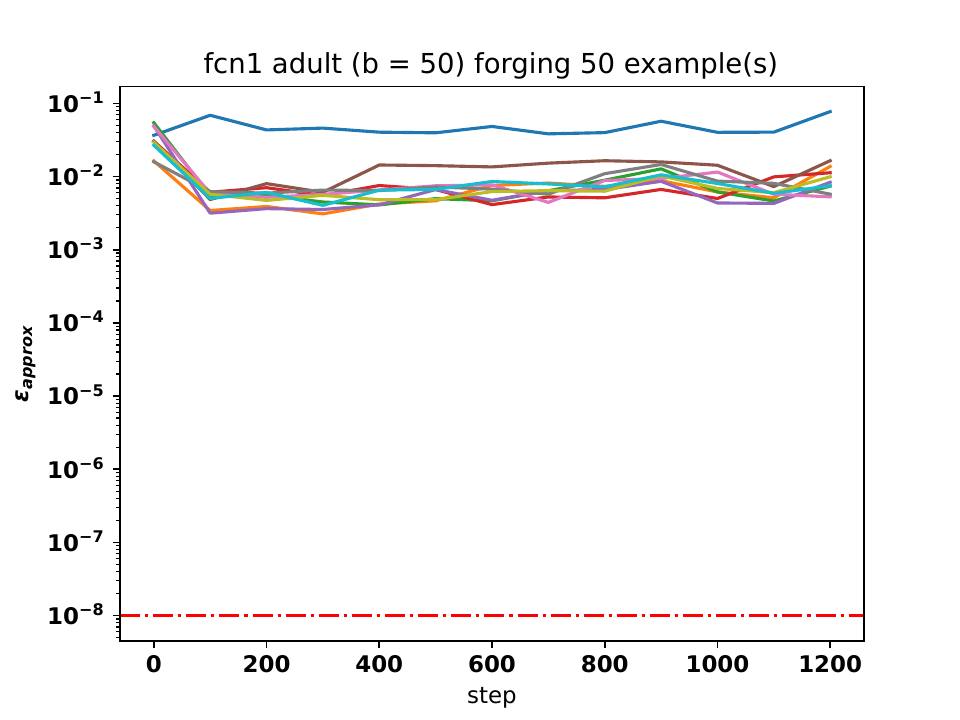 }
    \caption{FCN/Adult}
    \label{fcn-adult-fp32--ff1}
  \end{subfigure}
  ~
  \begin{subfigure}[b]{0.23\textwidth}
    \centering
    \includegraphics[width=\textwidth]{ ./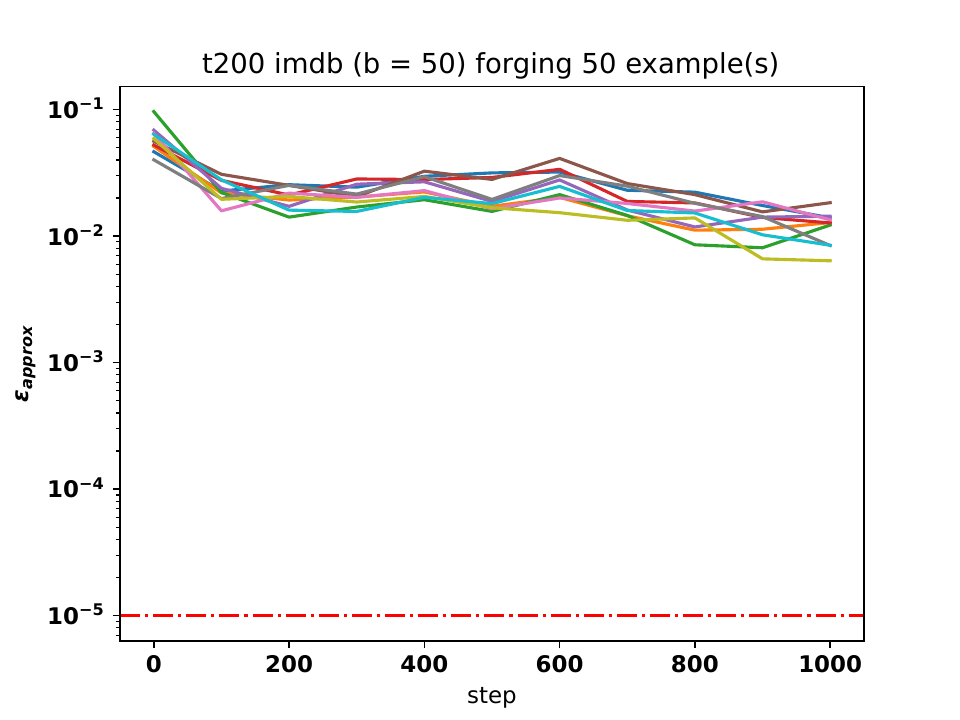 }
    \caption{Trans/IMDB}
    \label{trans-imdb-fp32-ff1}
  \end{subfigure}  
  ~
  \begin{subfigure}[b]{0.23\textwidth}
    \centering
    \includegraphics[width=\textwidth]{ ./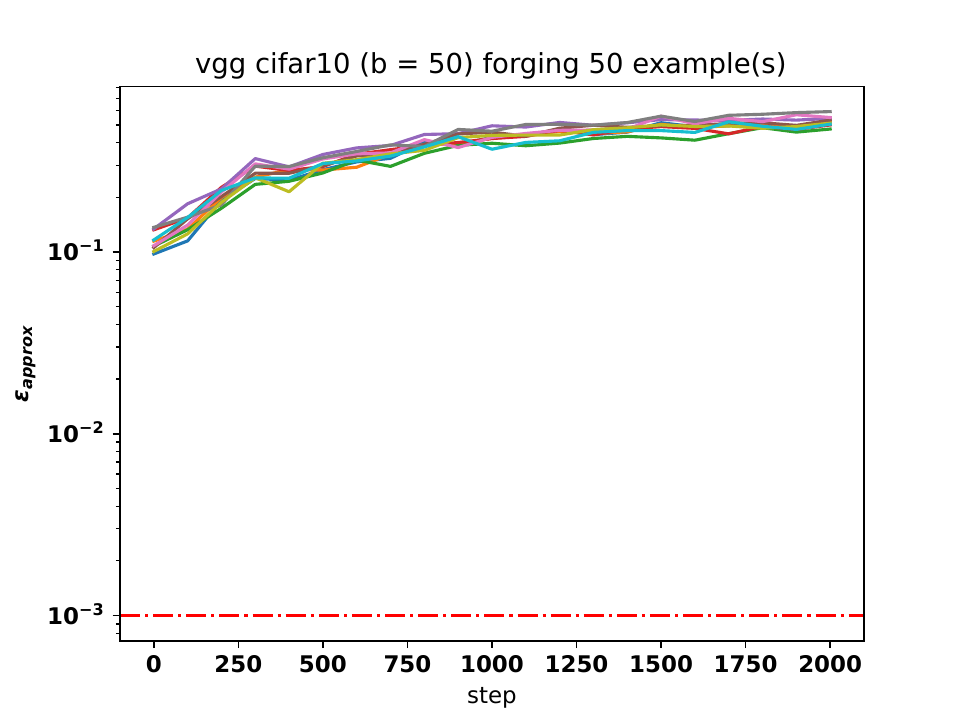 }
    \caption{VGGmini/CIFAR10}
    \label{vgg-cifar-fp32-ff1}
  \end{subfigure}  
  \caption{Data forging approximation errors across stages of training
    where the forging fraction is \(1\) for \texttt{float32}
    training. Each line corresponds to a different true mini-batch
    \(B\), and the y-axis reports the approximation error when trying
    to forge \(B\) at the given step on the x-axis.}
  \label{fig:fac-fp32-ff1}
\end{figure*}

\begin{figure*}[htb]
  \centering
  \begin{subfigure}[b]{0.23\textwidth}
    \centering
    \includegraphics[width=\textwidth]{ ./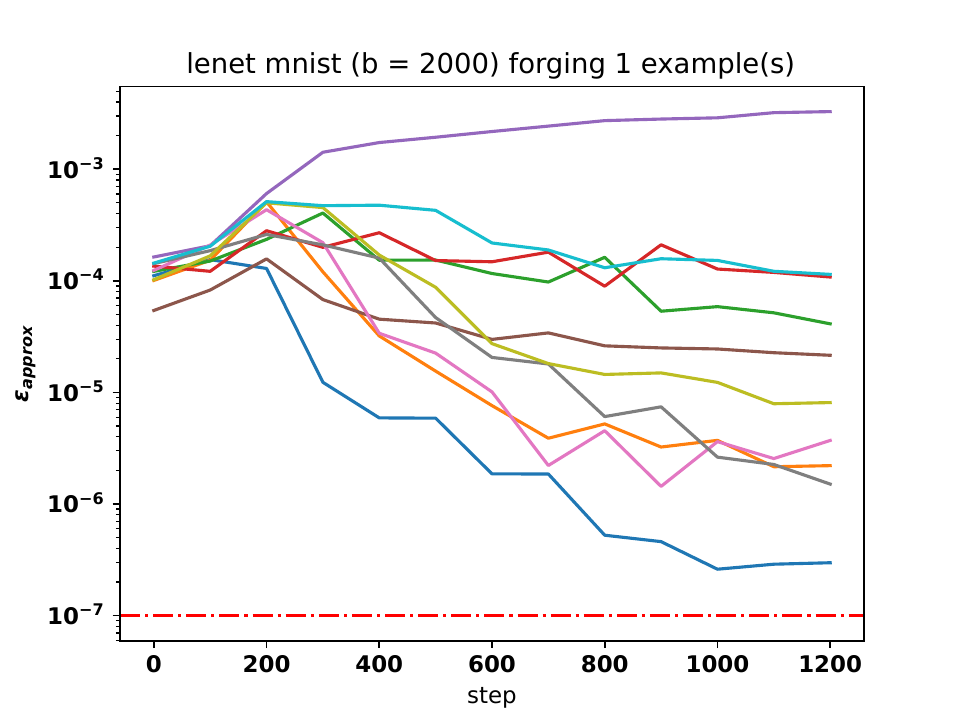 }
    \caption{LeNet/MNIST}
    \label{lenet-mnist-ff-1-2000-fp32}
  \end{subfigure}  
  ~
  \begin{subfigure}[b]{0.23\textwidth}
    \centering
    \includegraphics[width=\textwidth]{ ./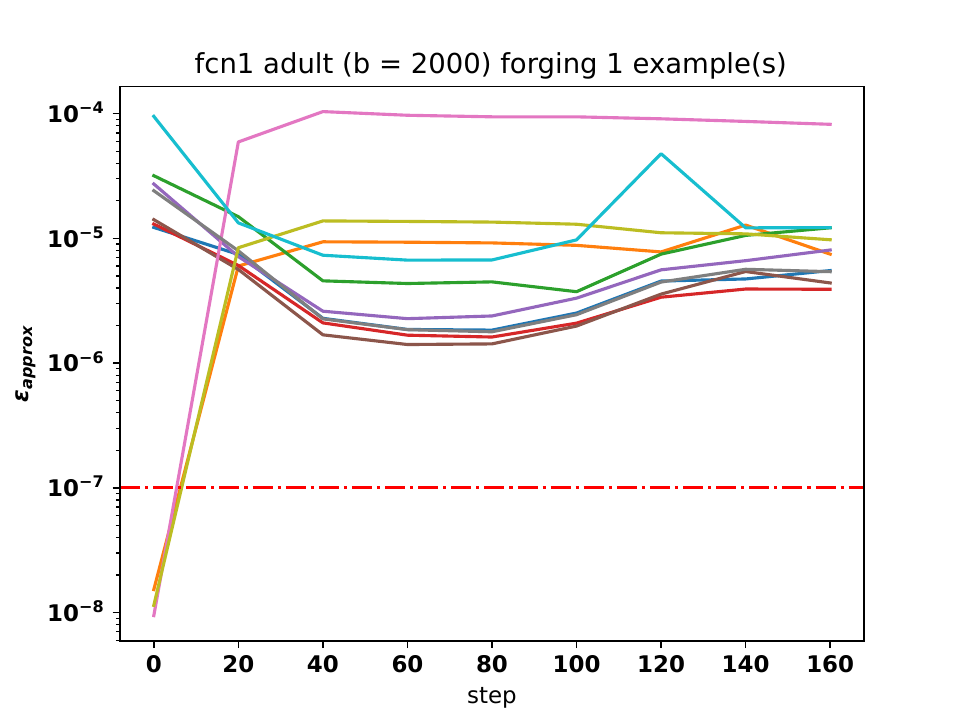 } 
    \caption{FCN/Adult}
    \label{}
  \end{subfigure}
  ~
  \begin{subfigure}[b]{0.23\textwidth}
    \centering
    \includegraphics[width=\textwidth]{ ./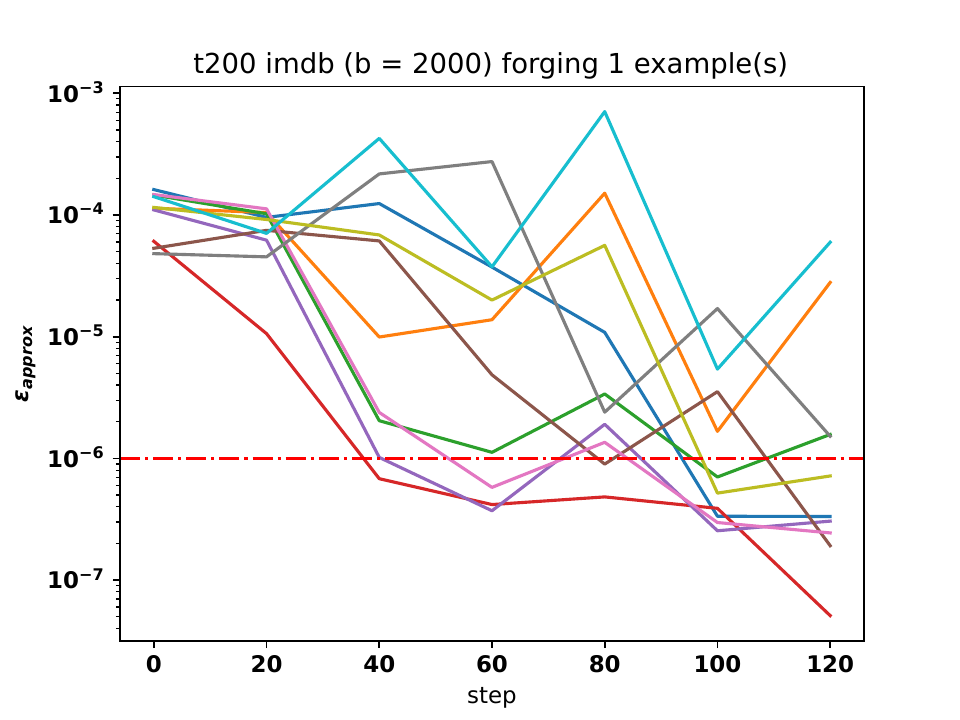 }
    \caption{Trans/IMDB}
    \label{}
  \end{subfigure}  
  ~
  \begin{subfigure}[b]{0.23\textwidth}
    \centering
    \includegraphics[width=\textwidth]{ ./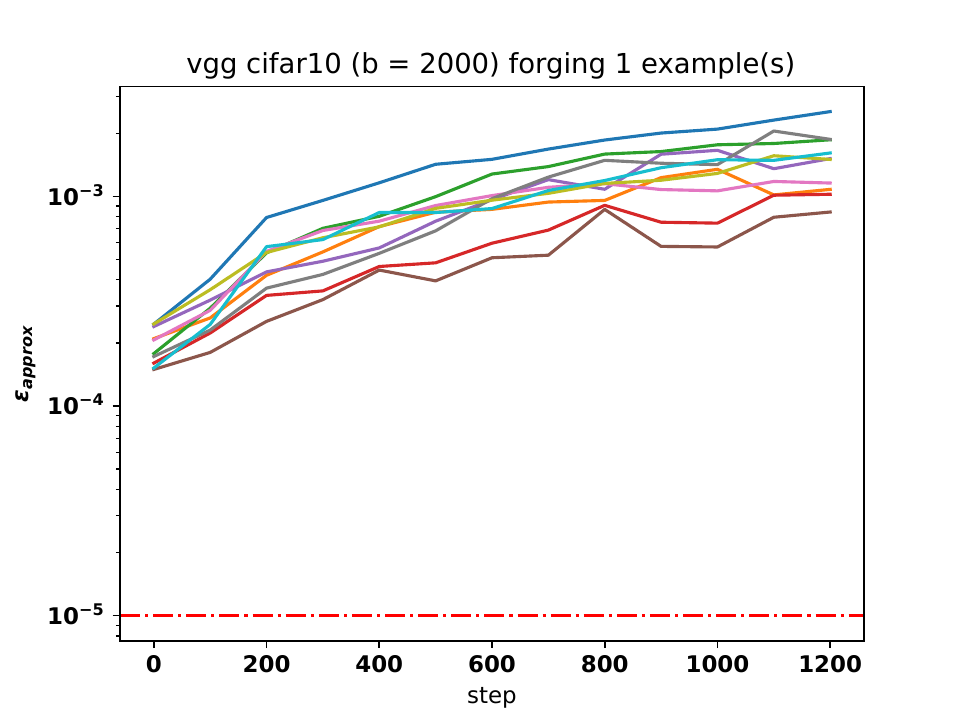 }
    \caption{VGGmini/CIFAR10}
    \label{}
  \end{subfigure}  
  \caption{Data forging approximation errors across stages of training
    where the forging fraction is \(1/2000\) for \texttt{float32}
    training.}
  \label{fig:fac-fp32-ff1-2000}
\end{figure*}

\begin{figure*}[htb]
  \centering
  \begin{subfigure}[b]{0.23\textwidth}
    \centering
    \includegraphics[width=\textwidth]{ ./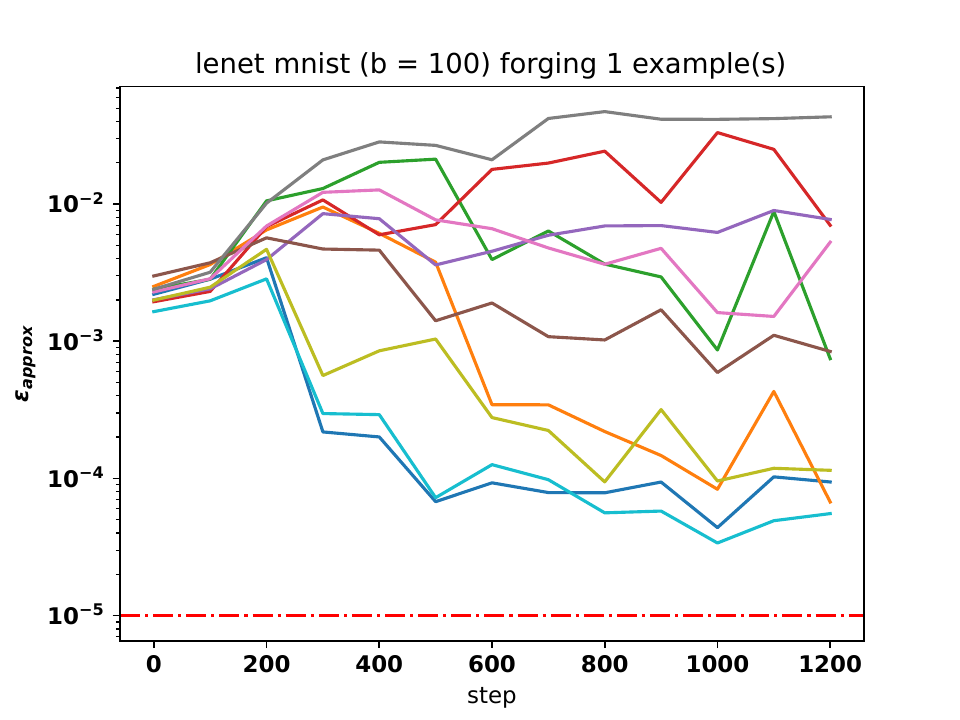 }
    \caption{LeNet/MNIST}
    \label{}
  \end{subfigure}  
  ~
  \begin{subfigure}[b]{0.23\textwidth}
    \centering
    \includegraphics[width=\textwidth]{ ./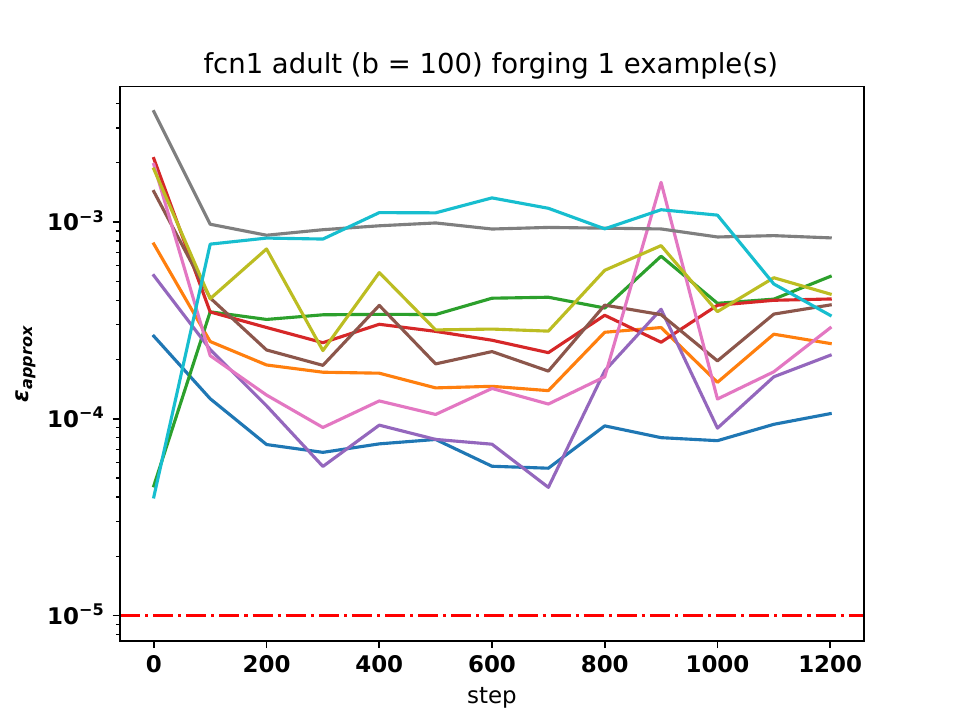 }
    \caption{FCN/Adult}
    \label{}
  \end{subfigure}
  ~
  \begin{subfigure}[b]{0.23\textwidth}
    \centering
    \includegraphics[width=\textwidth]{ ./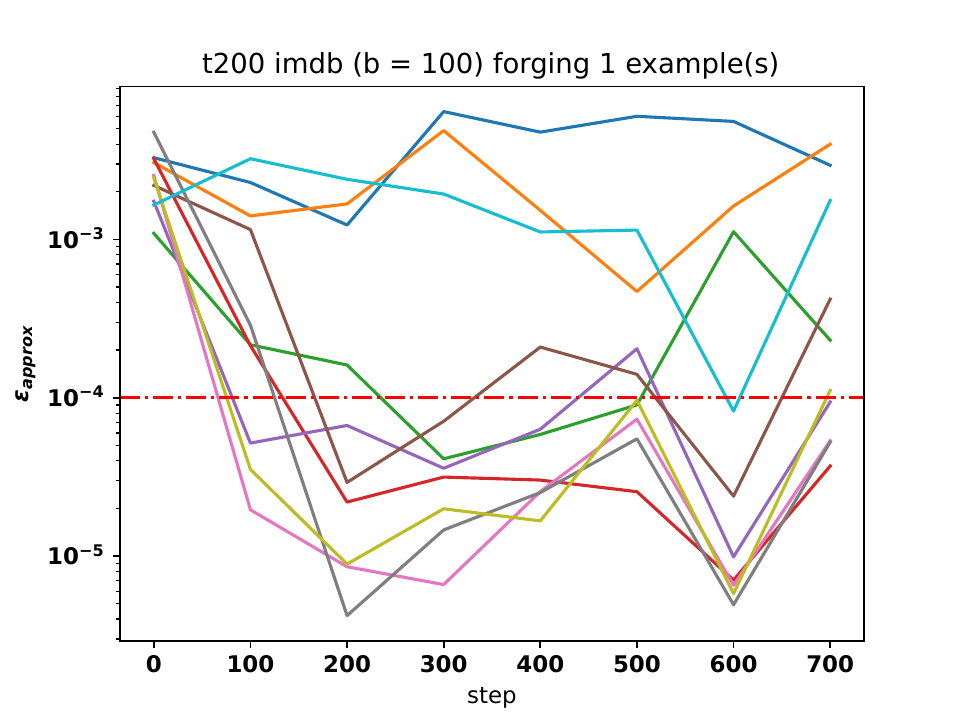 }
    \caption{Trans/IMDB}
    \label{}
  \end{subfigure}  
  ~
  \begin{subfigure}[b]{0.23\textwidth}
    \centering
    \includegraphics[width=\textwidth]{ ./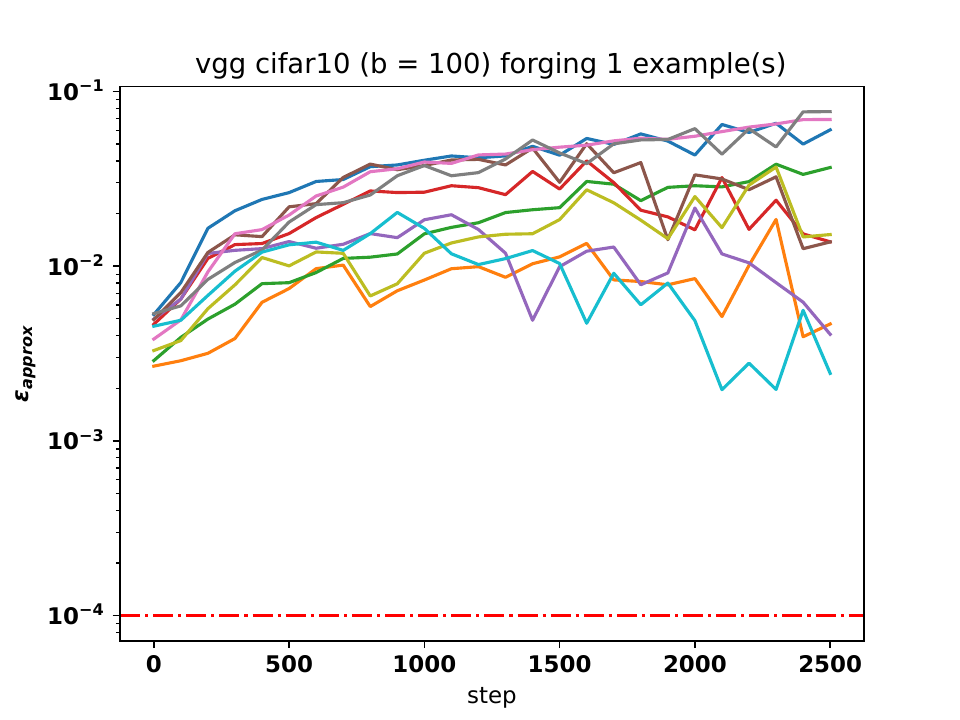 }
    \caption{VGGmini/CIFAR10}
    \label{}
  \end{subfigure}  
  \caption{Data forging approximation errors across stages of training
    where the forging fraction is \(1/100\) for \texttt{float16}
    training.}
  \label{fig:fac-fp16-ff1-100}
\end{figure*}

\begin{figure*}[htb]
  \centering
  \begin{subfigure}[b]{0.23\textwidth}
    \centering
    \includegraphics[width=\textwidth]{ ./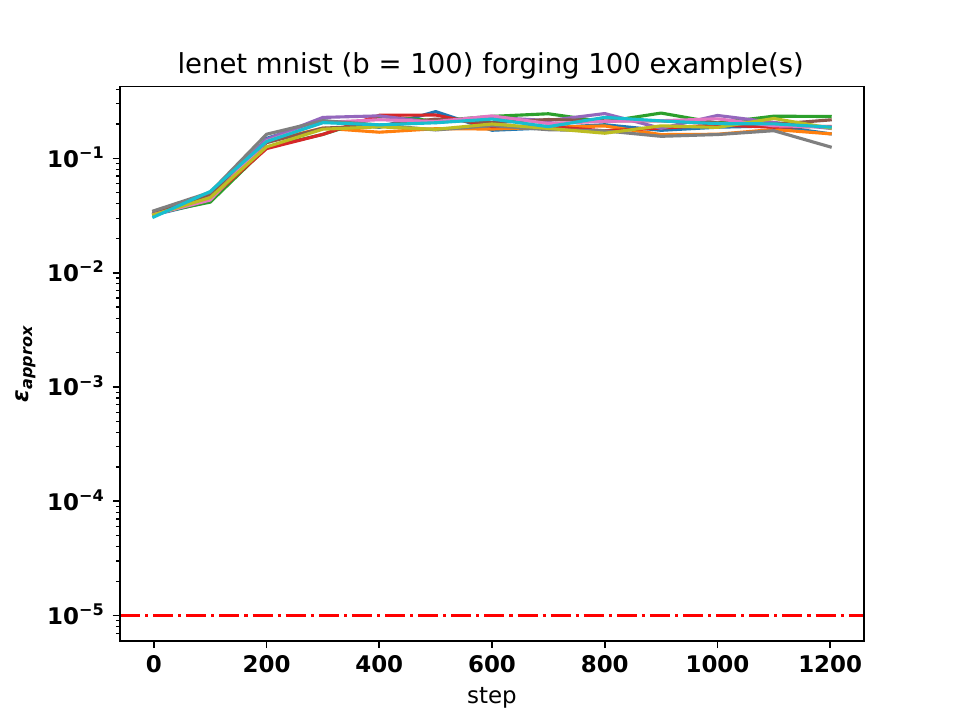 }
    \caption{LeNet/MNIST}
    \label{}
  \end{subfigure}  
  ~
  \begin{subfigure}[b]{0.23\textwidth}
    \centering
    \includegraphics[width=\textwidth]{ ./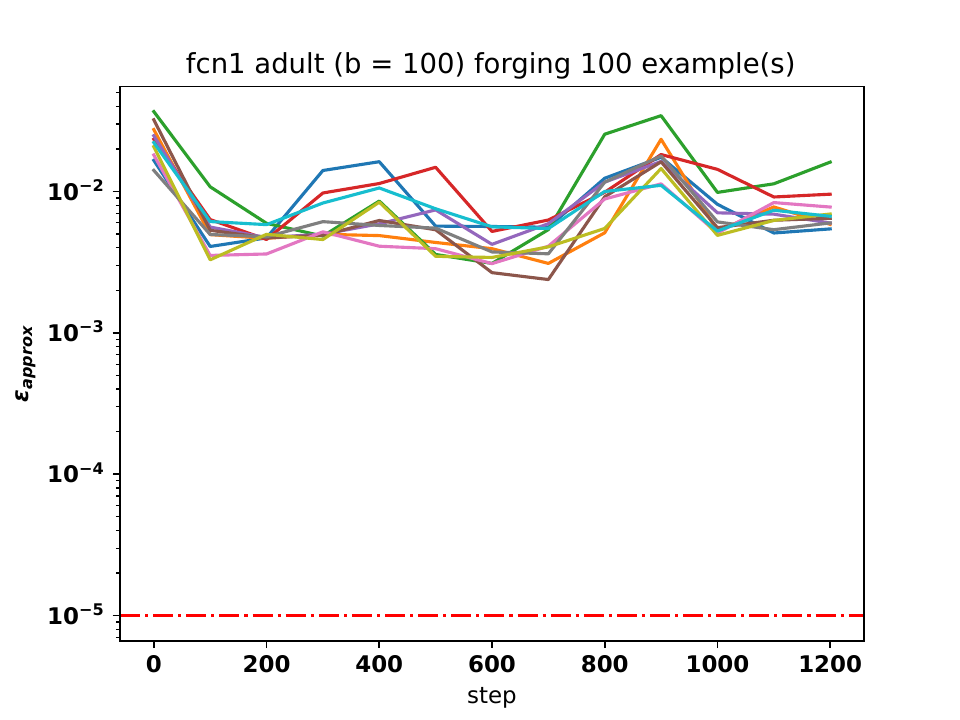 }
    \caption{FCN/Adult}
    \label{}
  \end{subfigure}
  ~
  \begin{subfigure}[b]{0.23\textwidth}
    \centering
    \includegraphics[width=\textwidth]{ ./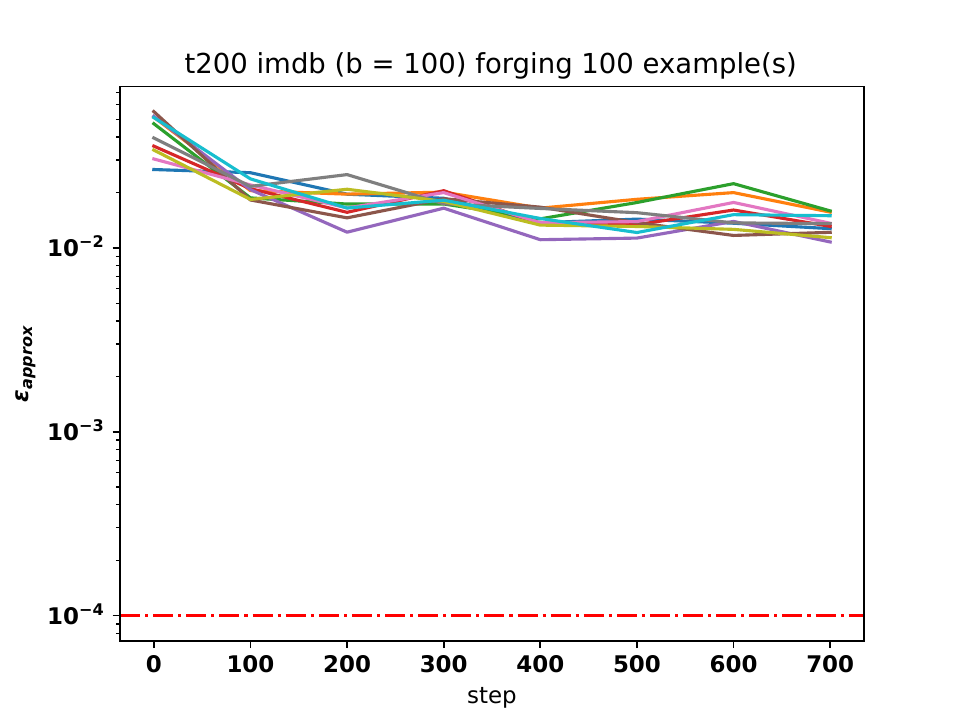 }
    \caption{Trans/IMDB}
    \label{}
  \end{subfigure}  
  ~
  \begin{subfigure}[b]{0.23\textwidth}
    \centering
    \includegraphics[width=\textwidth]{ ./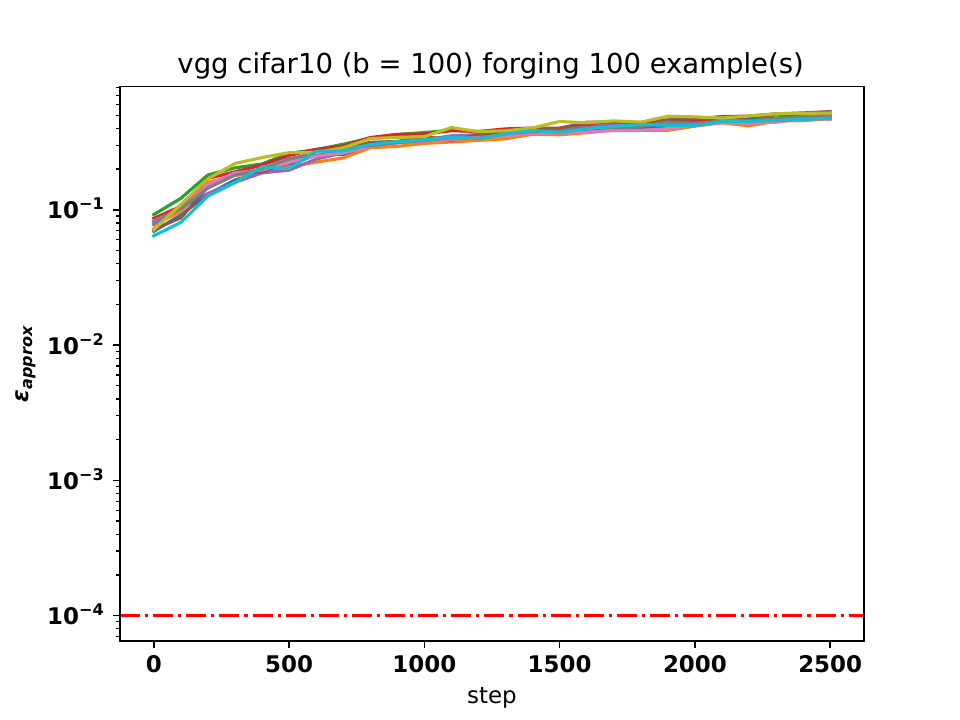 }
    \caption{VGGmini/CIFAR10}
    \label{}
  \end{subfigure}  
  \caption{Data forging approximation errors across stages of training
    where the forging fraction is \(1\) for \texttt{float16}
    training.}
  \label{fig:fac-fp16-ff1}
\end{figure*}

\begin{figure*}[htb]
  \centering
  \begin{subfigure}[b]{0.23\textwidth}
    \centering
    \includegraphics[width=\textwidth]{ ./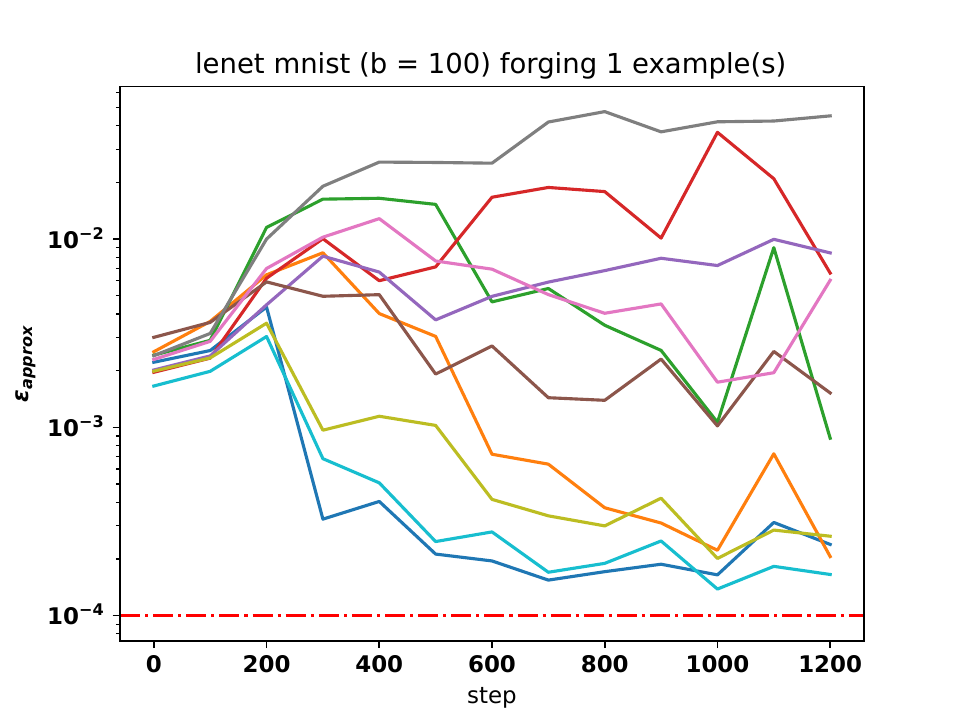 }
    \caption{LeNet/MNIST}
    \label{}
  \end{subfigure}  
  ~
  \begin{subfigure}[b]{0.23\textwidth}
    \centering
    \includegraphics[width=\textwidth]{ ./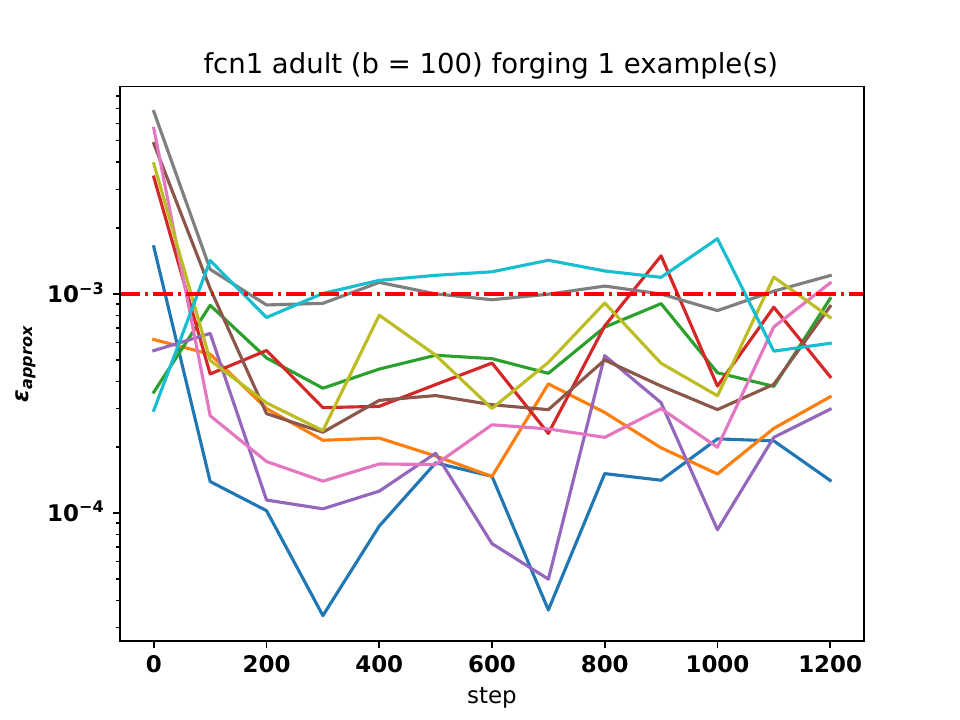 }
    \caption{FCN/Adult}
    \label{}
  \end{subfigure}
  ~
  \begin{subfigure}[b]{0.23\textwidth}
    \centering
    \includegraphics[width=\textwidth]{ ./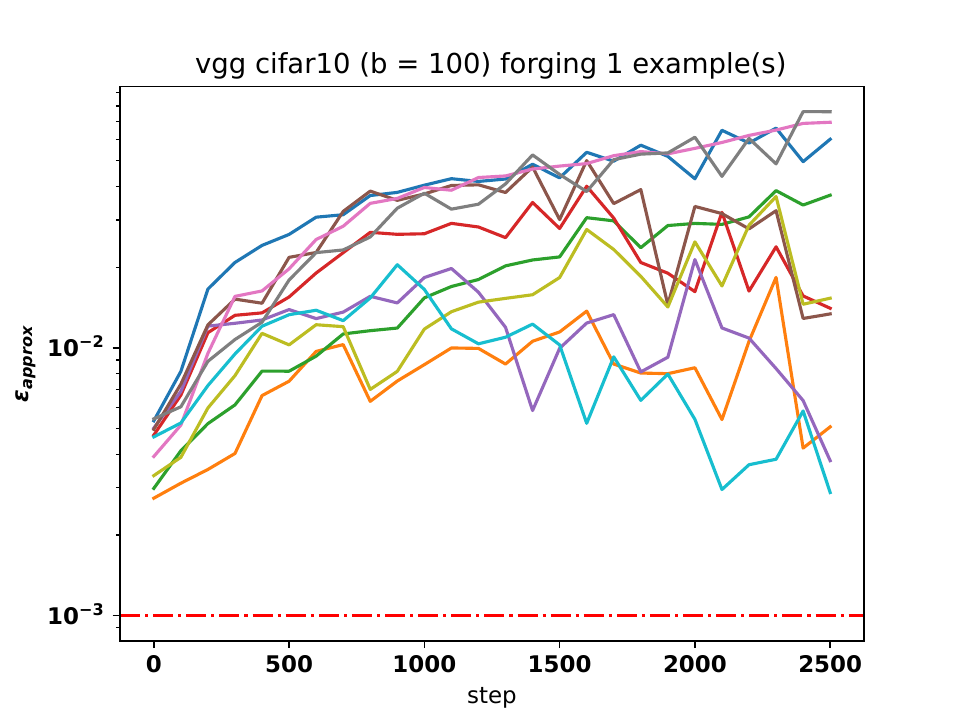 }
    \caption{VGGmini/CIFAR10}
    \label{}
  \end{subfigure}  
  \caption{Data forging approximation errors across stages of training
    where the forging fraction is \(1/100\) for \texttt{bfloat16}
    training.}
  \label{fig:fac-bf16-ff1-100}
\end{figure*}

\begin{figure*}[htb]
  \centering
  \begin{subfigure}[b]{0.23\textwidth}
    \centering
    \includegraphics[width=\textwidth]{ ./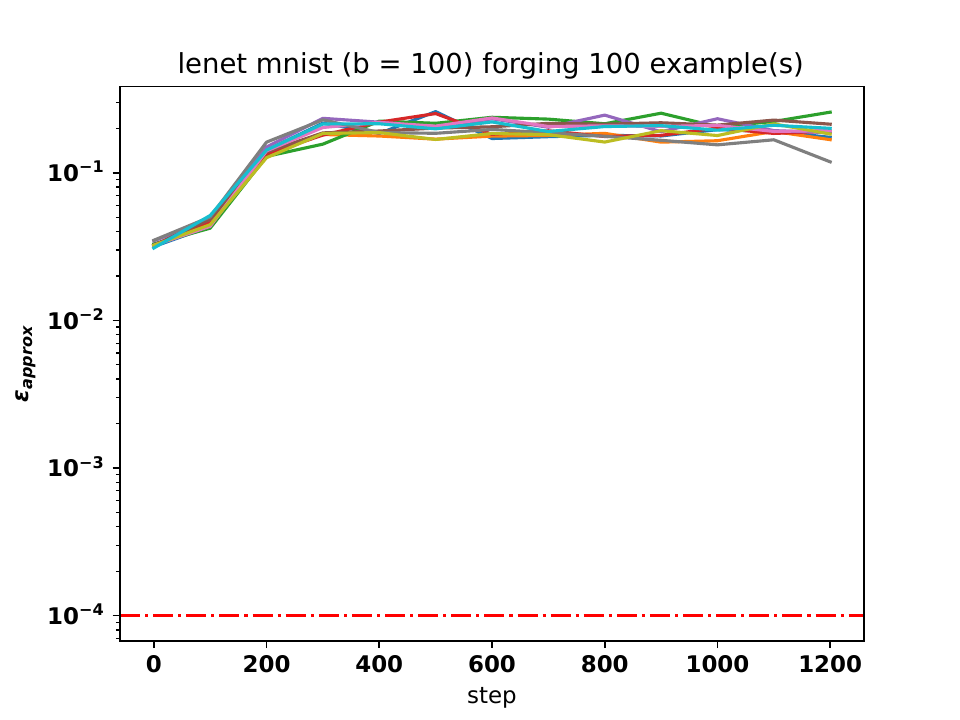 }
    \caption{LeNet/MNIST}
    \label{}
  \end{subfigure}  
  ~
  \begin{subfigure}[b]{0.23\textwidth}
    \centering
    \includegraphics[width=\textwidth]{ ./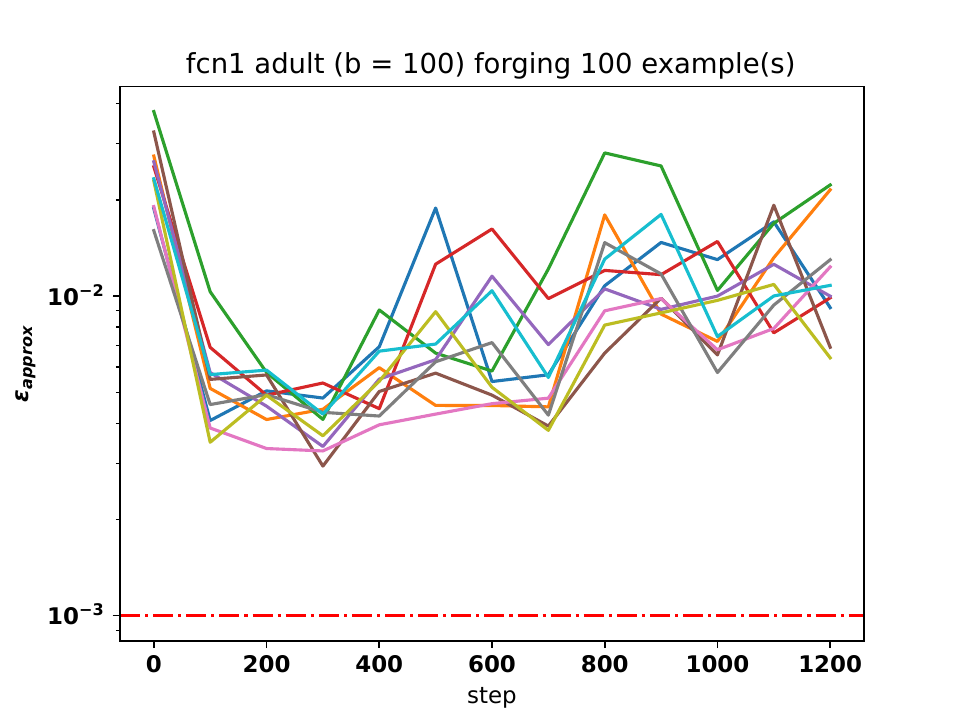 }
    \caption{FCN/Adult}
    \label{}
  \end{subfigure}
  ~
  \begin{subfigure}[b]{0.23\textwidth}
    \centering
    \includegraphics[width=\textwidth]{ ./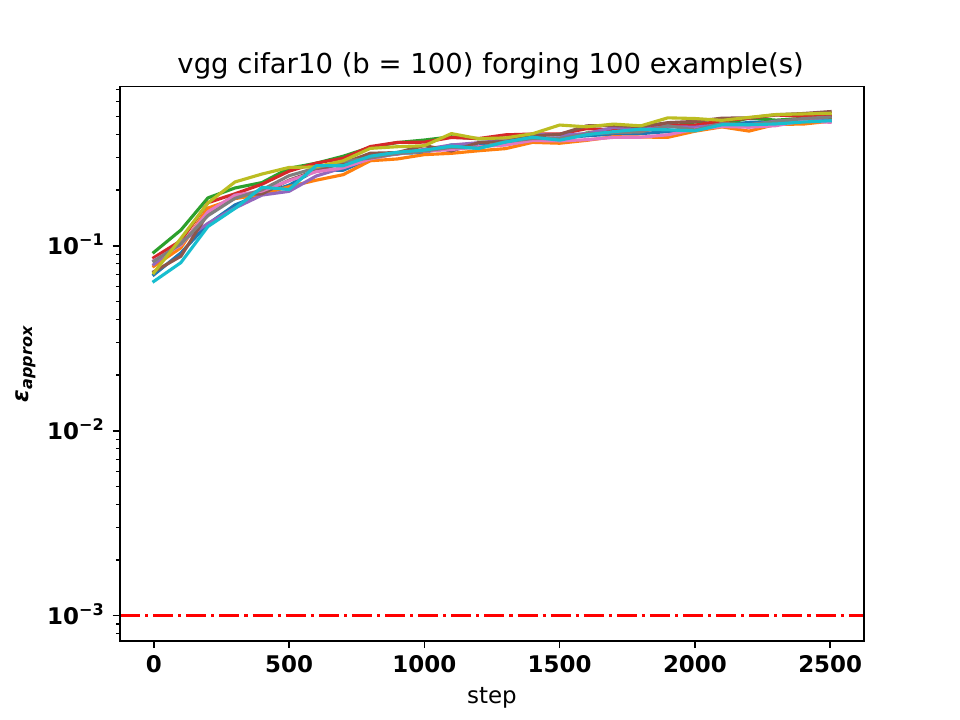 }
    \caption{VGGmini/CIFAR10}
    \label{}
  \end{subfigure}  
  \caption{Data forging approximation errors across stages of training
    where the forging fraction is \(1\) for \texttt{bfloat16}
    training.}
  \label{fig:fac-bf16-ff1}
\end{figure*}

\begin{figure}[htb]
  \centering
  \begin{subfigure}[b]{0.23\textwidth}
    \centering
    \includegraphics[width=\textwidth]{ ./to_forging_section/20250514T145003--measure-attack-across-checkpoints-lenet-mnist__b2000_e40_k1_seed2024__f1_thudi/approx_errors.pdf }
    \caption{LeNet/MNIST}
    \label{lc-lenet-mnist-e-approx}
  \end{subfigure}  
  ~
  \begin{subfigure}[b]{0.23\textwidth}
    \centering
    \includegraphics[width=\textwidth]{ ./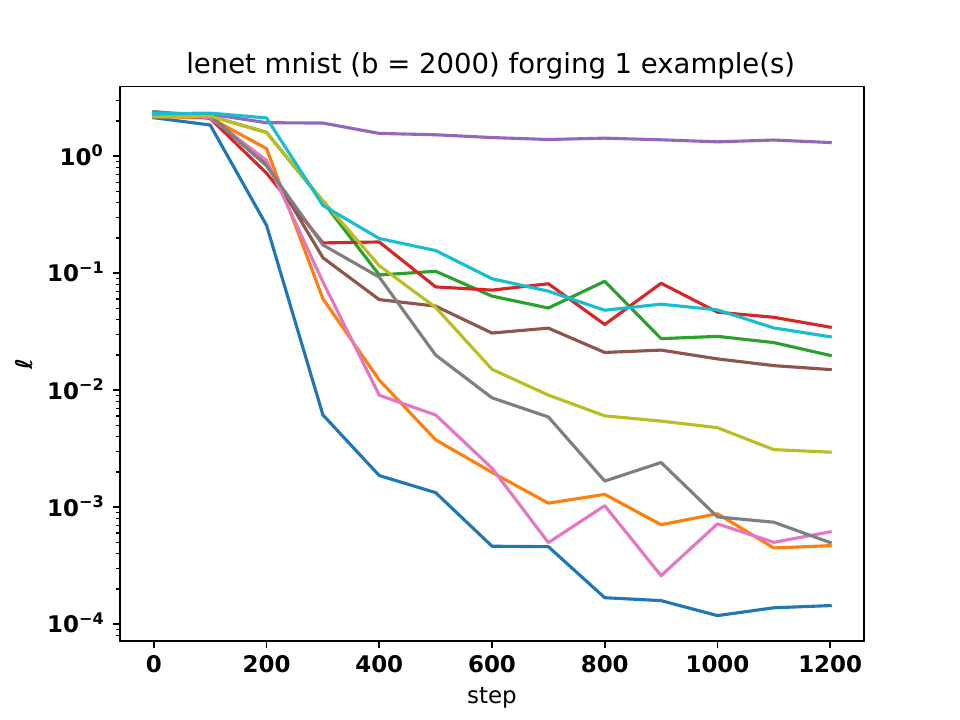 }
    \caption{LeNet/MNIST}
    \label{lc-lenet-mnist-loss}
  \end{subfigure}  

  \begin{subfigure}[b]{0.23\textwidth}
    \centering
    \includegraphics[width=\textwidth]{ ./to_forging_section/20250520T092714--measure-attack-across-checkpoints-t200-imdb__b2000_e10_k1_seed2024__f1_thudi/approx_errors.pdf }
    \caption{Trans/IMDB}
    \label{lc-trans-imdb-e-approx}
  \end{subfigure}  
  ~
  \begin{subfigure}[b]{0.23\textwidth}
    \centering
    \includegraphics[width=\textwidth]{ ./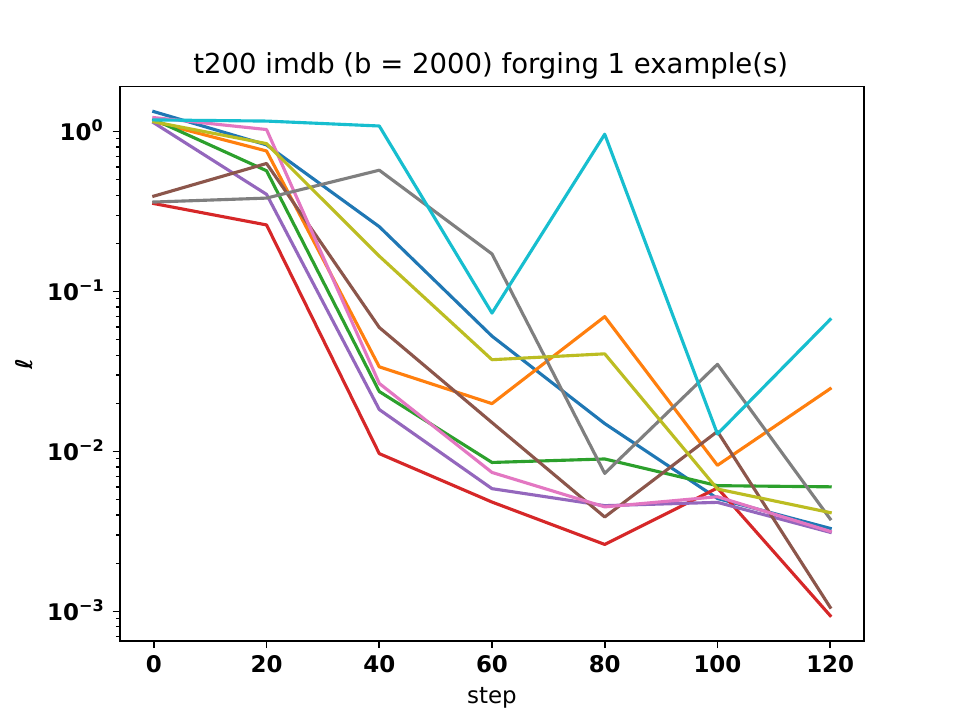 }
    \caption{Trans/IMDB}
    \label{lc-trans-imdb-loss}
  \end{subfigure}
  \caption{The correlation between the approximation error when
    forging a single example across an execution trace and the loss of
    that example at a given checkpoint. }
  \label{lc-lm-ti}
\end{figure}

\subsection{Is Data Forging Stealthy?}

From our experimentation in Section \ref{sec:what-magn-benign}, we
found that the model type, batch size, floating point precision, and
GPU hardware can affect the magnitude of benign reproduction errors,
and in Section \ref{sec:when-data-forging}, we evaluated the
performance data forging attacks. \tbf{From our evaluation of the
  existing data forging attacks, they cannot be classed as
  stealthy}. While factors such as forging fraction, the stage of
training, and floating point precision can affect the performance of
these attacks, overall they are not stealthy. In the case where the
forging fraction is \(1\), data forging attacks consistently produce
approximation errors that are several orders of magnitude larger than
benign reproduction errors. When the forging fraction is \(1/b\), we
found that the performance is inconsistent across training. Depending
on the value of the loss function for the particular example that is
being forged, the approximation error can be close to reproduction
error or be several orders of magnitude larger.

\subsection{Discussion: Choosing An Error Threshold In Practice is Hard}
\label{sec:choosing-an-error}
From our experimentation, we have seen that existing attacks are not
stealthy when the error threshold \(\epsilon\) is chosen to be no larger than
benign reproduction errors. When error thresholds are accurately
chosen via thorough experimentation to determine the magnitude of
reproduction errors, we have seen that data forging attacks are
consistently not stealthy. As a result, we contend that such a method
for determining the model's provenance \emph{i.e.}, its true training
data, is adequate. Existing forging attacks cannot produce forged
execution traces that can fool an informed verifier.

However the approach is not perfect. Determining accurate error
thresholds value can be tricky when it is not possible to perform
cross hardware recomputation. Consider the following scenario that
highlights the difficulty of correctly setting \(\epsilon\) in practice: a
verifier intends to verify a LeNet/MNIST \((b=100)\) execution trace
on a RTX4090 GPU. They set \(\epsilon = 10^{-8}\) for their error threshold
as this is their observed level of reproduction error on their
hardware\footnote{See the corresponding plot in Figure \ref{fig:bs}
  for LeNet/MNIST (\(b=100\)).}. Now suppose that the model owner is
honest and trained their model using a Tesla V100 GPU. From our
experimentation, we found that the reproduction error of honest traces
generated using a V100 and recomputed on a RTX4080 can be as large as
\(10^{-6}\) (See Table \ref{tab:ch-lenetb100_determ}). As a result,
even the honest trace would not pass verification if
\(\epsilon = 10^{-8}\). One of the challenges of building robust verification
schemes is that, as we have seen, reproduction errors can vary by
several orders of magnitude depending on the hardware that the
checkpoints were generated on, as well as the hardware they are
reproduced on.

Within the data forging literature, there is a difference of opinion
on what is an acceptable choice of \(\epsilon\). Baluta et
al. \cite{baluta2023unforgeability} consider only the scenario of a
strict verifier that always selects \(\epsilon = 0\). They assert that given
\emph{(i)} the initial random seed(s) used during training,
\emph{(ii)} the original software, and \emph{(iii)} the original
hardware the computations were performed on, there should be zero
error between recomputations of any given execution trace. They claim
that since the entire computation sequence is deterministic, error at
any step of reproduction of the execution trace must therefore point
to foul play. While it is true that reproduction error can be
eliminated given these conditions\footnote{See Table
  \ref{tab:ch-lenetb100_determ}, where we observed zero reproduction
  error when the original random seeds, hardware, and software were
  used for the recomputation.} and may be necessary in high stakes
applications such as medical machine learning models, we argue that
always demanding \(\epsilon = 0\) is unrealistic in practice, particularly
when the original hardware is not available.

Due to the existence of reproduction error, prior data forging works
\cite{kong2023can,thudi2022necessity,zhangverification} have
recognised the practical necessity of a non-zero threshold during
verification, and consider the \(\epsilon > 0\) scenario. While their attacks
produce non-zero approximation errors, the question of whether a
verifier should accept the levels of approximation error generated by
their attack algorithms is largely overlooked by the prior work. They
provide no analysis of how large reproduction errors can be and
whether their attack algorithms produce approximation errors that are
comparable. In this work, we have conclusively determined that this is
not the case; reproduction errors are orders of magnitude smaller than
the approximation errors produced by existing attacks.

Verifiers are therefore met with a tradeoff when choosing their error
threshold. Seeking to not reject honest traces, and being cognisant of
the fact that there may exist other factors that affect reproduction
errors; they set their threshold to be larger than the reproduction
errors which they have observed through experimentation. However, in
doing so, they allow for data forging attacks, that would have not
been stealthy under a stricter choice of error threshold, to become
stealthy, thus exploiting the verifier's pragmatism. A similar
situation is seen in the area of Differential Privacy (DP)
\cite{dwork2006differential}. A privacy parameter \(\epsilon\) is chosen to
determine how private\footnote{Privacy in DP is quantified by how much
  the output of a function differs for inputs that differ by one
  datapoint. If they do not differ by much, then inferring whether a
  datapoint was or was not part of the input is difficult. This
  difficulty of inference is inversely proportional to the privacy
  parameter \(\epsilon\).} a given function is. Within DP, ideally
\(\epsilon < 1\), however, choosing such a small epsilon can hinder the
performance of the model. We have a similar setup here, where
\(\epsilon = 0\) gives the verifier the most confidence in their
reproductions, however this may not be practical all the time,
necessitating the choice of larger error thresholds. Large error
thresholds \(\epsilon\), larger batch sizes \(b\), mixed precision training,
and smaller forging fractions allow for more \emph{wiggle room},
increasing the likelihood of potential foul play from data
forging. % Certain heuristics may be developed for the choice of error
% threshold \emph{e.g.}, set \(\epsilon\) no larger than \(2\) orders larger
% than observed reproduction errors, however further research into more
% robust, less computationally intensive schemes is necessary
% \cite{fang2023proof,jia2021proof}.

\subsubsection{Data Protection, Differential Privacy, and Data
  Forging}
Data protection legislation such as GDPR aims to limit the potential
harm that may befall an individual from the use of their personal
information. Existing methods to do so are \emph{(i)} to limit the use
of their personal data entirely, or \emph{(ii)} to provide guarantees
that any undue harm will not result from the use of their
data. Regarding the latter, the differential privacy framework has
emerged as the state of the art for provable guarantees within machine
learning, however it is not without its tradeoffs. Namely, providing
meaningful privacy guarantees means poor model performance, resulting
in concessions in order to acheive better utility, such as relying on
``public'' pre-training data \cite{li2021large} to privately fine-tune
LLMs, or the use of empirical defences (e.g. DPSGD with large values
for epsilon) in order to thwart existing attacks
\cite{aerni2024evaluations}. The former forgoes privacy guarantees for
the ``public'' data LLMs are pre-trained on, data whose use often was
not consented to and is liable for recovery from these models
\cite{carlini2021extracting,carlini2023extracting,brown2022does}, and
the latter provides no protection against potentially stronger
attacks.

The emergence of Proof of Learning (PoL) \cite{jia2021proof}, i.e.,
the recomputation of execution traces, and further zero-knowledge
methods of PoL \cite{abbaszadeh2024zero} can be viewed as an alternate
approach to limit harm by providing proof of exactly what a model was
trained on, and more importantly, what data the model was not trained
on. However this is also not without its tradeoffs. In order to thwart
data forging, accurately chosen error thresholds are needed to prevent
the forged traces from being accepted and valid traces from being
rejected. Our work has shown this requires thorough experimentation to
determine accurately. Additionally, closed-source GPU driver code
makes this task harder. Further, proving a model's training data
requires the recomputation of every step which in some use cases may
not be possible e.g., large transformer models on web-scale
datasets. Further research into less computationally intensive schemes
is necessary \cite{fang2023proof,jia2021proof} for PoL to become more
practical, as well as the utilisation of open-source, reproducible
floating point algorithms.

\section{On the Difficulty of Exact Data Forging}
\label{sec:diff-data-forg}

Existing data forging attacks attempt to find, for a given mini-batch
\(B\), a distinct counterpart \(B'\) such that their gradients are
\emph{approximately} the same. In previous sections, we have examined
how similar the gradients of the original and forged mini-batches
produced by existing ``greedy search'' style attacks
\cite{kong2023can,thudi2022necessity,zhangverification} are, and found
that the magnitude of their approximation errors are too large,
allowing us to classify existing data forging attacks as not
stealthy. However, little work has been devoted to \emph{exact} data
forging \emph{i.e.}, developing data forging attacks that produce
forged mini-batches with the \emph{same} gradient. The related subject
of gradient inversion \cite{zhu2019dlg,pan2022exploring,yin2021see,
  zhao2020idlg, yang2023gradient,zhu2021r} asks the reverse question:
\emph{given the gradient \(\nabla_{\theta}\mathcal{L}(B)\), can the original mini-batch
  \(B\) be recovered?} This question is pertinent to the area of
Federated Learning \cite{mcmahan2017communication}, as the gradients
of sensitive information are sent among several parties; and an
analysis of the privacy impact of sharing these gradients is
required. Within data forging, we are concerned with a related
question: \emph{given mini-batch \(B\) and its gradient
  \(\nabla_{\theta}\mathcal{L}(B)\), does there exist another, distinct, mini-batch
  \(B'\) such that
  \(\nabla_{\theta}\mathcal{L}(B) = \nabla_{\theta}\mathcal{L}(B')\)?} Any error when reproducing the
gradients of such a \(B\) and \(B'\) can only ever be at most, as
large as reproduction error, resulting in stealthy data
forging. Consequently, this question is of great importance to an
adversarial model owner looking to perform stealthy data
forging.

A key limitation of prior data forging attacks is the usage of the
finite search space \(D \backslash U\) when searching for replacements. Baluta
et al. \cite{baluta2023unforgeability} analysed the models and
datasets considered by the prior work. They find that the execution
traces considered by the prior work are \emph{unforgeable} using
existing techniques \emph{i.e.}, these attacks cannot produce a forged
mini-batch with, analytically, the same gradient. However, there may
exist training examples outside \(D \backslash U\) that result in zero
error. Previous work by \cite{suliman2024data} has shown that the use
of the gradient matching optimisation algorithm pioneered by
\cite{zhu2019dlg} does not converge to solutions with sufficiently
identical gradients, thus an analytical approach to the problem of
exact data forging is required. This section is devoted to answering
the question of exact data forging \emph{i.e.}, the search for
mini-batches outside \(D\backslash U\) that result in zero error.

Our analysis suggests that while it is possible to construct a
distinct mini-batch \(B'\) that produces the same gradient,
analytically, as \(B\), finding a \(B'\) within the allowed input
space \(\mathcal{X}\) and label space \(\mathcal{Y}\) is a non trivial task. In
particular, exact data forging involves solving a given system of
linear equations, where each solution corresponds to a forged
mini-batch \(B'\). However, of the infinite, distinct, solutions that
may exist, there are currently no known efficient techniques for
finding solutions that correspond to a mini-batch that falls within
the allowed domain \(\mathcal{X} \times \mathcal{Y}\).

\subsection{Exact Data Forging}
Within the classification context, we have a model, \emph{i.e.},
parameterised function
\(f_{\theta} : \mathcal{X} \rightarrow \mathcal{Y}\) and loss function
\(\ell : \mathcal{Y} \times \mathcal{Y} \rightarrow \mathbb{R}\). Exact data forging may be stated as the problem of
finding two distinct mini-batches \(B,B' \in \mathcal{Z}^b\), where
\(\mathcal{Z} := \mathcal{X} \times \mathcal{Y}\), such that
\(\nabla_{\theta} \mathcal{L}(B) = \nabla_{\theta}\mathcal{L}(B')\), where
\(\mathcal{L}(B) = \frac{1}{b} \sum_{(\tbf{x},\tbf{y}) \in B} \ell(f_{\theta}(\tbf{x}),
\tbf{y})\). We define distinct mini-batches as follows, ensuring that
they differ in at least one training example:
\begin{defn}[Distinct mini-batches]
  Two mini-batches \(B,B' \in \mathcal{Z}^b\) are distinct if
  \(\ \exists (\tbf{x},\tbf{y}) \in B\) such that
  \(\forall (\tbf{x}',\tbf{y}') \in B', \ (\tbf{x}, \tbf{y}) \ne (\tbf{x}',
  \tbf{y}')\).
\end{defn}

In the unrestricted setting, the training examples
\((\tbf{x}, \tbf{y}) \in B\) may take any real numbered value i.e.
\(\mathcal{Z} = \mathbb{R}^d \times \mathbb{R}^n\) (\(d\) and \(n\) are the number of input features
and classes respectively). However, depending on the given machine
learning application, the domain may be restricted. Let us consider
the image classification context, as done in the prior work, where
inputs consist of pixel values and the labels are one hot
vectors. There exist a finite number of pixel values \(v\) and \(n\)
possible one hot vectors. Modern image formats have \(v = 256\), so
the domain of possible mini-batches is restricted to the set
\(\mathcal{Z} = \{ \frac{q}{255} : q \in [0..255]\}^d \times \{ u \in \{0,1\}^n :
\sum_{i=1}^n u_i= 1\}\). We see that this domain is finite, and so one
potential exact data forging approach is to enumerate every possible
batch and search exhaustively for the one that produces the required
gradient. However, this approach is computationally intractable for
non-trivial values of \(b, d\) and \(n\) with no additional guarantee
that one may be found\footnote{There exist a total of
  \({256^d \times n \choose b}\) possible mini-batches of size \(b\).} (see
Appendix \ref{sec:tbfbr-force-forg} for how exact data forging via
brute-force fails for small values of \(b, d, v\) and \(n\)). Next we
consider the task of exact data forging fully connected neural
networks first for \(b=1\), and then consider the case for \(b>1\).

\subsection{Case for \(b =1\)}

For \(b=1\) we prove that exact data forging is indeed not possible
for fully connected neural networks. A \(L\)-layer fully connected
neural network is made up of parameters that consist of weights and
biases
\(\theta = [\tbf{W}_1, \tbf{b}_1 \tbf{W}_2, \tbf{b}_2, ..., \tbf{W}_L,
\tbf{b}_L]\) where each
\(\tbf{W}_i \in \mathbb{R}^{d_{i-1} \times d_i}, \tbf{b} \in \mathbb{R}^{d_i}\). The output of
the neural network is given by
\( f_{\theta}(\tbf{x}) = \mbox{softmax}(\tbf{z}_L)\), where
\( \tbf{z}_L = \tbf{W}_L^T \tbf{a}_{L-1} + \tbf{b}_L\),
\(\tbf{a}_i = h(\tbf{z}_i)\),
\(\tbf{z}_i = \tbf{W}_i^T\tbf{a}_{i-1} + \tbf{b}_i\), and \(h\) is the
activation function. Let \(n = d_L\) denote the number of classes,
\(d = d_0\) the number of input features, and
\(\tbf{a}_0 = \tbf{x} \in \mathbb{R}^{d}\). The per example crossentropy loss
\(\ell\) is given by
\(\ell(\hat{\tbf{y}}, \tbf{y}) = -\sum_{i=1}^n y_i \log \hat{y}_i\). Observe
in the \(L=1\) case, the network is equivalent to multi-class logistic
regression.

\begin{thm}[\(b=1\) Data Forging Fully Connected Neural Networks]
  \label{thm:b=1-data-forging}
  Let
  \(\mathcal{Z} := \mathbb{R}^d \times \{ u \in \{0,1\}^n : \sum_{i=1}^n u_i= 1\}\). For any two
  training examples
  \((\tbf{x}, \tbf{y}), (\tbf{x}', \tbf{y}') \in \mathcal{Z}\) if
  \(\nabla_{\theta}\ell(f_{\theta}(\tbf{x}), \tbf{y}) = \nabla_{\theta}\ell(f_{\theta}(\tbf{x}'),
  \tbf{y}')\), then \(\tbf{x} = \tbf{x}'\) and \(\tbf{y} = \tbf{y}'\).
\end{thm}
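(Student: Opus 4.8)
The plan is to push the hypothesized equality of the full parameter gradient backwards through the network, layer by layer, until it lands on the input. Write $\bdelta_i := \partial\ell/\partial\tbf{z}_i \in \mathbb{R}^{d_i}$ for the backpropagated error at layer $i$. Standard backpropagation for this architecture gives $\nabla_{\tbf{b}_i}\ell = \bdelta_i$, $\nabla_{\tbf{W}_i}\ell = \tbf{a}_{i-1}\,\bdelta_i^{\top}$, and, for $i<L$, $\bdelta_i = h'(\tbf{z}_i)\odot(\tbf{W}_{i+1}\bdelta_{i+1})$, while for the output layer the softmax/cross-entropy identity gives $\bdelta_L = f_{\theta}(\tbf{x}) - \tbf{y}$. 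The hypothesis says all of these blocks coincide for the two examples.

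First I would fix the label. Equality of the $\tbf{b}_L$-block gives $f_{\theta}(\tbf{x}) - \tbf{y} = f_{\theta}(\tbf{x}') - \tbf{y}'$. Since the softmax output lies in the open probability simplex, every coordinate of $f_{\theta}(\tbf{x})$ is strictly between $0$ and $1$, so $f_{\theta}(\tbf{x}) - \tbf{y}$ has exactly one strictly negative coordinate, located at the unique index where $\tbf{y}$ equals $1$; the same holds for the primed example. Equality of the two vectors therefore forces those indices to agree, i.e. $\tbf{y} = \tbf{y}'$, and then $f_{\theta}(\tbf{x}) = f_{\theta}(\tbf{x}')$. In particular $\bdelta_L = \bdelta_L'$ is one fixed vector, and it is nonzero (it has a strictly negative coordinate, as just noted).

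Next I would propagate equality of the hidden activations downward. Equality of the $\tbf{W}_L$-block is the rank-one identity $\tbf{a}_{L-1}\bdelta_L^{\top} = \tbf{a}_{L-1}'\bdelta_L^{\top}$; choosing a coordinate $k$ with $(\bdelta_L)_k \ne 0$ and comparing $k$-th columns yields $\tbf{a}_{L-1} = \tbf{a}_{L-1}'$. Now descend by induction: assuming $\tbf{a}_i = \tbf{a}_i'$ and $\bdelta_{i+1} = \bdelta_{i+1}'$, equality of the $\tbf{b}_i$-block gives $\bdelta_i = \bdelta_i'$, and \emph{provided} $\bdelta_i \ne 0$, the same rank-one trick on the $\tbf{W}_i$-block gives $\tbf{a}_{i-1} = \tbf{a}_{i-1}'$. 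Since $\tbf{a}_0 = \tbf{x}$, reaching $i = 1$ gives $\tbf{x} = \tbf{x}'$. For $L = 1$ (which is exactly multi-class logistic regression) the argument is immediate and unconditional: $\bdelta_1 = \bdelta_L \ne 0$, so $\tbf{x}\bdelta_L^{\top} = \tbf{x}'\bdelta_L^{\top}$ already forces $\tbf{x} = \tbf{x}'$.

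The step I expect to be the real obstacle is keeping the backpropagated error $\bdelta_i$ from vanishing at an intermediate layer, because $\bdelta_i = h'(\tbf{z}_i)\odot(\tbf{W}_{i+1}\bdelta_{i+1})$ can be zero even when $\bdelta_{i+1} \ne 0$ --- for instance if $\tbf{z}_i$ sits in a flat region of $h$ (a ``dead'' ReLU unit) or if $\bdelta_{i+1}$ lies in the kernel of $\tbf{W}_{i+1}$ --- and then the $\tbf{W}_i$-block is identically $\tbf{0}$ and reveals nothing about $\tbf{a}_{i-1}$. To close this gap I would assume $h$ is strictly increasing, so that $h' > 0$ everywhere (ruling out the dead-unit case) and $h$ is injective, which upgrades $\tbf{a}_i = \tbf{a}_i'$ to $\tbf{z}_i = \tbf{z}_i'$; reading the latter as $\tbf{W}_i^{\top}(\tbf{a}_{i-1} - \tbf{a}_{i-1}') = \tbf{0}$ and invoking full column rank of $\tbf{W}_i$ then carries the induction through layers where $\bdelta_i$ happens to vanish. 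Without hypotheses of this kind the statement can genuinely fail for $L \ge 2$ with ReLU, so I would either prove it for strictly monotone activations (with full-rank weights), or present the unconditional version for logistic regression and flag the hidden-layer case as requiring these extra conditions.
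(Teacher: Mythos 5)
Your argument is correct and reaches the same conclusion as the paper, but by a somewhat different route, and along the way you put your finger on a real weakness in the paper's own proof. The paper splits the claim into two lemmas: Lemma~\ref{lem:1} uses only the \emph{first}-layer blocks, writing $\nabla_{\tbf{W}_1}\ell = \tbf{x}\bdelta_1^{\top}$ and $\nabla_{\tbf{b}_1}\ell = \bdelta_1$ and cancelling the common $\bdelta_1$ from the outer product to get $\tbf{x}=\tbf{x}'$ in one step, while Lemma~\ref{lem:2} handles the label by deriving $\tbf{y} = (v-v')\tbf{u} + \tbf{y}'$ and specialising to $v=v'=1$ for one-hot labels. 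You instead recover the label directly from the sign pattern of $\bdelta_L = f_\theta(\tbf{x})-\tbf{y}$ (exactly one negative coordinate, at the one-hot index), which is cleaner than the paper's algebraic identity, and you recover the input by descending $\tbf{a}_{L-1}=\tbf{a}_{L-1}',\,\tbf{a}_{L-2}=\tbf{a}_{L-2}',\dots$ down to $\tbf{a}_0=\tbf{x}$. The descent costs you $L$ outer-product cancellations instead of one, but by the recursion $\bdelta_i = \diag(\tbf{h}_i)\tbf{W}_{i+1}\bdelta_{i+1}$ the condition $\bdelta_1\neq 0$ already forces $\bdelta_i\neq 0$ for all $i$, so the two routes ultimately rest on exactly the same non-vanishing hypothesis.

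That hypothesis is the point you rightly refuse to sweep under the rug. The paper's Lemma~\ref{lem:1} asserts that $\tbf{x}\tbf{u}^{\top}=\tbf{x}'\tbf{u}^{\top}$ ``is only satisfied when $\tbf{x}=\tbf{x}'$,'' which is false when $\tbf{u}=\bdelta_1=\tbf{0}$ --- and for a ReLU network with $L\ge 2$ this case genuinely occurs: any two distinct inputs for which every first-layer pre-activation is negative produce $\tbf{a}_1=\tbf{0}$, hence identical forward passes from layer~1 onward and identical (partly zero) gradients in every block, giving a counterexample to the proposition as literally stated. Your proposed fixes (strictly monotone activation with $h'>0$, or full column rank of $\tbf{W}_1$ combined with injectivity of $h$, or restricting to the $L=1$ logistic-regression case where $\bdelta_1=\bdelta_L\neq\tbf{0}$ unconditionally) are exactly the kind of additional hypotheses the paper would need to make Lemma~\ref{lem:1}, and hence the proposition, airtight for $L\ge 2$. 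So your proof is not missing anything relative to the paper's; it is more careful about the one step the paper takes for granted.
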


Proposition \ref{thm:b=1-data-forging} states that no two distinct
training examples with one hot labels can produce the same gradient,
an intuitive result and one that may point towards the difficulty of
exact data forging. Further, proposition \ref{thm:b=1-data-forging}
proves the stronger result of impossibility of exact data forging when
\(b=1\) for distinct inputs in the infinite domain
\(\tbf{x}, \tbf{x}' \in \mathbb{R}^d\), not just the finite set of valid
discrete-valued images (see appendix \ref{sec:proof-proposition-1} for
the full proof). However, models are typically trained with a batch
size \(b \gg 1\), therefore an investigation of the possibility of exact
data forging within this regime is required.

\subsection{Case of \(b > 1\)}

In contrast to the previous setting, we prove that there can exist
distinct mini-batches of size \(b>1\) that produce the same gradient.

\begin{thm}[\(b >1\) Data Forging Fully Connected Neural Networks]
  \label{thm:b1-data-forging}
  Given a fully connected neural network and mini-batch
  \( B = \{(\tbf{x}^{(k)},\tbf{y}^{(k)})\}_{k=1}^b\), if
  \(db > d_1(d + b)\), where \(d_1\) is the size of the first hidden
  layer, then there exists an infinite number of perturbation matrices
  \(\tbf{P} \in \mathbb{R}^{d\times b}\) such that for the distinct mini-batch
  \(B' = \{(\tbf{x}^{(k)} + [\tbf{P}]^k, \tbf{y}^{(k)})\}_{k=1}^b\),
  \(\nabla_{\theta}\mathcal{L}(B) = \nabla_{\theta}\mathcal{L}(B')\).
\end{thm}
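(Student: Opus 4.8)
The plan is to exploit the structural fact that, in an $L$-layer fully connected network, the input of the $k$-th example enters the per-example loss only through the first hidden pre-activation $\tbf{z}_1^{(k)} = \tbf{W}_1^{\top}\tbf{x}^{(k)} + \tbf{b}_1$, and it appears \emph{directly} in the parameter gradient (not merely through downstream quantities) only in the block $\nabla_{\tbf{W}_1}\mathcal{L}$, where $\nabla_{\tbf{W}_1}\mathcal{L}(B) = \tfrac1b\sum_{k=1}^b \tbf{x}^{(k)}(\bdelta^{(k)})^{\top}$ with $\bdelta^{(k)} := \nabla_{\tbf{z}_1^{(k)}}\ell(f_\theta(\tbf{x}^{(k)}),\tbf{y}^{(k)})$ the back-propagated error at layer $1$. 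So I keep every label fixed and look for a perturbation $\tbf{P}=[\tbf{p}^{(1)}\mid\cdots\mid\tbf{p}^{(b)}]\in\mathbb{R}^{d\times b}$, with $[\tbf{P}]^k=\tbf{p}^{(k)}$, satisfying two homogeneous linear conditions: (i) $\tbf{W}_1^{\top}\tbf{p}^{(k)}=\tbf{0}$ for every $k$, and (ii) $\sum_{k=1}^b \tbf{p}^{(k)}(\bdelta^{(k)})^{\top}=\tbf{0}_{d\times d_1}$.

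First I would check that these conditions suffice. Condition (i) makes $\tbf{z}_1^{(k)}$ invariant under $\tbf{x}^{(k)}\mapsto\tbf{x}^{(k)}+\tbf{p}^{(k)}$; an easy induction along the forward pass then shows every later $\tbf{z}_i^{(k)}$, every $\tbf{a}_i^{(k)}$ with $i\ge1$, the network output, and hence the per-example loss are unchanged, and an induction along the backward pass shows every $\bdelta_i^{(k)}$ with $i\ge1$ is unchanged; consequently $\nabla_{\tbf{W}_i}\mathcal{L}$ for $i\ge2$ and $\nabla_{\tbf{b}_i}\mathcal{L}$ for all $i$ — in fact every gradient block other than $\nabla_{\tbf{W}_1}\mathcal{L}$ — coincide for $B$ and $B'$. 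For the remaining block, since each $\bdelta^{(k)}$ is already known to be unchanged, $\nabla_{\tbf{W}_1}\mathcal{L}(B')-\nabla_{\tbf{W}_1}\mathcal{L}(B)=\tfrac1b\sum_k \tbf{p}^{(k)}(\bdelta^{(k)})^{\top}$, which vanishes by (ii). Hence any such $\tbf{P}$ gives $\nabla_{\theta}\mathcal{L}(B)=\nabla_{\theta}\mathcal{L}(B')$. Then comes the counting step, which is the heart of the argument: (i) imposes $d_1$ scalar equations per example, i.e. $bd_1$ in all, and (ii) imposes $dd_1$ equations, so $\tbf{P}$ ranges over the kernel of a homogeneous linear system of $d_1(b+d)$ equations in its $db$ entries; when $db>d_1(d+b)$ this kernel has dimension at least $db-d_1(d+b)\ge1$, so it contains a nonzero $\tbf{P}_0$ and therefore the whole line $\{t\tbf{P}_0:t\in\mathbb{R}\}$ of solutions.

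Finally I would settle distinctness in the sense of the Definition: a perturbed example $(\tbf{x}^{(k)}+t[\tbf{P}_0]^k,\tbf{y}^{(k)})$ can coincide with an original $(\tbf{x}^{(j)},\tbf{y}^{(j)})$ only if $t[\tbf{P}_0]^k=\tbf{x}^{(j)}-\tbf{x}^{(k)}$, which for each pair $(k,j)$ with $[\tbf{P}_0]^k\ne\tbf{0}$ has at most one solution $t$; excluding these finitely many values of $t$, together with $t=0$, leaves infinitely many $\tbf{P}=t\tbf{P}_0$ for which $B'\ne B$, proving the claim. The main obstacle is not any single computation but making the localization claim watertight — verifying that fixing $\tbf{z}_1^{(k)}$ really freezes \emph{every} gradient block except $\nabla_{\tbf{W}_1}\mathcal{L}$, with no residual dependence on the raw inputs — and then tallying the two families of constraints so that the dimension count reproduces exactly the hypothesis $db>d_1(d+b)$.
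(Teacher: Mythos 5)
Your proof is correct and follows essentially the same route as the paper's: you impose the identical pair of homogeneous conditions $\tbf{W}_1^{\top}\tbf{P}=\tbf{0}$ and $\tbf{P}\tbf{D}_1=\tbf{0}$, verify by forward/backward induction that they freeze every gradient block, and obtain the same dimension count $db-d_1(d+b)\ge 1$ (the paper merely packages the count via the vec trick and a stacked Kronecker system). Your closing observation that only finitely many scalings $t$ of a nonzero kernel element can make a perturbed example collide with an original one is a small but welcome refinement of the distinctness claim that the paper leaves implicit.
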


Proposition \ref{thm:b1-data-forging} proves the existence of
perturbations that can be applied to the original training examples
\((\tbf{x},\tbf{y}) \in B\) such that the same gradient is still
preserved (see Fig. \ref{fig:forged-analytic} for an example).

\begin{figure}[htb]
  \centering
  \begin{subfigure}[b]{0.055\textwidth}
    \centering
    \includegraphics[width=\textwidth]{./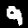}
  \end{subfigure}
  ~
  \begin{subfigure}[b]{0.055\textwidth}
    \centering
    \includegraphics[width=\textwidth]{./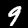}
  \end{subfigure}
  ~
  \begin{subfigure}[b]{0.055\textwidth}
    \centering
    \includegraphics[width=\textwidth]{./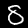}
  \end{subfigure}
  ~ \dots
  \begin{subfigure}[b]{0.055\textwidth}
    \centering
    \includegraphics[width=\textwidth]{./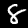}
  \end{subfigure}
  \vfill
  \begin{subfigure}[b]{0.055\textwidth}
    \centering
    \includegraphics[width=\textwidth]{./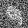}
  \end{subfigure}
  ~
  \begin{subfigure}[b]{0.055\textwidth}
    \centering
    \includegraphics[width=\textwidth]{./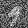}
  \end{subfigure}
  ~
  \begin{subfigure}[b]{0.055\textwidth}
    \centering
    \includegraphics[width=\textwidth]{./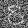}
  \end{subfigure}
  ~ \dots
  \begin{subfigure}[b]{0.055\textwidth}
    \centering
    \includegraphics[width=\textwidth]{./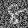}
  \end{subfigure}

  \caption{\tbf{Bottom:} The original mini-batch \(B\) consisting of
    $b = 32$ examples from MNIST. \tbf{Top:}. A forged mini-batch
    \(B'\) with an approximation error of \(\approx 10^{-7}\). In this
    setting, the model is an \(L=2\) ReLU neural network with 8 hidden
    units. Similar mini-batches were constructed by Pan et
    al. \cite{pan2022exploring} within the context of gradient
    inversion.}
  \label{fig:forged-analytic}
\end{figure}

% Data forging via mini-batch perturbation improves upon the matrix
% sampling approach detailed previously in two ways: \emph{(i)} it works
% for not just logistic regression models but for arbitrary depth and
% width neural networks, and \emph{(ii)} it produces forged mini-batches
% with valid labels.

Exact data forging can be performed using this \emph{mini-batch
  perturbation} approach.  However, while the gradients match, the
forged input examples \(\tbf{x}'^{(k)}\) are not guaranteed to be
members of the set of valid images
\(\{ \frac{q}{255}, q \in [0..255]\}^d\); the applied perturbation may
result in a mini-batch that is outside the allowed input domain. While
we prove the existence of an infinite number of possible perturbations
that may applied to the original mini-batch that preserve the
gradient, we have no guarantee that a perturbation that preserves
image validity exists, even for relatively simple linear
models. Efficiently finding such a perturbation is a non trivial task,
and is currently the main barrier to successful exact data forging.

% In order to ensure this does not happen, one
% needs to find one of the infinitely many possible perturbations
% \tbf{i.e.}, solutions to equation \eqref{eq:p-eq}, that preserves
% image validity. This requires choosing the correct arbitrary column
% matrix \(\tbf{F}\). Efficiently finding this matrix is a non trivial
% task, and is currently the main barrier to successful exact data
% forging. In fact, the question of whether such matrices may exist is
% also still unresolved, even for relatively simple linear models.

% Our analysis takes the first step towards understanding exact data
% forging. Within the unrestricted mini-batch domain, we prove that
% exact data forging is indeed possible and provide methods of
% constructing forged mini-batches with approximation errors that are
% indistinguishable from reproduction errors. Further research and
% analysis are required in order to determine whether data forging
% within these restricted domains is possible. Given the difficulty of
% this task, we conjecture that it may be impossible.

\subsection{Exact Data Forging for \(L=1\) Networks}

Recall that for a \(L=1\) network with single parameter matrix
\(\tbf{W}\), the gradient of crossentropy loss is given by
\(\nabla_{\tbf{W}} \mathcal{L}(B) = \frac{1}{b}
\tbf{X}(f_{\tbf{W}}(\tbf{X})\mbox{diag}(\tbf{v}) - \tbf{Y})^T\), where
\(\tbf{v}_k = \sum_{j=1}^n \tbf{y}^{(k)}_j\), and that exact data forging
reduces to the problem of finding two matrices
\(\tbf{X}' \in \mathbb{R}^{d\times b}, \tbf{Y}' \in \mathbb{R}^{n\times b}\) where
\(\tbf{X}' \ne \tbf{X}\) such that
\(\tbf{X}(f_{\tbf{W}}(\tbf{X})\mbox{diag}(\tbf{v}) - \tbf{Y})^T
=\tbf{X}'(f_{\tbf{W}}(\tbf{X}')\mbox{diag}(\tbf{v}') - \tbf{Y}')^T\).

\begin{thm}[\(L=1\) Data Forging]
  \label{thm:data-forging-with-2}  
  If \(b > \mbox{rank}(\nabla_{\tbf{W}} \mathcal{L}(B))\), then there exists an
  infinite number of distinct mini-batches
  \(B' \subset \mathbb{R}^d \times \mathbb{R}^n\) of size \(b\) such that
  \(\nabla_{\tbf{W}} \mathcal{L}(B') = \nabla_{\tbf{W}} \mathcal{L}(B)\).
\end{thm}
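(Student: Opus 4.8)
The plan is to recast exact forging as a bilinear matrix equation and then solve it by rank factorization. First I would use the $L=1$ identity $\nabla_{\mathbf{W}}\mathcal{L}(B) = \tfrac{1}{b}\mathbf{X}\mathbf{R}^{T}$ with residual $\mathbf{R} := f_{\mathbf{W}}(\mathbf{X})\mathrm{diag}(\mathbf{v}) - \mathbf{Y}\in\mathbb{R}^{n\times b}$ and record two structural facts. First, each column of $\mathbf{R}$ sums to zero (because $\mathbf{1}_n^{T}\mathbf{f}^{(k)} = 1$ and $\mathbf{1}_n^{T}\mathbf{y}^{(k)} = \mathbf{v}_k$), so $\mathbf{1}_n^{T}\mathbf{R} = \mathbf{0}$ and consequently $\nabla_{\mathbf{W}}\mathcal{L}(B)\,\mathbf{1}_n = \mathbf{0}$, i.e.\ the rows of the gradient sum to zero. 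Second, \emph{any} $\mathbf{X}'\in\mathbb{R}^{d\times b}$ together with \emph{any} $\mathbf{R}'\in\mathbb{R}^{n\times b}$ having zero column sums is realized by the label matrix $\mathbf{Y}' := f_{\mathbf{W}}(\mathbf{X}') - \mathbf{R}'$ (which is self-consistent with $\mathbf{v}' = \mathbf{1}_b$), and conversely any realizable residual is forced to have zero column sums. Hence exact forging of $B$ is equivalent to exhibiting a pair $(\mathbf{X}',\mathbf{R}')$, distinct from the pair induced by $B$, with $\mathbf{1}_n^{T}\mathbf{R}' = \mathbf{0}$ and $\mathbf{X}'\mathbf{R}'^{T} = \mathbf{X}\mathbf{R}^{T}$.

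To solve this equation I would take a rank factorization $\mathbf{X}\mathbf{R}^{T} = \mathbf{A}\mathbf{B}^{T}$ with $\mathbf{A}\in\mathbb{R}^{d\times r}$, $\mathbf{B}\in\mathbb{R}^{n\times r}$ of full column rank $r := \mathrm{rank}(\nabla_{\mathbf{W}}\mathcal{L}(B))$; the first fact forces $\mathbf{1}_n^{T}\mathbf{B} = \mathbf{0}$. Since $b > r$, choose any full-row-rank $\mathbf{C}\in\mathbb{R}^{r\times b}$ with right inverse $\mathbf{C}^{+}$ and set $\mathbf{X}' := \mathbf{A}\mathbf{C}$, $\mathbf{R}' := \mathbf{B}(\mathbf{C}^{+})^{T}$. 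Then $\mathbf{X}'\mathbf{R}'^{T} = \mathbf{A}(\mathbf{C}\mathbf{C}^{+})\mathbf{B}^{T} = \mathbf{A}\mathbf{B}^{T}$ and $\mathbf{1}_n^{T}\mathbf{R}' = (\mathbf{1}_n^{T}\mathbf{B})(\mathbf{C}^{+})^{T} = \mathbf{0}$, so the realized mini-batch $B' := \big(\mathbf{X}',\, f_{\mathbf{W}}(\mathbf{X}') - \mathbf{R}'\big)\subset\mathbb{R}^d\times\mathbb{R}^n$ has $\nabla_{\mathbf{W}}\mathcal{L}(B') = \nabla_{\mathbf{W}}\mathcal{L}(B)$. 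Letting $\mathbf{C}$ range over the uncountable set of full-row-rank $r\times b$ matrices produces uncountably many such $B'$; because $\mathbf{A}$ is injective, distinct $\mathbf{C}$ give distinct $\mathbf{X}' = \mathbf{A}\mathbf{C}$ and hence distinct $B'$, while $\mathbf{A}\mathbf{C} = \mathbf{X}$ can hold for at most one $\mathbf{C}$, so all but that one choice yields $B'\neq B$ (in fact with $\mathbf{X}'\neq\mathbf{X}$), giving the claimed infinitude.

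The hard part here is the softmax nonlinearity: one cannot perturb $\mathbf{X}$ with $\mathbf{Y}$ held fixed and expect the gradient to survive, since $f_{\mathbf{W}}(\mathbf{X}')$ — and therefore $\mathbf{R}'$ — changes uncontrollably, so a naive ``perturb inside $\ker\mathbf{R}^{T}$'' attempt breaks. The device that sidesteps this is the second fact: the label matrix contributes a completely free additive term to the residual, so the only genuine model-imposed constraint is the zero-column-sum condition, which the gradient's own zero-row-sum structure (the first fact) exactly accommodates. After that the argument is pure linear algebra, and the hypothesis $b > \mathrm{rank}(\nabla_{\mathbf{W}}\mathcal{L}(B))$ is precisely what guarantees that the middle factor $\mathbf{C}$ admits a positive-dimensional family of full-row-rank choices (equivalently, a nontrivial kernel), i.e.\ enough slack to move away from $B$. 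Checking $B'\neq B$ in the sense of Definition~1 and that the constructed family is infinite are routine genericity arguments.
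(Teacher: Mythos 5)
Your proof is correct, and its core reduction coincides with the paper's: both arguments rest on Lemma~\ref{lem:sumzero} (the error/residual matrix has zero row sums, equivalently the rows of the gradient sum to zero) together with the observation that the label matrix is a free additive parameter, so that any residual with the right zero-sum structure is realizable and exact forging collapses to the bilinear problem \(\mathbf{X}'\mathbf{R}'^{T} = \mathbf{X}\mathbf{R}^{T}\) subject to \(\mathbf{1}_n^{T}\mathbf{R}' = \mathbf{0}\). Where you genuinely diverge is in how that bilinear equation is solved. The paper's error-matrix-sampling argument fixes the residual first (zero row sums, \(\mbox{rank}(\mathbf{D}) = \mbox{rank}(\mathbf{G})\)) and then solves \(\mathbf{D}^{T}\mathbf{x}'_i = b\mathbf{g}_i\) row by row, asserting consistency via \(\mbox{rank}(\mathbf{D}^{T}) = \mbox{rank}(\mathbf{D}^{T}\,|\,b\mathbf{g}_i)\); strictly speaking that consistency requires the row space of the sampled \(\mathbf{D}\) to \emph{contain} the row space of \(\mathbf{G}\), which equality of ranks alone does not guarantee. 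Your rank-factorization device \(\mathbf{X}'=\mathbf{A}\mathbf{C}\), \(\mathbf{R}'=\mathbf{B}(\mathbf{C}^{+})^{T}\) makes consistency automatic (\(\mathbf{C}\mathbf{C}^{+}=\mathbf{I}_r\)) and exhibits an explicit uncountable family in one stroke, so it is the tighter write-up; the paper's version is closer to an implementable sampling procedure and makes visible that the forged labels need not match the originals. Two small loose ends on your side, neither fatal: distinctness in the sense of Definition 1 requires excluding the finitely many \(\mathbf{C}\) for which \(\mathbf{A}\mathbf{C}\) is a column permutation of \(\mathbf{X}\), not merely \(\mathbf{A}\mathbf{C}\neq\mathbf{X}\); and the slack you attribute to \(b>\mbox{rank}(\mathbf{G})\) is not quite where the infinitude comes from, since invertible square middle factors would already give infinitely many real-valued solutions when \(b=r\) --- the hypothesis is simply inherited from the statement rather than sharply used.
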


Proposition \ref{thm:data-forging-with-2} (see appendix
\ref{sec:tbfpr-prop-refthm:d} for the proof) extends the mini-batch
perturbation approach for \(L=1\) networks, showing that there can
exist two distinct mini-batches with the same gradient that are not
necessarily a perturbation of the other, and do not share the same
labels. We prove this with the following steps:

Let matrix
\(\tbf{D} = (f_{\tbf{W}}(\tbf{X})\mbox{diag}(\tbf{v}) - \tbf{Y})^T\)
be the error matrix of the model on input \(\tbf{X}\). Our first
method of exact data forging exploits a fundamental property of this
matrix (for proof see Appendix \ref{sec:tbfpr-lemma-refl}).

\begin{lem}
  \label{lem:sumzero}
  Let
  \(\tbf{D} = (f_{\tbf{W}}(\tbf{X})\mbox{diag}(\tbf{v}) -
  \tbf{Y})^T\). For any two matrices
  \(\tbf{X} \in \mathbb{R}^{d\times b}\) and
  \(\tbf{Y} \in \mathbb{R}^{n\times b}\), we have that
  \(\sum_{j=1}^n \tbf{D}_{ij} = 0\), \(\forall i, \ 1 \le i \le b\).
\end{lem}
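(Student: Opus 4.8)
The plan is to unfold the definition of \(\tbf{D}\) entry by entry and observe that each of its rows is the difference of two \(n\)-vectors having the same coordinate sum. First I would fix a row index \(i\) with \(1 \le i \le b\). Since \(\tbf{D} = (f_{\tbf{W}}(\tbf{X})\diag(\tbf{v}) - \tbf{Y})^T\), the transpose sends the \(i\)-th column of \(f_{\tbf{W}}(\tbf{X})\diag(\tbf{v}) - \tbf{Y}\) to the \(i\)-th row of \(\tbf{D}\), and right-multiplication by \(\diag(\tbf{v})\) scales column \(i\) of \(f_{\tbf{W}}(\tbf{X})\) by \(\tbf{v}_i\); hence explicitly
\[
\tbf{D}_{ij} = \tbf{v}_i\, f_{\tbf{W}}(\tbf{x}^{(i)})_j - \tbf{y}^{(i)}_j, \qquad 1 \le j \le n,
\]
where \(\tbf{x}^{(i)}\) and \(\tbf{y}^{(i)}\) denote the \(i\)-th columns of \(\tbf{X}\) and \(\tbf{Y}\).

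Next I would sum over the class index \(j\):
\[
\sum_{j=1}^n \tbf{D}_{ij} = \tbf{v}_i \sum_{j=1}^n f_{\tbf{W}}(\tbf{x}^{(i)})_j - \sum_{j=1}^n \tbf{y}^{(i)}_j .
\]
The first inner sum equals \(1\) because \(f_{\tbf{W}}(\tbf{x}^{(i)}) = \mbox{softmax}(\tbf{W}^T\tbf{x}^{(i)})\) is a probability vector whose entries sum to one. The second inner sum equals \(\tbf{v}_i\) by the very definition \(\tbf{v}_k := \sum_{j=1}^n \tbf{y}^{(k)}_j\) stated alongside the \(L=1\) gradient formula. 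Therefore \(\sum_{j=1}^n \tbf{D}_{ij} = \tbf{v}_i - \tbf{v}_i = 0\), and since \(i\) was arbitrary the claim holds for every \(i\) with \(1 \le i \le b\).

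I expect essentially no obstacle here; the only point requiring care is the index bookkeeping through the transpose, i.e.\ confirming that the rows of \(\tbf{D}\) are indexed by the \(b\) training examples and the columns by the \(n\) classes, so that ``summing over \(j\)'' means summing a single example's error vector over its class coordinates. Note the identity uses nothing about \(\tbf{W}\), nor whether the labels \(\tbf{y}^{(k)}\) are one-hot or even nonnegative; it relies only on softmax outputs summing to \(1\) and on the definition of \(\tbf{v}\). This is precisely why it is useful downstream: it shows every row of any realizable error matrix \(\tbf{D}\) lies in the hyperplane \(\{\tbf{r}\in\mathbb{R}^n : \sum_{j=1}^n r_j = 0\}\), a codimension-one linear constraint that any forged counterpart must respect as well.
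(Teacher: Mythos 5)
Your proof is correct and follows essentially the same route as the paper's: both expand the \(i\)-th row of \(\tbf{D}\) as the scaled softmax output minus the label column, and then use that the softmax entries sum to \(1\) together with the definition \(\tbf{v}_i = \sum_j \tbf{y}^{(i)}_j\) to cancel the two terms. Your version is slightly more explicit about the index bookkeeping through the transpose and the role of \(\diag(\tbf{v})\), but the argument is the same.
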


Using this property, we may devise the following \emph{error matrix
  sampling } approach towards exact data forging for \(L=1\)
networks. Given mini-batch \(B\) of size \(b\) and gradient
\(\nabla_{\tbf{W}}\mathcal{L}(B)\) of rank \(r\): \emph{(i)} sample any matrix
\(\tbf{D} \in \mathbb{R}^{b\times n}\) of such that
\(\sum_{j=1}^n \tbf{D}_{ij} = 0\) and \(\mbox{rank}(\tbf{D}) \ge r\),
\emph{(ii)} solve for \(\tbf{X}'\) the matrix equation
\(\tbf{X}'\tbf{D} = b\nabla_{\tbf{W}}\mathcal{L}(B)\), \emph{(iii)} set
\(\tbf{Y}' := f_{\tbf{W}}(\tbf{X}')\mbox{diag}(\tbf{v}) - \tbf{D}^T\),
and \emph{(iv)} construct forged mini-batch
\(B' = \{(\tbf{x}'^{(k)},\tbf{y}'^{(k)})\}_{k=1}^b\) where each
\(\tbf{x}'^{(k)}, \tbf{y}'^{(k)}\) are the \(k\)-th column of
\(\tbf{X}',\tbf{Y}'\) respectively.

% Douglas West appendix combinartorial proof graph theory
Figure \ref{fig:forged-ems} gives two such examples of distinct
mini-batches that produce the same gradient for both the MNIST and
CIFAR10 datasets for an \(L=1\) network. The above error matrix
sampling approach is not restricted to constructing forged
mini-batches with the same labels as the original, proving the
existence of distinct mini-batches with distinct labels can produce
the same gradient. The infinite number of possible error matrices
\(\tbf{D}\) as well as the potentially infinite solutions to equation
\(\tbf{X}'\tbf{D} = b\nabla_{\tbf{W}} \mathcal{L}(B)\) capture all the distinct
mini-batches \(B' \subset \mathbb{R}^d \times \mathbb{R}^n\) such that
\(\nabla_{\tbf{W}} \mathcal{L}(B) = \nabla_{\tbf{W}} \mathcal{L}(B')\). However, analogous to the
mini-batch perturbation approach, choosing the correct error matrix
\(\tbf{D}\) such that the mini-batch \(B'\) that is constructed from
the resulting \(\tbf{X}'\) and \(\tbf{Y}'\) falls within the required
domain is a non trivial, computationally infeasible task.

Our analysis takes the first step towards understanding exact data
forging. Within the unrestricted mini-batch domain, we prove that
exact data forging is indeed possible and provide methods of
constructing forged mini-batches with approximation errors that are
indistinguishable from reproduction errors. Further research and
analysis are required in order to determine whether data forging
within these restricted domains is possible. Given the difficulty of
this task, we conjecture that it may not be computationally feasible.

% However, for a practical
% application of any data forging attack, this is not enough to pass
% verification. 

% In particular, while we show that distinct mini-batches
% that produce the same gradient can exist, we have no guarantee that
% there exists a distinct batch \(B'\) that contains training examples
% consisting of pixel values of the form
% \(\frac{q}{255}, 0 \le q \le 255\) and one hot training labels. Finding
% such a mini-batch requires sampling the correct \(\tbf{D}\) matrix
% such that the mini-batch \(B'\) constructed from the resulting
% \(\tbf{X}'\) and \(\tbf{Y}'\) falls within the required domain. Doing
% so is a non trivial task and therefore, data forging via error matrix
% sampling is a difficult task even for relatively simple linear
% models. In the next section, we improve upon this approach by
% developing an exact data forging method that produces valid one hot
% labels, and additionally, may be applied to fully connected neural
% networks. However, we run into the same problem, where forged
% mini-batches cannot be guaranteed to exist within the required domain.

\begin{figure}[htb]
  \centering
  \begin{subfigure}[b]{0.055\textwidth}
    \centering
    \includegraphics[width=\textwidth]{./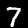}
  \end{subfigure}
  ~
  \begin{subfigure}[b]{0.055\textwidth}
    \centering
    \includegraphics[width=\textwidth]{./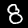}
  \end{subfigure}
  ~
  \begin{subfigure}[b]{0.055\textwidth}
    \centering
    \includegraphics[width=\textwidth]{./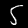}
  \end{subfigure}
  ~ \dots
  \begin{subfigure}[b]{0.055\textwidth}
    \centering
    \includegraphics[width=\textwidth]{./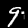}
  \end{subfigure}
  \vfill
  \begin{subfigure}[b]{0.055\textwidth}
    \centering
    \includegraphics[width=\textwidth]{./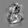}
  \end{subfigure}
  ~
  \begin{subfigure}[b]{0.055\textwidth}
    \centering
    \includegraphics[width=\textwidth]{./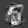}
  \end{subfigure}
  ~
  \begin{subfigure}[b]{0.055\textwidth}
    \centering
    \includegraphics[width=\textwidth]{./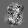}
  \end{subfigure}
  ~ \dots
  \begin{subfigure}[b]{0.055\textwidth}
    \centering
    \includegraphics[width=\textwidth]{./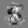}
  \end{subfigure}
  \vfill \mbox{  } \vfill
  \begin{subfigure}[b]{0.055\textwidth}
    \centering
    \includegraphics[width=\textwidth]{./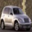}
  \end{subfigure}
  ~
  \begin{subfigure}[b]{0.055\textwidth}
    \centering
    \includegraphics[width=\textwidth]{./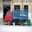}
  \end{subfigure}
  ~
  \begin{subfigure}[b]{0.055\textwidth}
    \centering
    \includegraphics[width=\textwidth]{./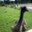}
  \end{subfigure}
  ~ \dots
  \begin{subfigure}[b]{0.055\textwidth}
    \centering
    \includegraphics[width=\textwidth]{./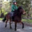}
  \end{subfigure}
  \vfill
  \begin{subfigure}[b]{0.055\textwidth}
    \centering
    \includegraphics[width=\textwidth]{./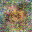}
  \end{subfigure}
  ~
  \begin{subfigure}[b]{0.055\textwidth}
    \centering
    \includegraphics[width=\textwidth]{./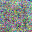}
  \end{subfigure}
  ~
  \begin{subfigure}[b]{0.055\textwidth}
    \centering
    \includegraphics[width=\textwidth]{./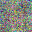}
  \end{subfigure}
  ~ \dots
  \begin{subfigure}[b]{0.055\textwidth}
    \centering
    \includegraphics[width=\textwidth]{./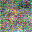}
  \end{subfigure}

  \caption{A mini-batch from MNIST (top) and CIFAR10 (bottom) and
    distinct forged counterpart produced by our error matrix sampling
    approach. Both examples produce an approximation error on the
    order of \(10^{-7}\), 5 orders of magnitude smaller than existing
    attacks.}
  \label{fig:forged-ems}
\end{figure}

\section{Related Work}
\subsection{Proof of Learning}
Jia et al. \cite{jia2021proof} observed that the production and
verification of an execution trace is tantamount to proving one has
trained model \(\theta_T\) on dataset \(D = \bigcup_{i=0}^{T-1} B_i\). They
formalise execution traces as \emph{Proof of Learning} (PoL)
sequences, and verify the final model \(\theta_T\) is indeed the result of
the optimisation captured in the trace by recomputing the checkpoints
using the provided mini-batches and previous checkpoint, ensuring that
each recomputed checkpoint is within some pre-defined distance
threshold \(\epsilon\), e.g. \(\ell_2\) norm, from what is reported in the
trace.

Implicit within the definition of PoL is the uniqueness of execution
traces for a given final model \(\theta_T\), and that the construction of a
distinct trace for \(\theta_T\) that passes verification is at least as
difficult a task as training. Consequently, Jia et al. consider PoL as
a viable means of proving a third party has indeed expended the energy
to train the model and therefore owns it. Constructing distinct traces
that challenge this assumption is known as PoL ``spoofing'', i.e.,
producing a trace that differs from the original by employing a
different sequence of mini-batches that follow a different training
trajectory, but ultimately end at the same final model \(\theta_T\). Given
such a spoofed execution trace \(S'\) for original trace \(S\), let
\(D\) and \(D'\) be the differing datasets used in \(S\) and \(S'\)
respectively. One immediate consequence of successful PoL spoofing
regards the training examples \((\tbf{x},\tbf{y}) \notin D \cap D'\); namely
the question arises of \emph{``which are the real training examples
  that trained \(\theta_T\)"}?  PoL spoofing results in the apparent
contradiction that a model was both trained and not trained on some
set of training examples, in effect \emph{deleting} them from a
model's training set. % As noted by
% \cite{zhangverification,kong2023can,thudi2022necessity}, this allows
% for the credible adversarial refutation of membership inference
% claims, as well as claiming to have ``unlearned'' these examples.
% Subsequently, we examine the previous PoL spoofing attacks and the
% recently proposed concept of \emph{data forging}.

\subsection{PoL Adversarial Examples}
\label{sec:pol-advers-exampl}

Zhang et al. \cite{zhang2022adversarial} challenged the underlying
assumption of PoL by constructing ``adversarial examples'' to PoL
\emph{i.e.,} two distinct execution traces for the same final model
\(\theta_T\). Zhang et al.'s proposed adversarial attack operates by
initialising dummy checkpoints such that the distance between
consecutive checkpoints is less than the chosen verification distance
threshold \(\epsilon\), and optimise ``random'' noise to add to mini-batches
consisting of public data such that the gradient is close to zero,
ensuring the next model checkpoint lies within the \(\epsilon\)-ball of the
next dummy checkpoint, and thus passing verification. However, Fang et
al. \cite{fang2023proof} find that they are fragile to the
hyperparameters of the verification process. Namely, under small
checkpointing intervals, the attacks break down. Additionally, Zhang
et al. apply their attack to image classification models, and there is
no mention nor guarantee of whether the final synthesized images are
still within the problem domain \emph{e.g.,} whether the final pixel
values are within \(0-255\). Any training examples outside the problem
domain in the mini-batches of an execution trace immediately point to
a detectable spoof. Fang et al. provide an ``infinitesimal update''
attack against PoL that is conceptually similar to Zhang et al.'s,
where a learning rate is chosen to be sufficiently small such that
regardless of the training data, the next model checkpoint lies within
the \(\epsilon\)-ball of the dummy checkpoint. The success of these attacks
that exploit the static verification threshold and their consequences
on the robustness of PoL verification is discussed in
\cite{fang2023proof}.  One limitation of these attacks is that they
require foreknowledge of the chosen \(\epsilon\) by the verifier, something
that may not be realistic in practice, and the utilisation of
infinitesimally small learning rates may also be suspect to a
verifier.

Existing PoL adversarial attacks
\cite{zhang2022adversarial,fang2023proof} exploit near zero norm
gradients to produce distinct execution traces that can pass PoL
verification. However it remains an open question whether two distinct
execution traces with distinct checkpoints may be constructed for
model \(\theta_T\) with valid, honest gradient updates that do not require
near zero norm gradients. Fang et al. evaluate attempts at doing so
via data ordering attacks \cite{shumailov2021manipulating}
\emph{i.e.,} finding a distinct sequence of valid mini-batches that
take random initialisation \(\theta_0\) to the desired \(\theta_T\); however
these were also found to be unlikely to succeed.

\section{Conclusion}
Answers to the question of exact data forging within the restricted
problem domain will have wide reaching ramifications for the machine
learning and privacy community. Proving its feasibility and the
development of methods of performing such a task allows for stealthy
data forging; an attack which fully realises the serious data privacy
and security implications recognised by the prior work
\cite{kong2023can,thudi2022necessity,zhangverification}. Membership
inference attacks are the bread-and-butter of machine learning
privacy. They are used to perform ML auditing and compliance, two
tasks that are key in any real world deployment of an ML
model. Additionally, machine unlearning is currently the only adequate
response to GDPR's \emph{right to be forgotten}, allowing model owners
to comply with the law without needing to throw away the entire
model. Stealthy data forging has the potential to disrupt these two
tools, casting doubt on their methodology and veracity. These are
serious implications for the ML privacy community, however, they have
not been fully realised by current data forging attacks. Further
research is needed on whether stealthy data forging is indeed
possible; and whether more powerful attack algorithms exist. We have
begun by first conclusively determining that current attacks are not
stealthy and are detectable, cancelling any implications they may have
on ML privacy. Additionally, our theoretical analysis takes a first
step towards understanding data forging and analysing the conditions
where different mini-batches can produce the same update.

\section{Ethics Considerations}

The danger of data forging attacks is that they appear to be the
perfect tool to evade data auditing and compliance
regulations. Malicious model owners who have trained their models on
copyrighted, and or sensitive personal data can \emph{forge} these
datasets into legally compliant ones, and publicly claim that the
forgery is in fact the true dataset. As a result, these problematic
models can remain operational, furthering the risk of potential harm
to the data owners. This can come in the form of copyright
infringement, and or the unconsented disclosure of sensitive data to
3rd parties. Research that furthers the development of data forging
must contend with potential harm these attacks can facilitate.

However, in our work, we have shown that attempts at data forging are
easily detectable by an informed adversary, and that, in their current
instantiation, they are not a real threat against data
governance. From an ethical point of view, our work can be seen as a
furthering an optimistic view of the future risk of evading data
regulation via data forging. We have seen that current attacks are not
able to fully realise their serious implications on data governance,
and our theoretical results suggests that developing attacks that can
is a non trivial, computationally infeasible task.

In carrying out our research, we believe there are no ethical
issues. Our results are the result of training on public datasets
(MNIST and CIFAR10), and we replicate data forging attacks in a closed
environment on execution traces resulting from these public datasets.

\section{Open Science}

We are committed to the open science policy by providing all our code
at the following \href{https://zenodo.org/records/15594394}{link}.

\bibliography{references}
\bibliographystyle{plain}

\appendix
\section{\tbf{Proofs}}

\subsection{\tbf{Proof of Proposition \ref{thm:b=1-data-forging}}}
\label{sec:proof-proposition-1}
%The proof is written in the following way: 
We first introduce the
background, namely the neural network setup, the definition of the
loss function, and derive expressions for the gradients of the loss
function with respect to the network parameters. Then, we introduce and
prove some lemmas that are needed for the full proof. Finally the full
proof is given.

\tbf{Background.} A \(L\)-layer fully connected neural network consists
of weights and biases
\(\theta = [\tbf{W}_1, \tbf{b}_1 \tbf{W}_2, \tbf{b}_2, ..., \tbf{W}_L,
\tbf{b}_L]\) where each
\(\tbf{W}_i \in \mathbb{R}^{d_{i-1} \times d_i}, \tbf{b} \in \mathbb{R}^{d_i}\). The output of
the neural network is given by
\( f_{\theta}(\tbf{x}) = \mbox{softmax}(\tbf{z}_L)\), where
\( \tbf{z}_L = \tbf{W}_L^T \tbf{a}_{L-1} + \tbf{b}_L\),
\(\tbf{a}_i = h(\tbf{z}_i)\),
\(\tbf{z}_i = \tbf{W}_i^T\tbf{a}_{i-1} + \tbf{b}_i\), and \(h\) is the
activation function. Let \(n = d_L\) denote the number of classes,
\(d = d_0\) the number of input features, and
\(\tbf{a}_0 = \tbf{x} \in \mathbb{R}^{d}\). The per example crossentropy loss
\(\ell\) is given by
\(\ell(\hat{\tbf{y}}, \tbf{y}) = -\sum_{i=1}^n y_i \log \hat{y}_i\). Observe
in the \(L=1\) case, the network is equivalent to multi-class logistic
regression. \(\nabla_{\tbf{W}_i} \ell\) is given by the outer product of the
vectors \(\tbf{a}_{i-1}\) and \(\bdelta_i\):

\begin{equation}
  \label{eq:4}
  \nabla_{\tbf{W}_i} \ell = \tbf{a}_{i-1} \bdelta_i^T
\end{equation}

where
\(\bdelta_i = \frac{\del \ell}{\del \tbf{z}_i} \in \mathbb{R}^{d_i}\). Additionally,
\(\nabla_{\tbf{b}_i} \ell = \bdelta_i\). We have that

\begin{equation}
  \label{eq:3}
  \bdelta_i =\diag(\tbf{h}_i)\tbf{W}_{i+1}\bdelta_{i+1},
\end{equation}

where
\(\tbf{h}_{i} = [h'(z^{({i})}_1), ..., h'(z^{({i})}_{d_{i}})]^T \in
\mathbb{R}^{d_{i}}\). Let \(z_j = [\tbf{z}_L]_j\),
\( y_i = [\tbf{y}]_i\) and \(\hat{y}_j = [f_\theta(\tbf{x})]_j\). We can write the
\(j\)-th element of \(\bdelta_L\) as the following:

\begin{equation}\label{eq:dldzj}
  \begin{aligned}
    [\bdelta_L]_j = \frac{\del \ell}{\del z_j} &= - \sum_{i=1}^n y_i \cdot \frac{\del \log \hat{y}_i}{\del z_j} = - \sum_{i=1}^n \frac{y_i}{\hat{y}_i} \cdot \frac{\del \hat{y}_i}{\del z_j} \\
  \end{aligned}
\end{equation}
We know that the derivative of the softmax function is given by

\begin{equation}
  \frac{\del \hat{y}_i}{\del z_j} = \begin{cases}
                          \hat{y}_j(1 - \hat{y}_j) & \mbox{if } i = j \\
                          -\hat{y}_i \cdot \hat{y}_j & \mbox{otherwise},
                        \end{cases}
\end{equation}

which allows us to rewrite Equation \ref{eq:dldzj} as
\begin{equation}
  \label{eq:7}
  \begin{aligned}
    [\bdelta_L]_j = \frac{\del \ell}{\del z_j} &= -y_j(1-\hat{y}_j) - \sum_{\substack{i=1 \\ i \neq j}}^n y_i\cdot(-\hat{y}_j) \\
    [\bdelta_L]_j &= -y_j  + \hat{y}_j\sum_{i=1}^n y_i  \\
  \end{aligned}
\end{equation}

From (\ref{eq:7}), we have that \(\bdelta_L = v\hat{\tbf{y}} - \tbf{y}\),
where \(v = \sum_{i=1}^n [\tbf{y}]_i\).

\vspace{5pt}
\noindent\tbf{Associated Lemmas and Corollaries}
\begin{lem}
  \label{lem:1}
  For any two training examples \((\tbf{x}, \tbf{y})\) and
  \((\tbf{x}', \tbf{y}')\), if
  \(\nabla_{\tbf{W}_1}\ell(\tbf{x}', \tbf{y}') = \nabla_{\tbf{W}_1}\ell(\tbf{x},
  \tbf{y})\) and
  \( \nabla_{\tbf{b}_1}\ell(\tbf{x}', \tbf{y}') = \nabla_{\tbf{b}_1}\ell(\tbf{x},
  \tbf{y})\), then \(\tbf{x} = \tbf{x}'\).
\end{lem}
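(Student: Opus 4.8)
The plan is to read off the first-layer gradients from the outer-product formula \(\nabla_{\tbf W_i}\ell = \tbf a_{i-1}\bdelta_i^T\): since \(\tbf a_0 = \tbf x\), we have \(\nabla_{\tbf W_1}\ell(\tbf x,\tbf y) = \tbf x\,\bdelta_1^T\) and \(\nabla_{\tbf b_1}\ell(\tbf x,\tbf y) = \bdelta_1\), where \(\bdelta_1 = \bdelta_1(\tbf x,\tbf y)\) is the first-layer error vector backpropagated from \(\bdelta_L = v\hat{\tbf y} - \tbf y\) via \(\bdelta_i = \diag(\tbf h_i)\tbf W_{i+1}\bdelta_{i+1}\). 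First I would use the bias-gradient hypothesis to conclude \(\bdelta_1(\tbf x,\tbf y) = \bdelta_1(\tbf x',\tbf y') =: \bdelta\). Substituting this common value into the weight-gradient hypothesis gives \(\tbf x\,\bdelta^T = \tbf x'\,\bdelta^T\), i.e.\ \((\tbf x - \tbf x')\bdelta^T = \mathbf 0\) as a \(d \times d_1\) matrix, which entrywise reads \((x_i - x_i')\,\delta_j = 0\) for all \(i,j\).

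The key step is then to show \(\bdelta \ne \mathbf 0\): once we exhibit an index \(j\) with \(\delta_j \ne 0\), the displayed identity forces \(x_i = x_i'\) for every \(i\), hence \(\tbf x = \tbf x'\). To see that \(\bdelta_1\) cannot vanish, start at the output layer: for a one-hot label \(\tbf y\) we have \(v = \sum_i y_i = 1\), so \(\bdelta_L = \hat{\tbf y} - \tbf y\); since the softmax output \(\hat{\tbf y}\) lies in the open probability simplex (all entries strictly between \(0\) and \(1\)), \(\hat{\tbf y} \ne \tbf y\) and hence \(\bdelta_L \ne \mathbf 0\). For \(L = 1\) (multi-class logistic regression) there is nothing more to do, as \(\bdelta_1 = \bdelta_L\). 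For \(L > 1\) I would propagate the non-vanishing backward through \(\bdelta_i = \diag(\tbf h_i)\tbf W_{i+1}\bdelta_{i+1}\), invoking the non-degeneracy of the network parameters in force throughout this section (e.g.\ the \(\tbf W_{i+1}\) having full column rank and the active-unit pattern of \(h'\) at the input in question not annihilating \(\tbf W_{i+1}\bdelta_{i+1}\)) so that each backward step preserves \(\bdelta_i \ne \mathbf 0\).

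The main obstacle is exactly this last point. The conclusion is genuinely false without some non-degeneracy assumption: if \(\tbf W_2 = \mathbf 0\), for instance, then \(\bdelta_1 \equiv \mathbf 0\), both first-layer gradients vanish identically, and \(\tbf x, \tbf x'\) may differ arbitrarily. So the real content of the lemma lies in pinning down the minimal hypothesis on \((\tbf W_2,\dots,\tbf W_L)\) and the activation that guarantees \(\bdelta_1(\tbf x,\tbf y) \ne \mathbf 0\) for the relevant inputs, and in checking that it suffices; the remainder is elementary linear algebra exploiting the rank-one structure of \(\tbf x\,\bdelta^T\).
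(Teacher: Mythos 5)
Your proposal follows exactly the paper's route: the bias-gradient hypothesis gives \(\bdelta'_1 = \bdelta_1 =: \tbf{u}\), the weight-gradient hypothesis then reads \(\tbf{x}'\tbf{u}^T = \tbf{x}\tbf{u}^T\), and the rank-one structure forces \(\tbf{x} = \tbf{x}'\). The paper's proof stops there, asserting ``since we have an outer product, this is only satisfied when \(\tbf{x} = \tbf{x}'\)'' with no further comment. The issue you raise is therefore a genuine gap in the \emph{paper's} argument rather than in yours: the implication \((\tbf{x}-\tbf{x}')\tbf{u}^T = \tbf{0} \Rightarrow \tbf{x} = \tbf{x}'\) requires \(\tbf{u} \ne \tbf{0}\), and the paper never verifies this. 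Your closing of the gap for \(L=1\) is correct and complete: with a one-hot label, \(\bdelta_L = \hat{\tbf{y}} - \tbf{y} \ne \tbf{0}\) because the softmax output has all entries strictly in \((0,1)\), and \(\bdelta_1 = \bdelta_L\). For \(L>1\) your counterexample is also correct --- with \(\tbf{W}_2 = \tbf{0}\), or with ReLU and all first-layer units inactive at both inputs, \(\bdelta_1\) vanishes identically, both first-layer gradients are zero, and the conclusion fails --- so the lemma as stated does implicitly require a non-degeneracy condition guaranteeing \(\bdelta_1 \ne \tbf{0}\) (surviving each backward step \(\bdelta_i = \diag(\tbf{h}_i)\tbf{W}_{i+1}\bdelta_{i+1}\)), which the paper neither states nor proves. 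In short: same decomposition and same key algebraic step as the paper, but your version is the more careful one; the only thing you leave open (a clean sufficient condition on \(\tbf{W}_2,\dots,\tbf{W}_L\) and the activation pattern) is precisely what the paper also leaves open, silently.
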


\begin{proof}
  If
  \(\nabla_{\tbf{W}_1}\ell(\tbf{x}', \tbf{y}') = \nabla_{\tbf{W}_1}\ell(\tbf{x},
  \tbf{y})\), then \(\tbf{x}'\bdelta_1^{'T} = \tbf{x}\bdelta_1^T\).
  If
  \(\nabla_{\tbf{b}_1}\ell(\tbf{x}', \tbf{y}') = \nabla_{\tbf{b}_1}\ell(\tbf{x},
  \tbf{y})\), then \(\bdelta'_1 = \bdelta_1\), resulting in the
  equation \(\tbf{x}'\tbf{u}^T = \tbf{x}\tbf{u}^T \) where
  \(\tbf{u} = \bdelta'_1 = \bdelta_1\). Since we have an outer
  product, this is only satsified when \(\tbf{x} = \tbf{x}'\).

\end{proof}

\begin{lem}
  \label{lem:2}
  For any two training examples \((\tbf{x}, \tbf{y})\) and
  \((\tbf{x}', \tbf{y}')\), if
  \(\nabla_{\tbf{W}_L}\ell(\tbf{x}', \tbf{y}') = \nabla_{\tbf{W}_L}\ell(\tbf{x},
  \tbf{y})\) and
  \( \nabla_{\tbf{b}_L}\ell(\tbf{x}', \tbf{y}') = \nabla_{\tbf{b}_L}\ell(\tbf{x},
  \tbf{y})\) then \(\tbf{y} = (v - v')\tbf{u} + \tbf{y}'\), where
  \(v = \sum_j [\tbf{y}]_j\), \(v' = \sum_j [\tbf{y}']_j\), and
  \(\tbf{u} \in \mathbb{R}^n\).
\end{lem}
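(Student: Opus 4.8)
The plan is to work entirely with the two closed-form identities already derived for the last layer, namely $\nabla_{\tbf{b}_L}\ell = \bdelta_L = v\hat{\tbf{y}} - \tbf{y}$, where $v = \sum_j [\tbf{y}]_j$ and $\hat{\tbf{y}} = f_{\theta}(\tbf{x})$, together with $\nabla_{\tbf{W}_L}\ell = \tbf{a}_{L-1}\bdelta_L^{T}$. First I would read off the consequence of the bias-gradient hypothesis: $\nabla_{\tbf{b}_L}\ell(\tbf{x}',\tbf{y}') = \nabla_{\tbf{b}_L}\ell(\tbf{x},\tbf{y})$ is exactly $\bdelta_L' = \bdelta_L$, i.e. $v'\hat{\tbf{y}}' - \tbf{y}' = v\hat{\tbf{y}} - \tbf{y}$. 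Rearranging gives $\tbf{y} = v\hat{\tbf{y}} - v'\hat{\tbf{y}}' + \tbf{y}'$, so the whole lemma reduces to showing $v\hat{\tbf{y}} - v'\hat{\tbf{y}}'$ can be written as $(v - v')\tbf{u}$ for some $\tbf{u} \in \mathbb{R}^n$; it is enough to prove $\hat{\tbf{y}} = \hat{\tbf{y}}'$, in which case $\tbf{u} = \hat{\tbf{y}}$ works.

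To pin down $\hat{\tbf{y}} = \hat{\tbf{y}}'$ I would bring in the weight-gradient hypothesis. Substituting the just-derived $\bdelta_L' = \bdelta_L$ into $\tbf{a}_{L-1}'\bdelta_L'^{T} = \tbf{a}_{L-1}\bdelta_L^{T}$ gives $(\tbf{a}_{L-1} - \tbf{a}_{L-1}')\bdelta_L^{T} = \tbf{0}$, a vanishing outer product, so provided $\bdelta_L \neq \tbf{0}$ it forces $\tbf{a}_{L-1} = \tbf{a}_{L-1}'$ --- the same cancellation argument used in Lemma \ref{lem:1}. Equal penultimate activations give equal pre-softmax logits, $\tbf{z}_L = \tbf{W}_L^{T}\tbf{a}_{L-1} + \tbf{b}_L = \tbf{W}_L^{T}\tbf{a}_{L-1}' + \tbf{b}_L = \tbf{z}_L'$, hence $\hat{\tbf{y}} = \mbox{softmax}(\tbf{z}_L) = \mbox{softmax}(\tbf{z}_L') = \hat{\tbf{y}}'$. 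Plugging this back into $\bdelta_L = \bdelta_L'$ yields $\tbf{y} - \tbf{y}' = (v - v')\hat{\tbf{y}}$, which is precisely the claimed identity with $\tbf{u} = \hat{\tbf{y}}$.

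The one genuine obstacle is justifying $\bdelta_L \neq \tbf{0}$, which the outer-product step needs. In the setting where this lemma is actually applied --- one-hot labels, so $v = v' = 1$ and $\bdelta_L = \hat{\tbf{y}} - \tbf{y}$ --- this is automatic, since $\hat{\tbf{y}} = \mbox{softmax}(\tbf{z}_L)$ lies in the open probability simplex and so can never coincide with a one-hot vector $\tbf{y}$. I would therefore run the argument under the standing assumption that labels are one-hot (equivalently, that $\bdelta_L \neq \tbf{0}$); the degenerate case $\bdelta_L = \tbf{0}$ does not arise there and I would not belabor it. A small related subtlety I would flag in passing: even if $\tbf{a}_{L-1}$ or $\tbf{a}_{L-1}'$ happens to be the zero vector (which can occur with ReLU), the argument still closes, because $\bdelta_L \neq \tbf{0}$ together with $(\tbf{a}_{L-1} - \tbf{a}_{L-1}')\bdelta_L^{T} = \tbf{0}$ still forces the two activations to be equal.
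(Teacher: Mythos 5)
Your proof follows essentially the same route as the paper's: the bias-gradient equality gives $\bdelta_L' = \bdelta_L$, the weight-gradient equality then forces $\tbf{a}_{L-1}' = \tbf{a}_{L-1}$ and hence $\hat{\tbf{y}}' = \hat{\tbf{y}}$, and substituting back into $\bdelta_L' = \bdelta_L$ yields the claimed identity with $\tbf{u} = \hat{\tbf{y}}$. If anything you are slightly more careful than the paper, which silently assumes $\bdelta_L \neq \tbf{0}$ when cancelling the outer product; your observation that this is automatic in the one-hot setting where the lemma is applied (the softmax output lies in the open simplex and cannot equal $\tbf{y}$) patches that omission.
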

\begin{proof}

  If
  \(\nabla_{\tbf{b}_L}\ell(\tbf{x}', \tbf{y}') = \nabla_{\tbf{b}_L}\ell(\tbf{x},
  \tbf{y})\), then \(\bdelta'_L = \bdelta_L\). If,
  \(\nabla_{\tbf{W}_L}\ell(\tbf{x}', \tbf{y}') = \nabla_{\tbf{W}_L}\ell(\tbf{x},
  \tbf{y})\), then
  \(\tbf{a}'_{L-1}\bdelta_L^{'T} = \tbf{a}_{L-1}\bdelta_L^T\). Since
  \(\bdelta_L' = \bdelta_L\) we have that
  \(\tbf{a}'_{L-1} = \tbf{a}_{L-1}\) Therefore,
  \(\tbf{z}_L' = \tbf{W}_L^T\tbf{a}_{L-1} + \tbf{b}_L =
  \tbf{W}_L^T\tbf{a}'_{L-1} + \tbf{b}_L=\tbf{z}_L\). Therefore
  \(\hat{\tbf{y}}' = \hat{\tbf{y}}\). Let
  \(\tbf{u} = \hat{\tbf{y}}' = \hat{\tbf{y}}\). Then

  \begin{equation}
    \label{eq:vv}
    \begin{aligned}
      \bdelta'_L &= \bdelta_L \\
      v'\hat{\tbf{y}}' - \tbf{y}' &= v\hat{\tbf{y}} - \tbf{y} \\
      v'\tbf{u} - \tbf{y}' &= v\tbf{u} - \tbf{y} \\
      \tbf{y} &= (v - v')\tbf{u} + \tbf{y}'
    \end{aligned}
  \end{equation}
\end{proof}

\begin{cor}
  \label{cor:data-forging-lemmas}
  For any two training examples \((\tbf{x}, \tbf{y})\) and
  \((\tbf{x}', \tbf{y}')\), if
  \(\nabla_{\tbf{W}_L}\ell(\tbf{x}', \tbf{y}') = \nabla_{\tbf{W}_L}\ell(\tbf{x},\tbf{y})\)
  and
   \( \nabla_{\tbf{b}_L}\ell(\tbf{x}', \tbf{y}') = \nabla_{\tbf{b}_L}\ell(\tbf{x},\tbf{y})\)
  and
   \( \sum_j[\tbf{y}]_j = \sum_j[\tbf{y}']_j\)
  then \(\tbf{y} = \tbf{y}'\).
\end{cor}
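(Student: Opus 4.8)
The plan is to read the corollary directly off Lemma~\ref{lem:2}. The hypotheses of the corollary include exactly the two gradient equalities $\nabla_{\tbf{W}_L}\ell(\tbf{x}', \tbf{y}') = \nabla_{\tbf{W}_L}\ell(\tbf{x},\tbf{y})$ and $\nabla_{\tbf{b}_L}\ell(\tbf{x}', \tbf{y}') = \nabla_{\tbf{b}_L}\ell(\tbf{x},\tbf{y})$ that Lemma~\ref{lem:2} requires, so I would first invoke that lemma to conclude $\tbf{y} = (v - v')\tbf{u} + \tbf{y}'$, where $v = \sum_j [\tbf{y}]_j$, $v' = \sum_j [\tbf{y}']_j$, and $\tbf{u}$ is the common softmax output $\hat{\tbf{y}} = \hat{\tbf{y}}'$ that the proof of Lemma~\ref{lem:2} already establishes.

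Next I would use the third hypothesis, $\sum_j[\tbf{y}]_j = \sum_j[\tbf{y}']_j$, which is precisely the statement $v = v'$. Substituting $v - v' = 0$ into the conclusion of Lemma~\ref{lem:2} makes the term $(v-v')\tbf{u}$ collapse to the zero vector, leaving $\tbf{y} = \tbf{y}'$, which is what the corollary asserts.

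There is essentially no obstacle here: the corollary is a one-line specialisation of Lemma~\ref{lem:2}, and the only thing to verify is that the extra scalar constraint $v = v'$ is exactly what removes the ambiguous $(v-v')\tbf{u}$ direction that the lemma leaves open. It is worth noting for the downstream argument (the full proof of Proposition~\ref{thm:b=1-data-forging}) that this $v = v'$ condition holds automatically when $\tbf{y}$ and $\tbf{y}'$ are one-hot, since both sums then equal $1$; combined with Lemma~\ref{lem:1}, which forces $\tbf{x} = \tbf{x}'$, this yields the full $b=1$ impossibility result.
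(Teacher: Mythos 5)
Your proposal is correct and follows exactly the same route as the paper's own proof: apply Lemma~\ref{lem:2} to obtain \(\tbf{y} = (v - v')\tbf{u} + \tbf{y}'\), then use the hypothesis \(v = v'\) to collapse the perturbation term and conclude \(\tbf{y} = \tbf{y}'\). Nothing is missing.
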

\begin{proof}
  If the first two conditions hold, then, by Lemma \ref{lem:2},
  \(\tbf{y} = (v - v')\tbf{u} + \tbf{y}'\), where
  \(v = \sum_j [\tbf{y}]_j\), \(v' = \sum_j [\tbf{y}']_j\). If these two
  sums are equal, then \(v - v' = 0\) which, substituting into
  \eqref{eq:vv} gives \(\tbf{y} = \tbf{y}'\).
\end{proof}

\tbf{Full Proof.} We now prove Proposition \ref{thm:b=1-data-forging}.

\begin{proof}
  The proof follows from Lemma \ref{lem:1} and Lemma \ref{lem:2}. From
  Lemma \ref{lem:1}, if
  \( \nabla_{\tbf{W}_1}\ell(\tbf{x}', \tbf{y}') = \nabla_{\tbf{W}_1}\ell(\tbf{x},
  \tbf{y})\) and
  \(\nabla_{\tbf{b}_1}\ell(\tbf{x}', \tbf{y}') = \nabla_{\tbf{b}_1}\ell(\tbf{x},
  \tbf{y})\), then \(\tbf{x} = \tbf{x}'\).

  From Lemma \ref{lem:2}, if
  \( \nabla_{\tbf{W}_L}\ell(\tbf{x}', \tbf{y}') = \nabla_{\tbf{W}_L}\ell(\tbf{x},
  \tbf{y})\) and
  \(\nabla_{\tbf{b}_L}\ell(\tbf{x}', \tbf{y}') = \nabla_{\tbf{b}_L}\ell(\tbf{x},
  \tbf{y})\), then \(\tbf{y} = (v - v')\tbf{u} + \tbf{y}'\). If
  \(\tbf{y}\) and \(\tbf{y}'\) are one hot labels, then \( v =
  v' = 1\). Consequently, from Corollary \ref{cor:data-forging-lemmas},
  \(\tbf{y}= \tbf{y}'\).
  
\end{proof}

\subsection{\tbf{Proof of Lemma \ref{lem:sumzero}}}
\label{sec:tbfpr-lemma-refl}
\begin{proof}
  Let \(\tbf{y} = [y_1,y_2,\dots,y_n]^T\) represent the \(i\)-th
  column of \(\tbf{Y}\), and \(\tbf{s} = [s_1,s_2,\dots,s_n]^T\)
  represent the \(i\)-th column of \(f_{\tbf{W}}(\tbf{X})\). We then have that
  \begin{equation}
    \begin{aligned}
      \sum_{j=1}^n D_{ij} &=  -\sum_{j=1}^n y_j + \sum_{j=1}^ns_j \cdot \sum_{k=1}^n y_k\\
                       &= -\sum_{j=1}^n y_j + \sum_{k=1}^n y_k = 0.
    \end{aligned}
  \end{equation}
\end{proof}

\subsection{\tbf{Proof of Proposition \ref{thm:b1-data-forging}}}
\label{sec:tbfpr-prop-refthm:b1}
We present the proof as follows: first we derive expressions for the
gradient of the average loss function \(\nabla \mathcal{L}\) for fully connected
neural networks, as done similarly in Appendix
\ref{sec:proof-proposition-1} for \(\nabla \ell\). We then go on to give the
full proof.

\tbf{Background.} For the mini-batch setting, we can derive matrix
expressions for \(\nabla_{\tbf{W}_i} \mathcal{L}\). Writing everything in matrix
notation, we have the input matrix
\(\tbf{X} \in \mathbb{R}^{d_0 \times b}\), where the \(k\)-th column denotes
\(\tbf{x}^{(k)}\). We can also write the batched output of the model
\(\tbf{Z}_L \in \mathbb{R}^{d_L \times b}\) as the following matrix multiplication:

\begin{equation}
  \tbf{Z}_L = \tbf{W}_{L}^T \tbf{A}_{L-1} + \tbf{B}_L,
\end{equation}

where each \(\tbf{A}_i \in \mathbb{R}^{d_i \times b}\) is the batched activation after
the \(i\)-th layer i.e., the \(k\)-th column of \(\tbf{A}_i\)
represents the the activation of the network after the \(i\)-th layer
for the \(k\)-th example in the mini-batch \(\tbf{a}_{i}^{(k)} := [ a^{(k,i)}_1  a^{(k,i)}_2  \dots a^{(k,i)}_{d_i}]^T\):

\begin{equation}
    \tbf{A}_i = h(\tbf{Z}_i) = h(\tbf{W}_i^T\tbf{A}_{i-1} + \tbf{B}_i),
\end{equation}

where \(\tbf{A}_0 = \tbf{X}\), and
\(\tbf{B}_i \in \mathbb{R}^{d_i \times b}\), which is a matrix where each of the
columns are the same bias vector for the \(i\)-th layer repeated. Let
the scalar \( z^{(k,i)}_m \) represents the the \(m\)-th logit value
at the \(i\)-th layer of the \(k\)-th training example in the
mini-batch, and let the scalar \( a^{(k,i)}_m \) be the corresponding
activation value.

We can write the \((mn)\)-th element of
\(\nabla_{\tbf{W}_i} \mathcal{L}(B)\) as the following:

\begin{equation}\label{eq:5}
  \begin{aligned}
    &[ \nabla_{\tbf{W}_i} \mathcal{L}(B) ]_{mn} = \frac{1}{b} \sum_{(\tbf{x}, \tbf{y}) \in B} [ \nabla_{\tbf{W}_i} \ell(\tbf{x},\tbf{y}) ]_{mn} \\
    &= \frac{1}{b} \sum_{k=1}^b a^{(k,i-1)}_m \cdot \frac{\del \ell^{(k)}}{\del z^{(k,i)}_n} \\
    % &= \frac{1}{b} [a^{(1,i-1)}_m a^{(2,i-1)}_m a^{(3,i-1)}_m \dots a^{(b,i-1)}_m]
    % \begin{bmatrix}
    %   \frac{\del \ell^{(1)}}{\del z^{(1,i)}_n} \\
    %   \frac{\del \ell^{(2)}}{\del z^{(2,i)}_n} \\
    %   \frac{\del \ell^{(3)}}{\del z^{(3,i)}_n}  \\
    %   \vdots\\
    %   \frac{\del \ell^{(b)}}{\del z^{(b,i)}_n}
  % \end{bmatrix}
    % &= \frac{1}{b}\tbf{A}_{i-1} \tbf{D}_i
  \end{aligned}
\end{equation}

To compute the entire gradient \(\nabla_{\tbf{W}_i} \mathcal{L}(B)\), not just a
single element, we can view it as the matrix multiplication
\( b \nabla_{\tbf{W}_i} \mathcal{L}(B) = \tbf{A}_{i-1} \tbf{D}_i,\) where
\(\tbf{D}_i \in \mathbb{R}^{b \times d_i}\) and the \(k\)-th row denotes
\(\bdelta^{(k)}_i\) i.e., the error at the \(i\)-th layer for the
\(k\)-th example in the mini-batch. Each \(\tbf{D}_i\) is given by

\begin{equation}
  \label{eq:6}
  \tbf{D}_i = (\tbf{D}_{i+1}\tbf{W}_{i+1}^T) \odot H_i.
\end{equation}

Equation \eqref{eq:6} is the matrix version of \eqref{eq:3}. The
\(k\)-th row of \(H_i \in \mathbb{R}^{b\times d_i}\) denotes
\(\tbf{h}^{(k)}_i\), and \(\odot\) gives the hadamard
product. \(\nabla_{\tbf{b}_i} \mathcal{L}\) is given by the average of all the rows
of \(\tbf{D}_i\) e.g.

\begin{equation*}
  [\nabla_{\tbf{b}_i} \mathcal{L}]_j = \frac{1}{b} \sum_{k=1}^b [\tbf{D}_i]_{kj}
\end{equation*}

\tbf{Full Proof.} Finally we can prove Proposition
\ref{thm:b1-data-forging}.

\begin{proof}
  We prove Proposition \ref{thm:b1-data-forging} by proving that there
  exists a non-zero \emph{perturbation} matrix
  \(\tbf{P} \in \mathbb{R}^{d_0 \times b}\) such that for a given mini-batch
  \(B = \{ (\tbf{x}^{(k)}, \tbf{y}^{(k)})\}_{k=1}^b\), a corresponding
  forged mini-batch \(B'\) can be constructed from \(B\) and
  \(\tbf{P}\), where
  \(B' = \{ (\tbf{x}^{(k)} + [\tbf{P}]^k, \tbf{y}^{(k)})\}_{k=1}^b\)
  and
  \(\nabla_{\tbf{W}_i} \mathcal{L}(B) = \nabla_{\tbf{W}_i} \mathcal{L}(B')\) and
  \(\nabla_{\tbf{b}_i} \mathcal{L}(B) = \nabla_{\tbf{b}_i} \mathcal{L}(B')\) holds
  \(\forall i \in \{1,2,...,L\}\). The operator \([\cdot]^k\) returns the
  \(k\)-th column of the input matrix. The perturbation matrix
  \(\tbf{P}\) itself must satisfy the following 2 matrix equations:

  \begin{equation}
    \label{eq:2}
    \begin{cases}
      \tbf{P}\tbf{D}_1 = \tbf{0} \\
      \tbf{W}_1^T\tbf{P} = \tbf{0}
    \end{cases}
  \end{equation}

  Let \(\tbf{X}' := \tbf{X} + \tbf{P}\), where
  \(\tbf{X}' \in \mathbb{R}^{d_0 \times b}\) represents the input matrix of the forged
  mini-batch \(B'\) and \(\tbf{X}\) is the input matrix of mini-batch
  \(B\), where the \(k\)-th column of \(\tbf{X}\) and \(\tbf{X}'\)
  denote \(\tbf{x}^{(k)}\) and
  \(\tbf{x}'^{(k)} := \tbf{x}^{(k)} + [\tbf{P}]^k\)
  respectively. Considering the first equation
  \(\tbf{P}\tbf{D}_1 = \tbf{0}\), if \(\tbf{P}\) satisfies this
  equation, then
  \(b\nabla_{\tbf{W}_1} \mathcal{L}(B') = \tbf{X}'\tbf{D}_1 = \tbf{X}\tbf{D}_1 +
  \tbf{P}\tbf{D}_1 = \tbf{X}\tbf{D}_1 = b\nabla_{\tbf{W}_1} \mathcal{L}(B).\)

  Thus satisfying the first equation ensures that the gradient for the
  first weight matrix \(\tbf{W}_1\) will match.

  Let \(\tbf{Z}'_1\) be the batched raw logit output of the first
  layer, i.e. \(\tbf{Z}'_1 := \tbf{W}_1^T\tbf{X}' + \tbf{B}_1\).  If
  \(\tbf{P}\) satisfies the second equation, then
  \( \tbf{Z}'_1 =\tbf{W}_1^T\tbf{X} + \tbf{W}_1^T\tbf{P} + \tbf{B}_1 =
  \tbf{W}_1^T\tbf{X} + \tbf{B}_1 = \tbf{Z}_1 \). Therefore,
  \(\tbf{A}'_1 = \tbf{A}_1\). Since \(\tbf{A}'_1 = \tbf{A}_1\), then
  \(\tbf{A}'_i = \tbf{A}_i \ \forall i \in \{1,2,\dots,L\}\). Satisfying the
  second equation ensures the activations for the first layer as well
  as all subsequent layers will be the same as that of the original
  mini-batch. If we use the same label i.e., each
  \(\tbf{y}'^{(k)} = \tbf{y}^{(k)}\), then, for the \(k\)-th training
  example in the forged mini-batch, we have that by
  \(\hat{\tbf{y}}'^{(k)} = \hat{\tbf{y}}^{(k)}\), and consequently
  \(\bdelta'^{(k)}_L = \bdelta^{(k)}_L\). Therefore
  \(\tbf{D}'_L = \tbf{D}_L\), and by \eqref{eq:6} we can show
  inductively that
  \(\tbf{D}'_i = \tbf{D}_i, \ \forall i \in \{1,2,...,L\}\). We have shown the
  base case \(i=L\). To show for \( i= L-1\), we first recognise that
  if \(\forall i\), \(A'_i = A_i\), then \(H'_i = H_i, \ \forall i\). By
  \eqref{eq:6} we then have
  \(\tbf{D}'_i = (\tbf{D}'_{i+1}\tbf{W}_{i+1}^T) \odot H'_i =
  (\tbf{D}_{i+1}\tbf{W}_{i+1}^T) \odot H_i = \tbf{D}_i \).  As a result,
  \(\nabla_{\tbf{b}_i} \mathcal{L}(B') = \nabla_{\tbf{b}_i} \mathcal{L}(B), \ \forall i \in
  \{1,2,\dots,L\}\). Regarding the gradients for the weight matrices
  \(\tbf{W}_i\) we then have that
  \( b\nabla_{\tbf{W}_i} \mathcal{L}(B')= \tbf{A}'_{i-1}\tbf{D}'_i =
  \tbf{A}_{i-1}\tbf{D}_i = b\nabla_{\tbf{W}_i} \mathcal{L}(B)\).  As a result
  \(\nabla_{\tbf{W}_i} \mathcal{L}(B') = \nabla_{\tbf{W}_i} \mathcal{L}(B), \ \forall i \in
  \{1,2,\dots,L\}\). Therefore, if the perturbation matrix \(\tbf{P}\)
  satifies the 2 matrix equations given in \eqref{eq:2}, then \(B\)
  and \(B'\) will produce the same gradient.

  In order to construct \(\tbf{P}\), we first observe that we can
  write the equations in \eqref{eq:2} as
  \( \tbf{I}\tbf{P}\tbf{D}_1 = \tbf{0}\) and
  \(\tbf{W}_1^T\tbf{P}\tbf{I} = \tbf{0}\), where \(\tbf{I}\) is the
  identity matrix of appropriate shape. We also have, by the ``vec
  trick'', that for any matrices \(A, X, B\) and \(C\),
  \(AXB = C \iff (B^T \otimes A)\vvec(X) = \vvec(C)\), where
  \(\otimes\) is the Kronecker product, and \(\vvec(\cdot)\) returns the vector
  after stacking all the columns of the input matrix vertically. We
  can rewrite our two equations as the following two linear systems:

  \begin{equation*}
    \begin{cases}
      (\tbf{D}_1^T \otimes \tbf{I})\vvec(\tbf{P}) = \vvec(\tbf{0}) \\
      (\tbf{I}^T \otimes \tbf{W}_1^T)\vvec(\tbf{P}) = \vvec(\tbf{0})
    \end{cases}
  \end{equation*}

  The first coefficient matrix
  \((\tbf{D}_1^T \otimes \tbf{I}) \in \mathbb{R}^{d_1d_0 \times d_0b}\), and the second
  coefficient matrix
  \((\tbf{I}^T \otimes \tbf{W}_1^T) \in \mathbb{R}^{d_1b \times d_0b}\), so we can combine
  them into the single linear system by stacking them vertically:

  \begin{equation}
    \begin{bmatrix}
      \tbf{D}_1^T \otimes \tbf{I}\\
      \tbf{I}^T \otimes \tbf{W}_1^T
    \end{bmatrix} \vvec(\tbf{P}) = \vvec(\tbf{0})
  \end{equation}

  Let \(\tbf{K} \in \mathbb{R}^{(d_1d_0 + d_1b) \times d_0b}\) be the above coefficient
  matrix, then the general solution is given by:

  \begin{equation}
    \vvec(\tbf{P}) = (\tbf{I} - \tbf{K}^+\tbf{K})\tbf{F},
  \end{equation}

  where \(\tbf{F} \in \mathbb{R}^{d_0b}\) is an arbitrary matrix, and
  \(\tbf{K}^+\) is the psuedo-inverse of \(\tbf{K}\).

  For the homogenous linear system
  \(\tbf{K}\vvec(\tbf{P}) = \vvec(\tbf{0})\), there are
  \(d_0b - \rank(\tbf{K})\) linearly independent solutions. If
  \(\tbf{K}\) is full rank and square, there are no non-trivial
  solutions. In order to ensure at least 1 set of linearly dependent
  solutions (e.g. one free variable) \(d_0b > d_1d_0 + d_1b\) must
  hold. This ensures that \(d_0b - \rank(\tbf{K}) \ge 1\).
\end{proof}

\subsection{\tbf{Proof of Proposition \ref{thm:data-forging-with-2}}}
\label{sec:tbfpr-prop-refthm:d}
% Before going onto the full proof, we introduce the following lemma.

% \begin{lem}\label{lem:dldz0}
%   For any training example \((\tbf{x}, \tbf{y})\),
%   \(\sum_{j=1}^n \frac{\del \ell(\tbf{x}, \tbf{y})}{\del z_j} = 0\).
% \end{lem}

% \begin{proof}
%   \begin{equation}
%       \begin{aligned}
%        \sum_{j=1}^n \frac{\del \ell}{\del z_j} &= \sum_{j=1}^n \Biggl(-y_j + s_j \sum_{k=1}^n y_k\Biggr)\\
%                       &=  -\sum_{j=1}^n y_j + \sum_{j=1}^ns_j \cdot \sum_{k=1}^n y_k\\
%                       &= -\sum_{j=1}^n y_j + \sum_{k=1}^n y_k\\
%                       &= 0.
%       \end{aligned}
%     \end{equation}
  
% \end{proof}

% \tbf{Full Proof.} We can now prove Proposition
% \ref{thm:data-forging-with-2}.

\begin{proof}
  Let \(\tbf{G} = \nabla_{\tbf{W}} \mathcal{L}(B;\tbf{W})\). Finding a batch
  \(B' = \{(\tbf{x}'^{(1)}, \tbf{y}'^{(1)}), (\tbf{x}'^{(2)},
  \tbf{y}'^{(2)}), ..., (\tbf{x}'^{(b)}, \tbf{y}'^{(b)})\}\) with
  \(b \ge \mbox{rank}(\tbf{G})\), such that
  \(\nabla_{\tbf{W}} \mathcal{L}(B';\tbf{W}) = \tbf{G}\) requires that for every
  \((i,j) \in [d] \times [n]\),

  \begin{equation}
    \frac{\del \ell'^{(1)}}{\del z'^{(1)}_j}x'^{(1)}_i + \frac{\del \ell'^{(2)}}{\del z'^{(2)}_j}x'^{(2)}_i + ... + \frac{\del \ell'^{(b)}}{\del z'^{(b)}_j}x'^{(b)}_i = b\tbf{G}_{ij},
  \end{equation}

  where each \(\ell'^{(k)} = \ell(\tbf{x}'^{(k)},\tbf{y}'^{(k)})\), and
  \(\tbf{z}'^{(k)} = \tbf{W}^T\tbf{x}'^{(k)}\).  We can restate the
  problem as finding matrices \(\tbf{X}' \in \mathbb{R}^{d \times b}\) and
  \(\tbf{D} \in \mathbb{R}^{b \times n}\) such that

  \begin{equation}
    \tbf{X}'\tbf{D} = b\tbf{G},
  \end{equation}

  % Since \(b \ge \mbox{rank}(\tbf{G})\) then, via rank factorisation,
  % there exists matrices \( \tbf{X} \in \mathbb{R}^{m \times b}\) and
  % \(\tbf{C} \in \mathbb{R}^{b \times n}\) such that \(\tbf{X}\tbf{C} = b\tbf{G}\).
  The \(k\)-th column of \(\tbf{X}'\) represents \(\tbf{x}'^{(k)}\),
  and the \(k\)-th row of \(\tbf{D}\) represents
  \(\frac{\del \ell'^{(k)}}{\del \tbf{z}'^{(k)}}\). From Lemma \ref{lem:sumzero},
  the elements of every row of \(\tbf{D}\) must sum to zero.
  Construct a matrix \(\tbf{D}\) such that
  \(\mbox{rank}(\tbf{D}) = \mbox{rank}(\tbf{G})\), and the elements of
  every row of \(\tbf{D}\) sum to zero. Finding the corresponding
  \(\tbf{X}'\) amounts to solving the linear system

  \begin{equation}
    \tbf{D}^T\tbf{x}'_i = b\tbf{g}_i
  \end{equation}

  where \(\tbf{x}'_i\) and \(\tbf{g}_i\) are the transposed \(i\)-th
  row of \(\tbf{X}'\) and \(\tbf{G}\) respectively. For each \(i\), we
  know that
  \(\mbox{rank}(\tbf{D}^T) = \mbox{rank}(\tbf{D}^T | b\tbf{g}_i)\),
  therefore we can be certain that at least one solution exists.

  Finally, the batch of examples \(B'\) whose gradient
  \(\nabla_{\tbf{W}} \mathcal{L}(B'; \tbf{W}) = \tbf{G}\) can constructed from the matrices
  \(\tbf{X}'\) and \(\tbf{D}\). For every \(k \in [b]\),
  \(\tbf{x}'^{(k)}\) is the given by the \(k\)-th column of
  \(\tbf{X}'\), and as shown in Lemma \ref{lem:sumzero}, each
  \(\tbf{y}'^{(k)} = v^{(k)}\hat{\tbf{y}}^{(k)} - [\tbf{D}]^T_k\), for any
  constant \(v^{(k)} \in \mathbb{R}\), and \([\tbf{D}]_k\) returns the
  \(k\)-th row of \(\tbf{D}\).
\end{proof}

\begin{table*}[t]
  \centering
  \begin{tabular}{||c||c|c|c|c|c||}
    \hline
    &\(b=1\)&\(b=2\)&\(b=3\)&\(b=4\)&\(b=5\) \\
    \hline
    \(v = 2\)&\xmark (48)&\xmark (1128)&\xmark (17,296)&\xmark (194,580)&\xmark (1,712,304)\\
    \(v = 3\)&\xmark (243)& \xmark (29,403)& \xmark(2,362,041)& \qmark(141,722,460)&\qmark(\(6.77\times10^{9}\)) \\
    \(v=4\)&\xmark(768)&\xmark(294,528)&\qmark(75,202,816)&\qmark(\(1.43\times10^{10}\))&\qmark(\(2.2\times10^{12}\))\\
    \(v=5\)&\xmark(1,875)&\xmark(1,756,875)&\qmark(\(1.1\times10^9\))&\qmark(\(5.13\times10^{11}\))&\qmark(\(1.92\times10^{14}\))\\
    \(v=6\)&\xmark(3,888)&\xmark(7,556,328)&\qmark(\(9.79\times10^9\))&\qmark(\(9.51\times10^{12}\))&\qmark(\(7.38\times10^{15}\))\\
    \hline
  \end{tabular}
  \caption{For increasingly large values of \(v\) and \(b\) (\(d\) and
    \(n\) are fixed to \(d=4\) and \(n=3\)), we report, after
    searching through all the possible mini-batches, whose total
    number is given in brackets, whether a forged mini-batch that
    produced the same gradient was found. In every case, we did not
    find one after an exhaustive search. We give a \qmark \ in those
    setting where we did not conduct a search, due to the large number
    of possible mini-batches.}
  \label{tab:bf-forging}
\end{table*}

\section{\tbf{Accuracy Measurements}}
\label{sec:tbfacc-meas}
\begin{table*}
  \centering
  \begin{tabular}{| l | c | c| c | c | c ||}
    \hline
    Model/Dataset& \(b = 50\)& \(b=100\)&\(b=500\)&\(b=1000\)&\(b=2000\)\\
    \hline
    LeNet/MNIST   & \(0.94 \pm 2.4\mathrm{e}-4\) & \(0.94 \pm 1.6\mathrm{e}-4\) & \(0.95 \pm 4.7\mathrm{e}-5\) & \(0.95 \pm 0\) & \(0.95 \pm 8.1\mathrm{e}-5\) \\
    VGGmini/CIFAR10     & \(0.63 \pm 3.3\mathrm{e}-4\) & \(0.62 \pm 2.4\mathrm{e}-4\) & \(0.64 \pm 1.4\mathrm{e}-4\) & \(0.64 \pm 1.9\mathrm{e}-4\) & \(0.55 \pm 4.5\mathrm{e}-4\) \\
    FCN/Adult     & \(0.75 \pm 0\) & \(0.75 \pm 0\) & \(0.75 \pm 0\) & \(0.75 \pm 0\) & \(0.75 \pm 0\)\\
    Transformer/IMDB    & \(0.87 \pm 1.3\mathrm{e}-3\) & \(0.85 \pm 1.9\mathrm{e}-3\) & \(0.83 \pm 9.5\mathrm{e}-3\) & \(0.84 \pm 2.4\mathrm{e}-03\) & \(0.84 \pm 1.9\mathrm{e}-3\)\\
    \hline
  \end{tabular}
  \caption{Observed variance in accuracy from GPU auto-tuning non determinism}
  \label{tab:acc}
\end{table*}

\section{\tbf{Brute Force Forging}}
\label{sec:tbfbr-force-forg}

Real images consist of \(256\) possible pixel values. Given an image
consisting of just \(2\) pixels (with a single channel), \(3\)
possible classes, and a batch size of \(2\), this would result in
\({256^2 \times 3 \choose 2} \approx 1.9\times10^{10}\) mini-batches. Computing the
gradient of and comparing that many mini-batches is not practical; so
an evaluation of the brute force approach is not possible when we
allow 256 possible values. However, when we consider much smaller
values for \(v\), the number of allowed values, namely just \(2\) and
\(3\), the task becomes much more feasible. In other words, when
\(v=2\), pixels may have only the values \(0\) or \(255\) i.e., on or
off. For \(v=3\), pixels may take the values, \(0, 127,\) or \(255\)
i.e., 3 possible values. This drastically reduces the number of
mini-batches that we need to search through. 

For a given setup i.e., values of \(d, b, v,\) and \(n\), we run the
following experiment:

\begin{enumerate}
\item Sample a true minibatch \(B\) from the possible
  \({v^d \times n \choose b}\) number of mini-batches.
\item Calculate it's gradient \(\nabla_{\theta} \mathcal{L}(B)\).
\item Search through all \({v^d \times n \choose b} - 1\) other
  mini-batches for a distinct mini-batch that produces the same
  gradient.
\end{enumerate}

We run this experiment on a logistic regression model, as well as a
\(L=2\) fully connected neural network with \(10\) hidden units and
ReLU activation function We additionally fix \(d=4\) and
\(n=3\). Table \ref{tab:bf-forging} summarises our results.

\section{\tbf{Comparison of Data Forging Attacks}}
\label{sec:tbfc-data-forg} 
In Figure \ref{fig:forging-frac-thudi-zhang}, we plot the performance
of both Thudi et al. \cite{thudi2022necessity} and Zhang et
al.~\cite{zhangverification}'s data forging attacks, for both
LeNet/MNIST and VGGmini/CIFAR10 execution traces.
\begin{figure}[!htb]
  \centering
  \begin{subfigure}[t]{0.23\textwidth}
    \centering
    \includegraphics[width=\textwidth]{ ./images/final_v2/lenet-mnist-forging-frac_thudi.pdf }
    \caption{Thudi on LeNet/MNIST}
  \end{subfigure}
  ~
  \begin{subfigure}[t]{0.23\textwidth}
    \centering
    \includegraphics[width=\textwidth]{ ./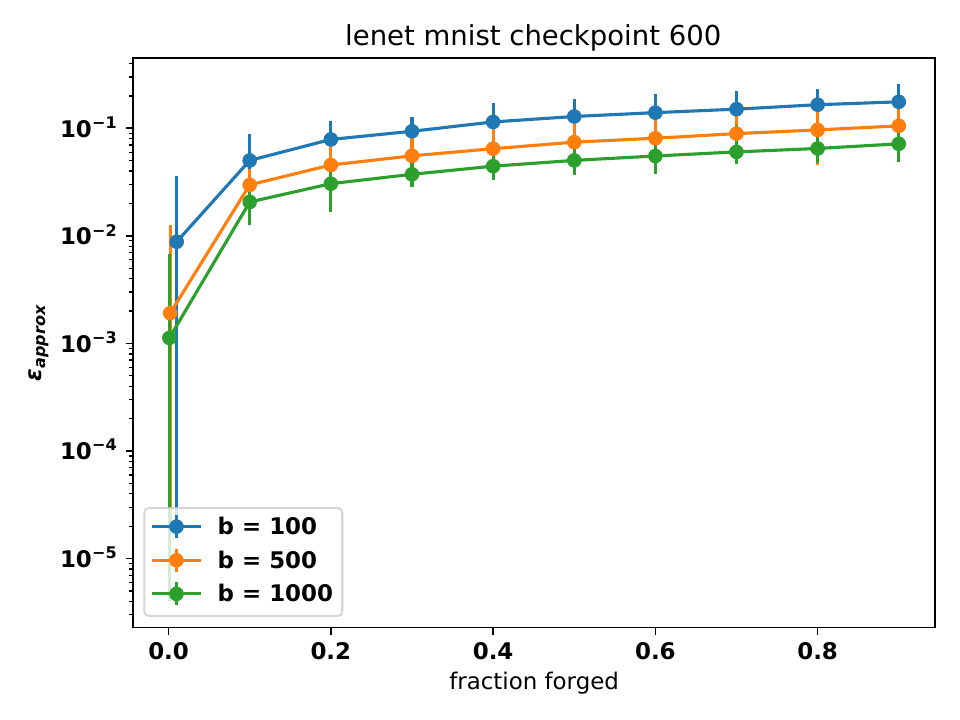 }
    \caption{Zhang on LeNet/MNIST}
  \end{subfigure}

  \begin{subfigure}[t]{0.23\textwidth}
    \centering
    \includegraphics[width=\textwidth]{ ./images/final_v2/vgg-cifar-forging-frac_thudi.pdf }
    \caption{Thudi onVGGmini/CIFAR10}
  \end{subfigure}
  ~
  \begin{subfigure}[t]{0.23\textwidth}
    \centering
    \includegraphics[width=\textwidth]{ ./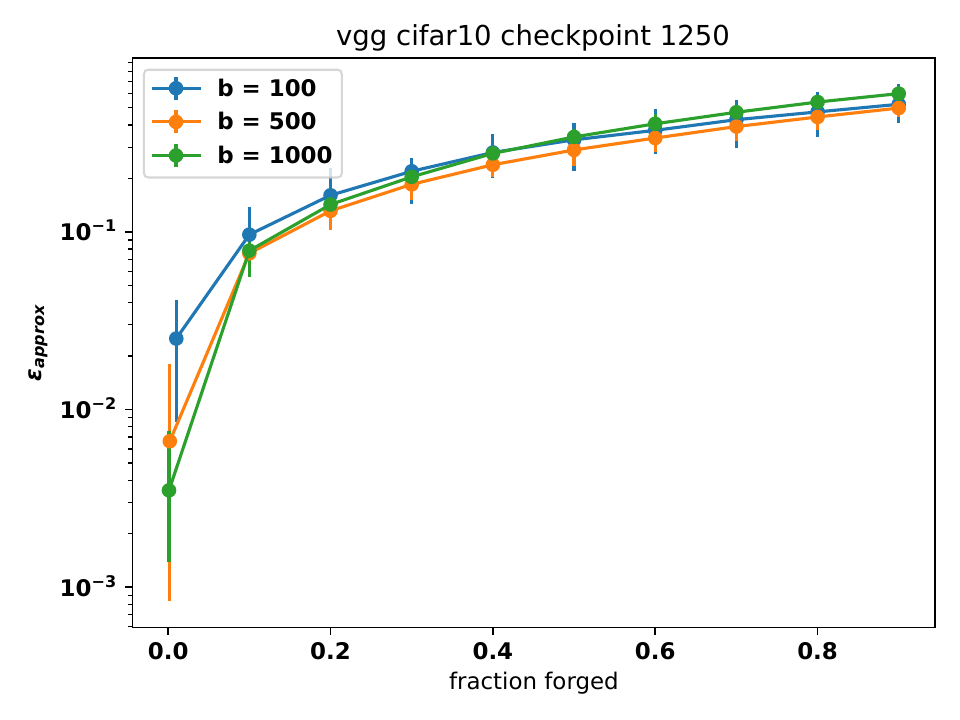 }
    \caption{Zhang on VGGmini/CIFAR10}
  \end{subfigure}
  \caption{We plot the average approximation error of both Thudi et
    al.'s~\cite{thudi2022necessity} and Zhang et al.'s~\cite{zhangverification} attack (over several runs), as well as
    the min and max observed approximation error for different batch
    sizes and increasing forging fractions. We see both produce
    similar performance.}
  \label{fig:forging-frac-thudi-zhang}
\end{figure}

% \subsection{\tbf{Reproduction Error Details}}

\end{document}